\definecolor{seabornBrown}{RGB}{134, 86, 75}
\definecolor{seabornPink}{RGB}{255, 119, 193}
\definecolor{seabornPinkDark}{RGB}{152, 34, 116}
\newcolumntype{Y}{>{\centering\arraybackslash}X}
\DeclareRobustCommand{\rchi}{{\mathpalette\irchi\relax}}
\newcommand{\irchi}[2]{\raisebox{\depth}{$#1\chi$}} 
\newtcolorbox{mybox}{
enhanced,
boxrule=0pt,frame hidden,
borderline west={4pt}{0pt}{green!50!black},
colback=green!30!gray!15,
sharp corners,
parbox=false
}
\def\mathcolor#1#{\@mathcolor{#1}}
\def\@mathcolor#1#2#3{%
  \protect\leavevmode
  \begingroup\color#1{#2}#3\endgroup
}
\newcommand{\adjustedaccent}[1]{%
  \mathchoice{}{}
    {\mbox{\raisebox{-.75ex}[0pt][0pt]{$\scriptscriptstyle#1$}}}
    {\mbox{\raisebox{-.55ex}[0pt][0pt]{\scalebox{.8}{$\scriptscriptstyle#1$}}}}
}
\newcommand\smileacc[1]{\overset{\adjustedaccent{\smile}}{#1}}
\DeclareRobustCommand{\ctau}{{
  \mathpalette\cap@greek\tau
}}
\DeclareRobustCommand{\csigma}{{
  \mathpalette\cap@greek\sigma
}}
\newcommand{\cap@greek}[2]{%
  \begingroup
  \sbox\z@{$#1t$}
  \resizebox{!}{\ht\z@}{$\m@th#1#2$}
  \endgroup
}
\newcommand{\colort}{\mathcolor{VioletRed}{\mathrm{t}}}
\newcommand{\colortsetofT}{\lBrace \colort \rBrace}
\newcommand{\colortsetofTone}{\lBrace \colort + 1 \rBrace}
\newcommand{\colortau}{\mathcolor{orange}{\ctau}}
\newcommand{\colortausetoft}{\lBrace \colortau \rBrace}
\newcommand{\colortausetofT}{\{\hspace{-0.5ex}\lBrace \colortau \rBrace\hspace{-0.5ex}\}}
\newcommand{\colorT}{\mathcolor{red}{T}}
\newcommand{\colorTbar}{\mathcolor{red}{\smash{\smileacc{T}}}}
\newcommand{\colork}{\mathcolor{purple}{k}}
\newcommand{\colors}{\mathcolor{blue}{\hat{\mathrm{s}}}}
\newcommand{\colorS}{\mathcolor{blue}{S}}  
\newcommand{\colorK}{\mathcolor{purple}{\mathrm{K}}}
\newcommand{\colorh}{\mathcolor{violet}{h}}
\newcommand{\colorH}{\mathcolor{violet}{\mathrm{H}}}
\newcommand{\colorHcal}{\mathcolor{violet}{\mathcal{H}}}
\newcommand{\colorL}{\mathcolor{purple}{\mathrm{L}}}
\newcommand{\colorB}{\mathcolor{olive}{\mathcal{B}}}
\newcommand{\colorBnot}{\mathcolor{olive}{\cancel{\mathcal{B}}}}
\newcommand{\colorg}{\mathcolor{teal}{g}}
\newcommand{\colorG}{\mathcolor{teal}{G}}
\let\mycheckmark\checkmark
\renewcommand{\checkmark}{\textcolor{ForestGreen}{\mycheckmark}}
\newcommand{\nullval}{\texttt{null}}
\theoremstyle{definition}
\newtheorem{proofpart}{Part}
\newtheorem{theorem}{Theorem}[section]
\newtheorem{lemma}{Lemma}[section]
\newtheorem{sublemma}{Sublemma}[lemma]
\newtheorem{corollary}{Corollary}[lemma]
\newcommand*{\colorboxed}{}
\def\colorboxed#1#{%
  \colorboxedAux{#1}%
}
\newcommand*{\colorboxedAux}[3]{%
  \begingroup
    \setlength\fboxrule{1pt}
    \colorlet{cb@saved}{.}%
    \color#1{#2}%
    \boxed{%
      \color{cb@saved}%
      #3%
    }%
  \endgroup
}
\renewcommand{\boxed}[1]{\text{\fboxsep=.2em\fbox{\m@th$\displaystyle#1$}}}
\renewcommand{\stackrel}[2]{%
  \mathrel{\smash{\vbox{\offinterlineskip\ialign{%
    \hfil##\hfil\cr
    $\scriptscriptstyle#1$\cr
    $#2$\cr
}}}}}
\newcommand{\hv}{h.v.\hphantom{}}
\newcommand{\twodots}{\mathinner {\ldotp \ldotp}}
\let\mythealgorithm\thealgorithm
\newlength{\comment@width}
\renewcommand{\Comment}[1]{%
  \sbox0{#1}
  \ifdim\wd0>\comment@width
    \setlength{\comment@width}{\wd0}%
  \fi
  \ifcsname comment@\arabic{algorithm}@width\endcsname
    \algorithmiccomment{\makebox[\csname comment@\mythealgorithm @width\endcsname][l]{#1}}%
  \else
    \algorithmiccomment{#1}%
  \fi
}
\write\@auxout{%
    \string\algcommentwidth{\mythealgorithm}{\the\comment@width}%
  }%
\newcommand{\algcommentwidth}[2]{%
  \global\@namedef{comment@#1@width}{#2}%
}
\def\algbackskip{\hskip-\ALG@thistlm}
\definecolor{darkred}{HTML}{E32B60}
\newcommand{\code}[1]{\mbox{%
    \ttfamily
    \color{darkred}
    \tcbox[
        on line,
        boxsep=0pt, left=4pt, right=4pt, top=2pt, bottom=1.5pt,
        toprule=0pt, rightrule=0pt, bottomrule=0pt, leftrule=0pt,
        oversize=0pt, enlarge left by=0pt, enlarge right by=0pt,
        colframe=white, colback=black!12,
        height=.8\baselineskip
    ]{\color{darkred}\detokenize{#1}}%
}}
\definecolor{codegreen}{rgb}{0,0.6,0}
\definecolor{codegray}{rgb}{0.5,0.5,0.5}
\definecolor{codepurple}{rgb}{0.58,0,0.82}
\definecolor{backcolour}{rgb}{0.95,0.95,0.92}
\lstdefinestyle{mystyle}{
    backgroundcolor=\color{backcolour},
    commentstyle=\color{codegreen},
    keywordstyle=\color{magenta},
    numberstyle=\tiny\color{codegray},
    stringstyle=\color{codepurple},
    basicstyle=\ttfamily\footnotesize,
    breakatwhitespace=false,
    breaklines=true,
    captionpos=t,
    keepspaces=true,
    numbers=left,
    numbersep=5pt,
    showspaces=false,
    showstringspaces=false,
    showtabs=false,
    tabsize=2
}
\let\pragma@iinput=\@iinput
\def\@iinput#1{\xdef\@pragmafile{#1}\pragma@iinput{#1} }
\def\@pragmafile{default}
\def\pragmaonce{%
   \csname pragma@\@pragmafile\endcsname
   \global\expandafter\let \csname pragma@\@pragmafile\endcsname =  
}
\begin{document}

\title{ Structured Downsampling for Fast, Memory-efficient Curation of Online Data Streams }

\author[1,2,3,4]{Matthew Andres Moreno\orcidlink{0000-0003-4726-4479}\thanks{Corresponding author: \texttt{morenoma@umich.edu}}}
\author[1,2,4]{Luis Zaman\orcidlink{0000-0001-6838-7385}}
\author[5,6,7]{Emily Dolson\orcidlink{0000-0001-8616-4898}}

\affil[1]{Department of Ecology and Evolutionary Biology}
\affil[2]{Center for the Study of Complex Systems}
\affil[3]{Michigan Institute for Data and AI in Society}
\affil[4]{University of Michigan, Ann Arbor, United States}
\affil[5]{Department of Computer Science and Engineering}
\affil[6]{Program in Ecology, Evolution, and Behavior}
\affil[7]{Michigan State University, East Lansing, United States}
\maketitle

\begin{bibunit}

\begin{abstract}
Operations over data streams typically hinge on efficient mechanisms to aggregate or summarize history on a rolling basis.
For high-volume data steams, it is critical to manage state in a manner that is fast and memory efficient --- particularly in resource-constrained or real-time contexts.
Here, we address the problem of extracting a fixed-capacity, rolling subsample from a data stream.
Specifically, we explore ``data stream curation'' strategies to fulfill requirements on the composition of sample time points retained.
Our ``DStream'' suite of algorithms targets three temporal coverage criteria: (1) steady coverage, where retained samples should spread evenly across elapsed data stream history; (2) stretched coverage, where early data items should be proportionally favored; and (3) tilted coverage, where recent data items should be proportionally favored.
For each algorithm, we prove worst-case bounds on rolling coverage quality.
In contrast to previous work by Moreno, Rodriguez Papa, and Dolson (2024), which dynamically scales memory use to guarantee a specified level of coverage quality, here we focus on the more practical, application-driven case of maximizing coverage quality given a fixed memory capacity.
As a core simplifying assumption, we restrict algorithm design to a single update operation: writing from the data stream to a calculated buffer site --- with data never being read back, no metadata stored (e.g., sample timestamps), and data eviction occurring only implicitly via overwrite.
Drawing only on primitive, low-level operations and ensuring full, overhead-free use of available memory, this ``DStream'' framework ideally suits domains that are resource-constrained (e.g., embedded systems), performance-critical (e.g., real-time), and fine-grained (e.g., individual data items as small as single bits or bytes).
In particular, proposed power-of-two-based buffer layout schemes support $\mathcal{O}(1)$ data ingestion via concise bit-level operations.
To further practical applications, we provide plug-and-play open-source implementations targeting both scripted and compiled application domains.
\end{abstract}

\section{Introduction} \label{sec:introduction}

Efficient operations over data streams are critical in harnessing the ever-increasing volume and velocity of data generation.
Formally, a data stream is considered to be composed of a strictly-ordered sequence of read-once inputs.
Such streams' ordering may be dictated by inherently real-time processes (e.g., physical sensor inputs) or by access patterns for physical storage media (e.g., a tape archive) \citep{henzinger1998computing}.
They may also result from non-reversible computations (e.g., forward-time simulation) \citep{abdulla2004simulation,schutzel2014stream}.
Work with data streams assumes input greatly exceeds memory capacity, with streams often treated as unbounded \citep{jiang2006research}.
Indeed, real-world computing often requires real-time operations on a continuous, indefinite basis \citep{cordeiro2016online}.
Notable application domains involving data streams include sensor networks \citep{elnahrawy2003research}, distributed big-data processing \citep{he2010comet}, real-time network traffic analysis \citep{johnson2005streams,muthukrishnan2005data}, systems log management \citep{fischer2012real}, fraud monitoring \citep{rajeshwari2016real}, trading in financial markets \citep{agarwal2009faster}, environmental monitoring \citep{hill2009real}, and astronomy \citep{graham2012data}.

Here, we focus specifically on subsampling over data streams and introduce three $\mathcal{O}(1)$ ``DStream'' algorithms for space-efficient curation of data items:
\begin{enumerate*}
\item \textit{evenly covering} elapsed history (``\textit{steady}'' algorithm, Section \ref{sec:steady}),
\item \textit{skewed older} over elapsed history (``\textit{stretched}'' algorithm, Section \ref{sec:stretched}), or
\item \textit{skewed newer} over elapsed history (``\textit{tilted}'' algorithm, Section \ref{sec:tilted}).
\end{enumerate*}
Together, these algorithms support a variety of use cases differing in what data is prioritized.
Figure \ref{fig:criteria-intuition} compares steady, stretched, and tilted retention.

For each algorithm, we demonstrate worst-case bounds on error in curated collection composition.
We refer to this rolling subset problem as ``data stream curation,'' which we will define next.

\subsection{Stream Curation Problem}
\label{sec:stream-curation-problem}

Our work concerns online sampling of discrete data items from a one-dimensional data stream.
In selecting retained data items, we seek to ``curate'' a collection containing samples spanning the first items ingested from the data stream through the most recently ingested items \citep{moreno2024algorithms}.
The objective, ultimately, is to preserve a representative, approximate record of stream history.
We consider a retained collection's coverage over history solely in terms of the timepoints (i.e., sequence positions) of retained data items.
Note that we disregard data items' semantic values in this work, as they are immaterial under this timepoint-based framing.

We define three cost functions on the timepoints of discarded data items:
\begin{empheq}[left={\hspace{1.5in}\displaystyle \mathsf{cost}(\colorT) \coloneq \empheqlbrace}]{align}
  &\max_{\colorTbar \in [0\twodots\colorT)} \colorG_{\colorT}(\colorTbar) &&\text{for \textit{``steady''} curation,}  &&& ~ &&& ~ \label{eqn:steady-cost} \\
  &\max_{\colorTbar \in [1\twodots\colorT)} \frac{\colorG_{\colorT}(\colorTbar)}{\colorTbar} &&\text{for \textit{``stretched''} curation, and}  &&& ~ &&& ~ \label{eqn:stretched-cost} \\
  &\max_{\colorTbar \in [0\twodots\colorT - 1)} \frac{\colorG_{\colorT}(\colorTbar)}{\colorT - 1 - \colorTbar} &&\text{for \textit{``tilted''} curation,}  &&& ~ &&& ~ \label{eqn:tilted-cost}
\end{empheq}
where $\colorT$ is ``logical time'' (how many data items have been ingested), $\colorTbar$ is the timepoints of an ingested data item, and $\colorG_{\colorT}(\colorTbar)$ is ``gap size'' in the curated record around timepoint $\colorTbar$ at logical time $\colorT$.
Section \ref{sec:notation} provides a full introduction of notation, with formal definitions.
Analysis of these cost functions, including best-case lower bounds on cost, accompanies presentation of steady, stretched, and tilted algorithms targeting each in Sections \cref{sec:steady,sec:stretched,sec:tilted}.

Formally, our objective is to maintain cost function $\mathsf{cost}(\colorT)$ below an upper bound $\mathsf{bound}(\colorT)$ across logical time $\colorT$.
We specify $\mathsf{bound}(\colorT)$ on a per-algorithm basis,
We assume curation as an online process where new items are ingested on an ongoing basis, and a properly curated archive is needed at all times.
In practice, such fully online curation can be necessary when either (a) stream records are consulted frequently or (b) time point(s) for which stream records are needed are not known \textit{a priori} (i.e., query- or trigger-driven events).

\begin{figure}
\begin{minipage}[]{\textwidth}
    \noindent\fcolorbox{gray!3}{gray!3}{%
    \begin{minipage}[c][0.55in]{0.02\textwidth}
    \hspace{-0.5ex}%
    \rotatebox{90}{$\colorT = 100$}
    \end{minipage}}%
    \hspace{-1.5ex}%
    {\vrule width 1pt}%
    \noindent\fcolorbox{orange!4}{orange!4}{%
    \begin{minipage}[c][0.55in]{0.34\textwidth}
    \includegraphics[trim={0 1.2cm 0 0},clip,width=\linewidth]{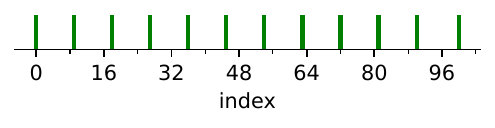}%
    \end{minipage}}%
    \hspace{-1ex}%
    {\vrule width 1pt}%
    \noindent\fcolorbox{pink!6}{pink!6}{%
    \begin{minipage}[c][0.55in]{0.31\textwidth}
    \includegraphics[trim={0 1.2cm 0 0},clip,width=\linewidth]{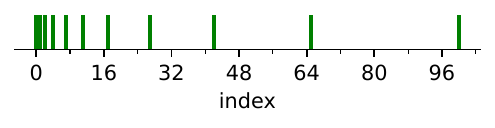}%
    \end{minipage}}%
    \hspace{-1ex}%
    {\vrule width 1pt}%
    \noindent\fcolorbox{teal!5}{teal!5}{%
    \begin{minipage}[c][0.55in]{0.31\textwidth}
    \includegraphics[trim={0 1.2cm 0 0},clip,width=\linewidth]{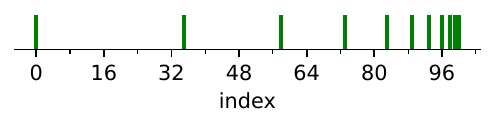}%
    \end{minipage}\hspace{-1ex}}%
    \end{minipage}\vspace{-1ex}

    \begin{minipage}[]{\textwidth}
    \noindent\fcolorbox{gray!11}{gray!11}{%
    \begin{minipage}[c][0.55in]{0.02\textwidth}
    \hspace{-0.5ex}%
    \rotatebox{90}{$\colorT = 50$}
    \end{minipage}}%
    \hspace{-1.5ex}%
    {\vrule width 1pt}%
    \noindent\fcolorbox{orange!10}{orange!10}{%
    \begin{minipage}[c][0.55in]{0.34\textwidth}
    \includegraphics[trim={0 0 0 0},clip,width=\linewidth]{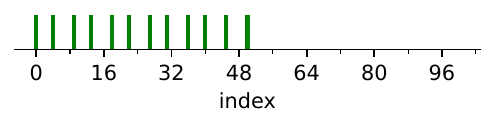}%
    \end{minipage}}%
    \hspace{-1ex}%
    {\vrule width 1pt}%
    \noindent\fcolorbox{pink!15}{pink!15}{%
    \begin{minipage}[c][0.55in]{0.31\textwidth}
    \includegraphics[trim={0 0 0 0},clip,width=\linewidth]{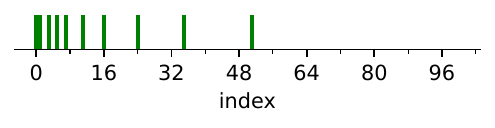}%
    \end{minipage}}%
    \hspace{-1ex}%
    {\vrule width 1pt}%
    \noindent\fcolorbox{teal!11}{teal!11}{%
    \begin{minipage}[c][0.55in]{0.31\textwidth}
    \includegraphics[trim={0 0 0 0},clip,width=\linewidth]{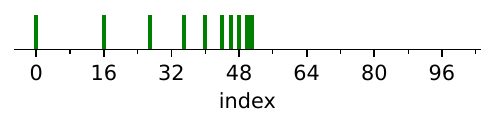}%
    \end{minipage}\hspace{-1ex}}%
    \end{minipage}\vspace{-1ex}

    \begin{minipage}[]{\textwidth}
    \noindent\fcolorbox{white!100}{white!100}{%
    \begin{minipage}[b][0.2cm]{0.02\textwidth}
    \hspace{-0.5ex}%
    \rotatebox{90}{$~$}
    \end{minipage}}%
    \hspace{-1.5ex}%
    {\vrule width 1pt}%
    \noindent\fcolorbox{orange!2}{orange!2}{%
    \begin{minipage}[]{0.34\textwidth}
    \hspace{-1ex}\begin{subfigure}[b]{\linewidth}
    \caption{steady criterion}
    \label{fig:criteria-intuition-steady}
    \end{subfigure}%
    \end{minipage}}%
    \hspace{-1ex}%
    {\vrule width 1pt}%
    \noindent\fcolorbox{pink!3}{pink!3}{%
    \begin{minipage}[]{0.31\textwidth}
    \hspace{-1ex}\begin{subfigure}[b]{\linewidth}
    \caption{stretched criterion}
    \label{fig:criteria-intuition-stretched}
    \end{subfigure}%
    \end{minipage}}%
    \hspace{-1ex}%
    {\vrule width 1pt}%
    \noindent\fcolorbox{teal!3}{teal!3}{%
    \begin{minipage}[]{0.31\textwidth}
    \hspace{-1ex}\begin{subfigure}[b]{\linewidth}
    \caption{tilted criterion}
    \label{fig:criteria-intuition-tilted}
    \end{subfigure}%
    \end{minipage}\hspace{-1ex}}%
    \end{minipage}
    \vspace{-0.15in}
    \caption{%
      \textbf{Surveyed target coverage criteria.}
      \footnotesize
      Ideal distributions of ingestion time points for retained data items under each criterion are shown at $\colorT=50$ (bottom) and $\colorT=100$ (top).
      Vertical bars represent a retained data item.
      In this illustration, collection size is 12 retained items.
      All other ingested data items have been discarded.
      The steady criterion (\ref{fig:criteria-intuition-steady}) seeks to minimize largest absolute gap size.
      So, ideal retention maintains items spread evenly across data stream history.
      The stretched criterion (\ref{fig:criteria-intuition-stretched}) calls for greater retention of early data items to minimize gap size proportional to data item ingestion time $\colorTbar$.
      In contrast, under the tilted criterion (\ref{fig:criteria-intuition-tilted}) recency-proportional gap size is to be minimized, necessitating over-retention of recent data items.
      }
    \label{fig:criteria-intuition}
\end{figure}



\subsection{Applications of Stream Curation}

Efficient stream curation operations benefit a variety of use cases requiring synopses of data stream history.
A straightforward application of stream curation is in unattended or sporadically uplinked sensor devices, which must record incoming observation streams on an indefinite or indeterminate basis, with limited memory capacity \citep{jain2022survey}.
In practice, however, even well-resourced centralized systems require thinning of full fidelity data --- raising the possibility of use cases in long-term telemetry and log management \citep{kent2006guide,miebach2002hubble}.
Checkpoint-rollback state might also be managed through stream curation in scenarios where the possibility of non-halting silent errors requires support for arbitrary rollback extents \citep{aupy2013combination}.
Extensions could be imagined to support more general aggregation and approximation operations over stream history besides sampling \citep{schoellhammer2024lightweight}, although we do not directly investigate these possibilities here.


Algorithms reported here stem from work on ``\textit{hereditary stratigraphy},'' a recently-developed technique for tracking of digital ancestry trees in highly-distributed systems --- for instance, in analysis of many-processor agent-based evolution simulations, content in decentralized social networks, peer-to-peer file sharing, or computer viruses \citep{moreno2022hereditary}.
Although beyond the scope of objectives here, we will briefly motivate this particular use case of stream curation.
Hereditary stratigraphy annotates surveilled artifacts with checkpoint data, which is extended by a new ``fingerprint'' value with each copy event.
Comparing two artifacts' accreted records reconstructs the duration of their common ancestry, with the first mismatched fingerprints signifying divergence from common descent.

This use case relies on stream curation to prevent unbounded growth of generational fingerprint records.
These records can be considered a data stream in that they accrue indefinitely, piece by piece.
Downsampling fingerprints saves memory, but introduces uncertainty in estimating the timing of lineage divergence.
For this reason, spacing of retained checkpoints across generational history is crucial to inference quality.
Minimizing per-item storage overhead is also critical to hereditary stratigraphy, with \citet{moreno2024guide} finding that single-bit checkpoint values maximize reconstruction quality (i.e., by allowing more fingerprints to be retained).
Both of these concerns are prioritized in present work.

\subsection{Prior Work}
\label{sec:prior-work}

Given the broad applicability of the data stream paradigm, many algorithms exist for analysis and summarization over sequenced input --- such as rolling summary statistic calculations \citep{lin2004continuously}, on-the-fly data clustering \citep{silva2013data}, live anomaly detection \citep{cai2004maids}, and rolling event frequency estimation \citep{manku2002approximate}.
Stream curation touches in particular on two broad paradigms data stream processing:
\begin{enumerate}
\item \textit{sampling}, where the data stream corpus is coarsened through extraction of exemplar data items \citep{sibai2016sampling}; and
\item \textit{binning/windowing}, where data stream content is aggregated (e.g., summarized, compressed, or sampled) with respect to discrete time spans over stream history \citep{gama2007data}.
\end{enumerate}

Although curated data items are, indeed, a sample of a data stream, work here is orthogonal to the question of $\ell_p$ sampling (e.g., $\ell_0$, $\ell_1$ sampling) in that our objective is to optimize for temporal balance rather than stochastic composition.
Indeed, well-established techniques exist to extract rolling $\ell_p$-representative samples over the distribution of data values from a stream, such as reservoir sampling, sketching, and hash-based methods \citep{gaber2005mining,muthukrishnan2005data,cormode2019lp}.
Note also that stream curation pertains to logical time rather than real time \citep{sibai2016sampling}, as retention objectives are organized vis-a-vis sequence index rather than clock time.

Owing to dimension reduction's fundamental role in supporting more advanced data stream operations, substantial work exists addressing the question of downsampling via temporal binning.
Notably, schemes for fixed-capacity steady (``equi-segmented'') and tilted (``vari-segmented'') retention appear in \citep{zhao2005generalized}, with the latter resembling additional ``pyramidal,'' ``logarithmic,'' and ``tilted'' time window schemes appearing elsewhere \citep{aggarwal2003framework,han2005stream,giannella2003mining,phithakkitnukoon2010recent}.
Although congruities exist in objectives and aspects of algorithm structure, no existing work prescribes non-iterative layout and update procedures that emphasize minimization of representational overhead (e.g., avoiding storage of timestamps, segment length values, etc.) --- as pursued here.
Work on ``amnesic approximation,'' a generalized scheme for downsampling satisfying an arbitrary temporal cost function, has related objectives but caters to a substantially more resource-intensive use case \citep{palpanas2004online}.


\citet{moreno2024algorithms} presented earlier stream curation techniques in service of hereditary stratigraphy.
Whereas that earlier work also focuses on minimizing the representational footprint around stored data, it caters better to variable-capacity storage, rather than fixed-capacity.
Although configurations oriented to fixed-capacity use cases targeted here are also explored in \citet{moreno2024algorithms}, they require a more expensive update process that keeps data in sorted order and can leave buffer capacity unused.
Indeed, head-to-head benchmark trials demonstrate improved execution speed (by an order of magnitude) and enhanced buffer space utilization under tilted retention \citep{moreno2024guide,moreno2024trackable}.

\subsection{Proposed Approach}


Our proposed DStream approach adopts a strong simplifying constraint: Once stored, we do not allow data items to be subsequently inspected or moved.
We assume a fixed number of buffer sites where items ingested from a data stream may be written.
The only further event that may occur after a data item is stored is being overwritten by a later data item.
We also allow ingested data items to be discarded without storage.
Under this regime, the composition of retained data emerges implicitly as a consequence of items targeted for overwrite.
Put another way, curation policy is exercised solely through ``\textit{site selection}'' when picking a buffer index for the $n$th received data item.

Note that this operational scheme supports particularly efficient storage of fine-grained data items, as it inherently forgoes overhead from explicit data labeling, timestamping, or other structure (e.g., pointers).
Instead, we require site selection to be computable \textit{a priori}.
As a further consequence, efficient attribution of data items' origin time hence requires support for efficient ``inverse'' decoding of a stored data item's origin time based solely on its buffer index and how many items have been ingested from the data stream.
We term this operation ``\textit{site lookup}.''
Figure \ref{fig:ingest-and-lookup} schematizes our ``site selection'' and ``site lookup'' operations.

\begin{figure*}[htbp!]
  \centering
  \begin{minipage}[]{\textwidth}
    \centering
    \begin{minipage}{0.36\textwidth}
    \includegraphics[width=\textwidth]{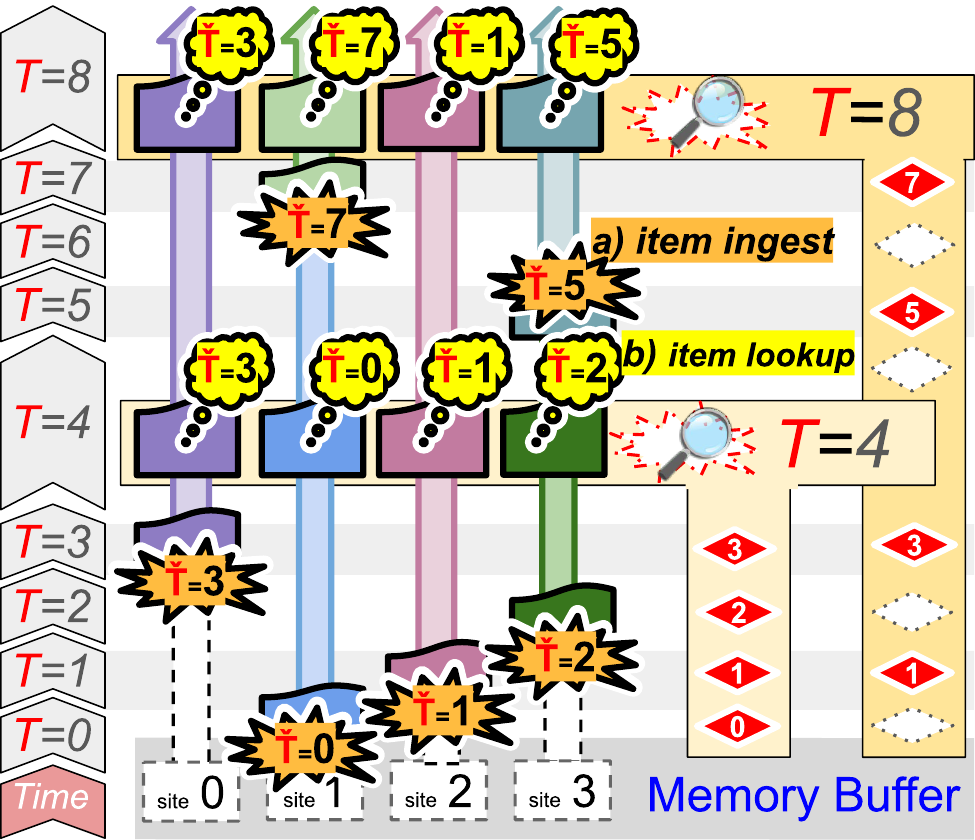}

    \begin{subfigure}{0in}
      \captionsetup{labelformat=empty}
      \caption{}\label{fig:surface-site-ingest}
    \end{subfigure}
    \begin{subfigure}{0in}
      \captionsetup{labelformat=empty}
      \caption{}\label{fig:ingest-rank-calculation}
    \end{subfigure}

  \end{minipage}%
    \hfill
    \begin{minipage}{0.63\textwidth}
      \caption{%
      \textbf{Core stream curation algorithm operations.}
      \footnotesize
      The ingest site selection operation (operation shown as \textit{item ``a''}) takes the current time $\colorT$ and determines the buffer site $\colork$ to store the ingested data item.
      Data items may also be discarded without storage, as are $\colorTbar=4$ and $\colorTbar=6$ in this example.
      This operation is performed when storing data into a curated buffer, once for each data item received from the data stream.
      Data is not moved after it is stored.
      The ingested time calculation operation (operation shown as \textit{item ``b''}) provides the previous time $\colorTbar$ when the data item present at buffer site $\colork$ was ingested, given the current time $\colorT$.
      This operation is performed when reading data from a curated buffer in order to identify the provenance of stored data.
      Note that which data item $\colorTbar$ occupies a buffer site $\colork$ at time $\colorT$ results solely from the sequence of ingest storage sites selected up to that point.
      As such, the site lookup operation $\colorL$ can be considered, in a loose sense, as ``decoding'' or ``inverse'' to the site selection operation $\colorK$.
      Panels with diamond markers on the right show curated collection composition at $\colorT=4$ and $\colorT=8$.
      Figure \ref{fig:criteria-intuition} shows the target curated collection compositions considered in this work.
      }
      \label{fig:ingest-and-lookup}
      \end{minipage}
  \end{minipage}
\end{figure*}

\subsection{Major Results}

This paper contributes three site selection algorithms for stream curation, with corresponding site lookup procedures.
These algorithms differ in the temporal composition of retained data items, targeting steady, stretched, and tilted distributions, respectively.
All three proposed algorithms support $\mathcal{O}(1)$ site selection.
Accompanying site lookup is $\mathcal{O}(\colorS)$ to decode all $\colorS$ buffer sites' ingest times.
We provide worst-case upper bounds on curation quality, with the steady algorithm notable in guaranteeing performance matching best case within a constant factor.

\section{Preliminaries, Notations, and Terminology} \label{sec:notation}

The core function of proposed algorithms is to dynamically filter out a bounded-size subset of incoming data that, according to a desired \textbf{coverage criterion}, manages the structure of gaps in history created by discarding items.
Incoming data is assumed to arrive on a rolling basis, as a \textbf{data stream} comprised of sequential \textbf{data items} $v_i$.
We assume the data stream to be ephemeral (i.e., ``read once''), and refer to the act of reading an item from the data stream as \textbf{ingesting} it.
As mentioned above, we term this scenario the \textbf{data stream curation problem}.

We consider data items according only to their logical sequence position.
We do not consider data items' actual semantic values or real-time arrival.
We assume data items to be fixed size and thus interchangeable in memory buffer slots.

The remainder of this section will proceed to overview key notations used throughout this work, summarized in Table \ref{tab:notation}.

\begin{table}[]
\begin{tabular}{lllll}
\hline
\textit{Description} & \textit{Type} & \textit{Notation} & \textit{Definition} & \textit{Domain} \\ \hline
\rowcolor{gray!20}
\multicolumn{5}{c}{\textbf{Space}} \\ \hline
Buffer Size & int & $\colorS$ & user-defined & $\in \{2^{\mathbb{N}}\}$ \\
Log Buffer Size & int & $\colors$ & $\log_2 \colorS$ & $\in \mathbb{N}$ \\
Buffer Site & int & $\colork$ & index position in buffer & $\in [0\twodots\colorS)$ \\ \hline
\rowcolor{gray!20}
\multicolumn{5}{c}{\textbf{Time}} \\ \hline
Current Time & int & $\colorT$ & num elapsed data item ingests & $\in$ \textsuperscript{\textdagger}$\mathbb{N}$ or \textsuperscript{\textdaggerdbl}$[0 \twodots 2^{\colorS - 1})$ \\
Data Item Ingest Time & int & $\colorTbar$ & num ingests preceding data item & $\in [0 \twodots \colorT)$ \\
Hanoi Value (\hv{}) of Time & int & $\colorh = \colorH(\colorT)$ & Formula \ref{eqn:hanoi-defn} & $\in$ \textsuperscript{\textdagger}$\mathbb{N}$ or \textsuperscript{\textdaggerdbl}$[0 \twodots \colorS)$ \\
Time Epoch & int & $\colort = \mathit{f}_{\!\Circled[fill color=VioletRed!70!lightgray,inner color=white,outer color=VioletRed!70!lightgray,inner xsep=1pt,inner ysep=1pt]{\mathrm{t}}}(\colorT)$ & Formula \ref{eqn:epoch-defn} & $\in$ \textsuperscript{\textdagger}$\mathbb{N}$ or \textsuperscript{\textdaggerdbl}$[0 \twodots \colorS - \colors)$ \\
& set & $\colortsetofT$ & $\{\colorT^{\prime} \in \mathbb{N} : \mathit{f}_{\!\Circled[fill color=VioletRed!70!lightgray,inner color=white,outer color=VioletRed!70!lightgray,inner xsep=1pt,inner ysep=1pt]{\mathrm{t}}}(\colorT^{\prime}) = \colort \}$ & $\subseteq [\colorT' \twodots \colorT' + n]$  \\
Time Meta-epoch & int & $\colortau = \mathit{f}_{\!\Circled[fill color=orange!70!lightgray,inner color=white,outer color=orange!70!lightgray,inner xsep=1pt,inner ysep=1pt]{\tau}}(\colort)$ & Formula \ref{eqn:meta-epoch-defn} & $\in$ \textsuperscript{\textdagger}$\mathbb{N}$ or \textsuperscript{\textdaggerdbl}$[0 \twodots \colors)$ \\
& set & $\colortausetoft$ & $\{\colort^{\prime} \in \mathbb{N} : \mathit{f}_{\!\Circled[fill color=orange!70!lightgray,inner color=white,outer color=orange!70!lightgray,inner xsep=1pt,inner ysep=1pt]{\tau}}(\colort^{\prime}) = \colortau \}$ & $\subseteq [\colort' \twodots \colort' + n]$ \\
& set & $\colortausetofT$ & $\{\colorT^{\prime} \in \mathbb{N} : \mathit{f}_{\!\Circled[fill color=orange!70!lightgray,inner color=white,outer color=orange!70!lightgray,inner xsep=1pt,inner ysep=1pt]{\tau}} \circ \mathit{f}_{\!\Circled[fill color=VioletRed!70!lightgray,inner color=white,outer color=VioletRed!70!lightgray,inner xsep=1pt,inner ysep=1pt]{\mathrm{t}}}(\colorT^{\prime}) = \colortau \}$ & $\subseteq [\colorT' \twodots \colorT' + n]$ \\ \hline
\rowcolor{gray!20}
\multicolumn{5}{c}{\textbf{Layout}} \\ \hline
Hanoi Value Reserved at Site & int & $\colorh = \colorHcal_{\colort}(\colork)$ & algorithm-defined & $\in$ \textsuperscript{\textdagger}$\mathbb{N}$ or \textsuperscript{\textdaggerdbl}$[0 \twodots \colorS)$ \\
Storage Site Selected for Data Item & int & $\colorK(\colorT)$ & algorithm-defined & $\in [0 \twodots \colorS) \cup \{\nullval\}$ \\
Ingest Times of Stored Data Items by Site & seq & $\colorL(\colorT)$ & {\footnotesize $\max\{\colorTbar \in [0 \twodots \colorT) : \colorK(\colorTbar) = \colork\}$ for $\colork \in [0 \twodots \colorS)$} & $\in [0 \twodots \colorT) \cup \{\nullval\}$ \\
Initial Reservation Segment Size & int & $r$ & \textsuperscript{\textdagger}N/A or \textsuperscript{\textdaggerdbl}Formula \ref{eqn:stretched-segment-sizes} & \textsuperscript{\textdagger}N/A or \textsuperscript{\textdaggerdbl}$\in [1 \twodots \colors]$ \\
Mature Reservation Segment Size & int & $R(r)$ & \textsuperscript{\textdagger}N/A or \textsuperscript{\textdaggerdbl} $2^{r} - 1$ (Lemma \ref{thm:stretched-meta-epoch}) & \textsuperscript{\textdagger}N/A or \textsuperscript{\textdaggerdbl}$\in [1 \twodots \colorS]$ \\ \hline
\rowcolor{gray!20}
\multicolumn{5}{c}{\textbf{Curation Quality}} \\ \hline
Retained Data Items & set & $\colorB_{\colorT}$ & algorithm-consequent & $\subseteq [0\twodots\colorT)$ \\
Discarded Data Items & set & $\colorBnot_{\colorT}$ & $[0\twodots\colorT) \setminus \colorB_{\colorT}$ & $\subseteq [0\twodots\colorT)$ \\
Gap Size in Curated Collection & int & $\colorg = \colorG_{\colorT}(\colorTbar)$ & Formula \ref{eqn:gap-size-defn} & $\in [0 \twodots \colorT)$
\end{tabular}
\centering
\caption{
Summary of notation used.
\textsuperscript{\textdagger}steady algorithm; \textsuperscript{\textdaggerdbl}stretched and tilted algorithms.
}
\label{tab:notation}
\end{table}

\subsection{Buffer Storage $\colorS$}
\label{sec:notation-buffer}

We assume a fixed \textit{number of available buffer sites}, sufficient to store $\colorS$ data items.%
\footnote{%
In associated materials, the fixed-size buffer used to store curated data items is referred to as a ``surface.''
Space-efficient solutions for the stream curation problem under extensible memory capacity have been considered in other work \citep{moreno2024algorithms}.%
}
Proposed algorithms require buffer size $\colorS$ as an even power of two, larger than 4. That is, $\colorS = 2^{\colors}$ for some integer $\colors \in \mathbb{N}_{\geq 2}$.
On occasion, it will become necessary to refer to a specific buffer position $\colork$.
We will take a zero-indexing convention, so $\colork \in [0\twodots\colorS)$.

We consider only one update operation on the buffer: storage of an ingested data item at a buffer site $\colork$.
Under this scheme, control of what data is retained and for how long occurs solely as a consequence of \textit{ingestion site selection} --- picking where (and if) to store incoming data items.
Let $\colorK(\colorT) \in [0\twodots\colorS)\cup\{\nullval\}$ denote the site selection operation to place data item $\colorT$ --- with $\nullval$ denoting a data item dropped without storing.%
\footnote{%
A more exacting notation would reflect that site selection depends on buffer size (i.e., as $\colorK_{\colorS}(\colorT)$), but we omit this in our notation for brevity.
}
A schematic of site selection is provided in Figure \ref{fig:surface-site-ingest}.

As a space-saving optimization, we store only the data items themselves in buffer space --- no metadata (e.g., ingestion time) or data structure components (e.g., indices or pointers) are stored.
This optimization is critical, in particular, when data items are small --- such as single bits or single bytes \citep{moreno2022hereditary}.
Without metadata, however, identifying stored data items requires capability to deduce ingest time solely from buffer position $\colork$.
We denote \textit{site lookup} this operation as $\colorL(\colorT)$, yielding the data item ingest times $\colorTbar_{\colork=0}, \colorTbar_{\colork=1}, \;\;\ldots, \colorTbar_{\colork=\colorS-1}$.
Note that if no data item has yet been stored at a site (i.e., when first filling the buffer $\colorT < \colorS$), $\colorL(\colorT)$ may include $\nullval$ values.%
\footnote{%
Although omitted for brevity, it is the case that lookup depends on buffer size (i.e., as $\colorL_{\colorS}(\colorT)$).
}
Figure \ref{fig:ingest-rank-calculation} visualizes the relationship of \textit{site selection} and \textit{site lookup} operations.

\subsection{Logical Time $\colorT$ and Item Ingest Time $\colorTbar$}
\label{sec:notation-time}

We will refer to each data item's stream sequence index as its \textbf{ingest time} $\colorTbar$ and the number of items ingested as the \textbf{current logical time} $\colorT$.
In other contexts, a data item's ingest time $\colorTbar$ might be referred to as its ``sequence position'' within the data stream.
However, we avoid that terminology to prevent confusion of sequence position with buffer position $\colork$.

We use a zero-indexing convention.
Logical time begins at $\colorT=0$, when no data items have yet been ingested.
The first element of the data stream $v_0$ is assigned ingestion time $\colorTbar=0$.
After the first item $v_0$ is ingested, logical time advances to $\colorT=1$.
We assume $\colorT$ to be known at every point, which can be accomplished trivially in practice with a simple counter.
Because we are only concerned with the sequence order of data items (and not their actual data values), we will shorthand $\colorTbar$ as referring to $v_{\colorTbar}$ (i.e., the data item ingested at that time).

\subsection{Gap Size $\colorg$}
\label{sec:notation-gapsize}

We define coverage criteria in terms of \textbf{gap sizes} in the retained record.
Formally, we define gap size as a count of consecutive data items that have been discarded or overwritten.
Let $\colorB_{\colorT}$ denote data items retained in buffer at time $\colorT$ (including $v_{\colorT}$) and $\colorBnot_{\colorT}$ refer to data items discarded (i.e., overwritten) up to that point.
Gap size for record index $\colorTbar \in [0 \twodots \colorT)$ at time $\colorT$ follows as
\begin{align}
\colorG_{\colorT}(\colorTbar)
&\coloneq
\max
\{
  i + j
  \text{ for }
  i,\;\; j \in \mathbb{N}
  :
  [\colorTbar-i \twodots \colorTbar+j) \subseteq \colorBnot_{\colorT}
\}.
\label{eqn:gap-size-defn}
\end{align}
Note that if $\colorTbar \in \colorB_{\colorT}$, then $\colorG_{\colorT}(\colorTbar) = 0$.

\subsection{Time Hanoi Value $\colorh$}
\label{sec:notation-hanoi}

Proposed algorithms make heavy use of OEIS integer sequence A007814 \citep{oeis}, formulated as
\begin{align}
\colorH(\colorT)
\coloneq
\max \{ n \in \mathbb{N} : (\colorT + 1) \bmod 2^n = 0 \}.
\label{eqn:hanoi-defn}
\end{align}
We refer to $\colorH(\colorT) = \colorh$ as the ``\textbf{hanoi value}'' (``\textbf{\hv{}}'') of $\colorT$, in reference to parallels with the famous ``Tower of Hanoi'' puzzle \citep{lucas1889jeux}.

Terms of this sequence correspond to the number of trailing zeros in the binary representation of $\colorT + 1$.%
\footnote{%
As such, in implementation, $\colorH(\colorT)$ can be calculated in fast $\mathcal{O}(1)$ using bit-level operations --- e.g., in Python \texttt{($\sim$T \& T-1).bit\_length()} \citep{oeis}.
}
The first terms are $0,\allowbreak 1,\allowbreak 0,\allowbreak 2,\allowbreak 0,\allowbreak 1,\allowbreak 0,\allowbreak 3,\allowbreak 0,\allowbreak 1,\allowbreak 0,\allowbreak 2,\allowbreak 0,\allowbreak 1,\allowbreak 0,\allowbreak 4,\allowbreak 0,\allowbreak \;\;\ldots \;\;$.
We continue our zero-indexing convention, so $\colorH(0) = 0$, $\colorH(1) = 1$, $\colorH(2) = 0$, etc.

Some intuition for the structure of the Hanoi sequence will benefit the reader.
As depicted in Figure \ref{fig:hanoi-intuition}, the hanoi sequence exhibits recursively-nested fractal structure.
Element 0 appears every 2nd entry, element 1 appears every 4th entry, and in the general case element $\colorh$ appears every $2^{\colorh+1}$th entry.
So, a hanoi value $\colorh$ appears twice as often as value $\colorh + 1$.
When hanoi value $\colorh$ appears for the first time, the value $\colorh - 1$ has appeared exactly once.
So, we have seen precisely one instance of $\colorh$ and also precisely one instance of $\colorh - 1$.
At this point, the value $\colorh - 2$ has appeared exactly twice and, in general, the value $\colorh - n$ has appeared $2^{n - 1}$ times.

DStream algorithms use the \hv{} of data items' ingestion times $\colorH(\colorT)$ as the basis to prioritize items for retention.
Figure \ref{fig:hanoi-intuition} provides intuition for how this core aspect of structure manifests in proposed \textit{steady}, \textit{stretched}, and \textit{tilted} algorithms.

\begin{figure*}
  \centering
\begin{minipage}[]{\textwidth}
\noindent\fcolorbox{gray!3}{gray!3}{%
\begin{minipage}[c][0.97in]{0.02\textwidth}
\hspace{-0.5ex}%
\rotatebox{90}{$\colorT = 100$}
\end{minipage}}%
\hspace{-1.5ex}%
{\vrule width 1pt}%
\noindent\fcolorbox{orange!4}{orange!4}{%
\begin{minipage}[]{0.34\textwidth}
\includegraphics[trim={0.2cm 1.2cm 0 0},clip,height=0.97in]{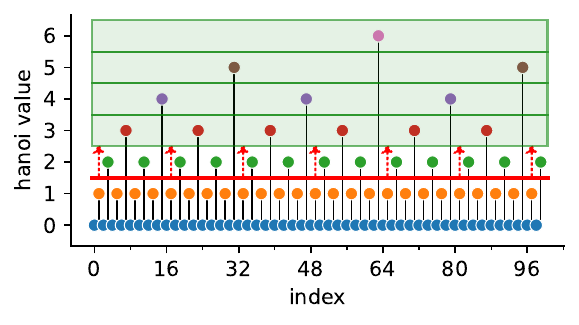}%
\end{minipage}}%
\hspace{-1ex}%
{\vrule width 1pt}%
\noindent\fcolorbox{pink!6}{pink!6}{%
\begin{minipage}[]{0.31\textwidth}
\includegraphics[trim={1cm 1.2cm 0 0},clip,height=0.97in]{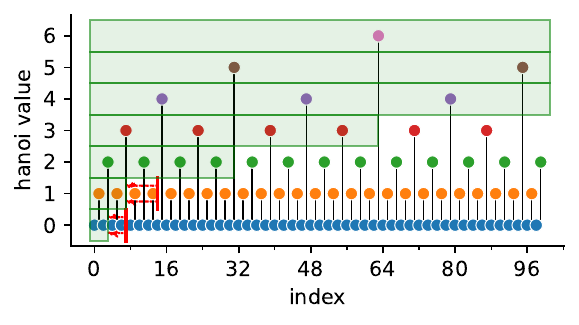}%
\end{minipage}}%
\hspace{-1ex}%
{\vrule width 1pt}%
\noindent\fcolorbox{teal!5}{teal!5}{%
\begin{minipage}[]{0.31\textwidth}
\includegraphics[trim={1cm 1.2cm 0 0},clip,height=0.97in]{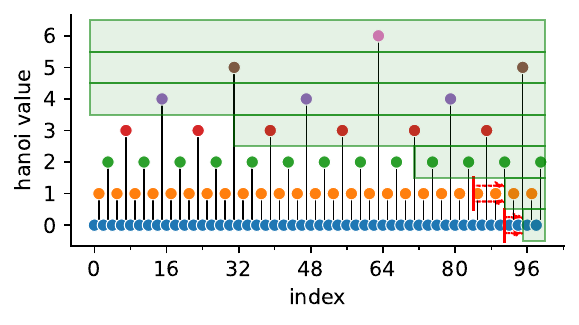}%
\end{minipage}\hspace{-1ex}}%
\end{minipage}\vspace{-1ex}

\begin{minipage}[]{\textwidth}
\noindent\fcolorbox{gray!11}{gray!11}{%
\begin{minipage}[c][1.25in]{0.02\textwidth}
\hspace{-0.5ex}%
\rotatebox{90}{$\colorT = 50$}
\end{minipage}}%
\hspace{-1.5ex}%
{\vrule width 1pt}%
\noindent\fcolorbox{orange!10}{orange!10}{%
\begin{minipage}[c]{0.34\textwidth}
\includegraphics[trim={0.2cm 0 0 0},clip,height=1.25in]{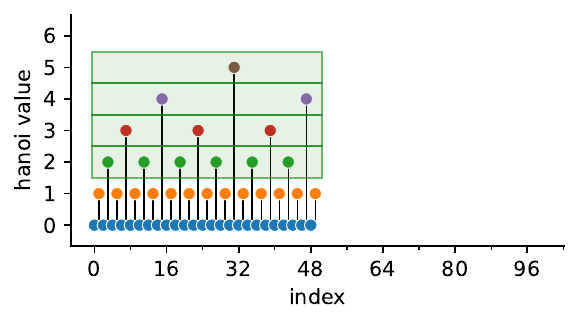}%
\end{minipage}}%
\hspace{-1ex}%
{\vrule width 1pt}%
\noindent\fcolorbox{pink!15}{pink!15}{%
\begin{minipage}[c]{0.31\textwidth}
\includegraphics[trim={1cm 0 0 0},clip,height=1.25in]{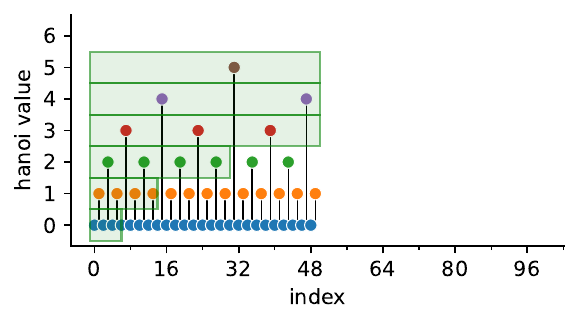}%
\end{minipage}}%
\hspace{-1ex}%
{\vrule width 1pt}%
\noindent\fcolorbox{teal!11}{teal!11}{%
\begin{minipage}[]{0.31\textwidth}
\includegraphics[trim={1cm 0 0 0},clip,height=1.25in]{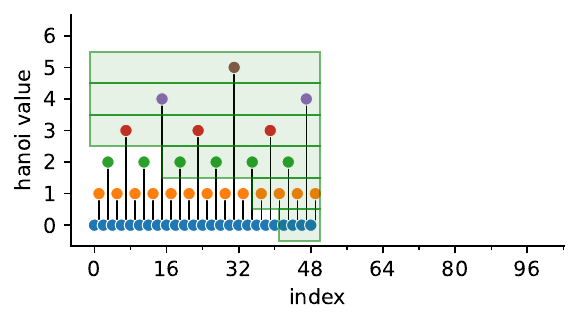}%
\end{minipage}\hspace{-1ex}}%
\end{minipage}\vspace{-1ex}

\begin{minipage}[]{\textwidth}
\noindent\fcolorbox{white!100}{white!100}{%
\begin{minipage}[b][0.2cm]{0.02\textwidth}
\hspace{-0.5ex}%
\rotatebox{90}{$~$}
\end{minipage}}%
\hspace{-1.5ex}%
{\vrule width 1pt}%
\noindent\fcolorbox{orange!2}{orange!2}{%
\begin{minipage}[]{0.34\textwidth}
\centering
\hspace{-1.2ex}\begin{subfigure}[b]{\linewidth}
\captionsetup{justification=centering}
\vspace{-2ex}
\caption{steady strategy\\ \footnotesize all data items of top $n$ hanoi values}
\label{fig:hanoi-intuition-steady}
\end{subfigure}%
\end{minipage}}%
\hspace{-1ex}%
{\vrule width 1pt}%
\noindent\fcolorbox{pink!3}{pink!3}{%
\begin{minipage}[]{0.31\textwidth}
\centering
\hspace{-1.2ex}\begin{subfigure}[b]{\linewidth}
\captionsetup{justification=centering}
\vspace{-2ex}
\caption{stretched strategy\\ \footnotesize first $n$ data items of all hanoi values}
\label{fig:hanoi-intuition-stretched}
\end{subfigure}%
\end{minipage}}%
\hspace{-1ex}%
{\vrule width 1pt}%
\noindent\fcolorbox{teal!3}{teal!3}{%
\begin{minipage}[]{0.31\textwidth}
\centering
\hspace{-1.2ex}\begin{subfigure}[b]{\linewidth}
\captionsetup{justification=centering}
\vspace{-2ex}
\caption{tilted strategy\\ \footnotesize last $n$ data items of all hanoi values}
\label{fig:hanoi-intuition-tilted}
\end{subfigure}%
\end{minipage}\hspace{-1ex}}%
\end{minipage}
\vspace{-0.1in}
\caption{%
  \textbf{Hanoi value retention strategies.}
  \footnotesize
  Data item retention can be prioritized based on ``hanoi value'' of ingestion time $\colorT$.
  Here, ``lollipop'' bars show data item hanoi values, $\colorH(\colorTbar)$.
  To satisfy the steady criterion, our proposed strategy discards data items with \hv{} below a threshold $n(\colorT)$ (\ref{fig:hanoi-intuition-steady}).
  Red arrows show the threshold $n$ increasing as time elapses, purging low \hv{} data items to respect available buffer space.
  Our strategy for the stretched criterion retains the first $n'(\colorT)$ data item instances of all observed \hv{}'s (\ref{fig:hanoi-intuition-stretched}).
  As time elapses, $n'(\colorT)$ is halved across \hv{}'s' in a rolling fashion --- also shown by red arrows above.
  Our strategy to satisfy the tilted criterion operates similarly to the stretched strategy, except the \textit{last} $n'(\colorT)$ data item instances of each \hv{} are retained (\ref{fig:hanoi-intuition-tilted}).
  The bottom and top panels compare example retention at $\colorT=50$ and $\colorT=100$, respectively.
  Green boxes indicate retained data items.
  }
\label{fig:hanoi-intuition}
\end{figure*}

\subsection{Time Epoch $\colort$}
\label{sec:notation-epoch}

Owing to our algorithms' incorporation of \hv{}-based abstractions, it is useful to track a measure related to the binary magnitude of elapsed time $\colorT$ (i.e., $\sim \log_2(\colorT)$).
We call this measure the \textbf{epoch} $\colort$ of time $\colorT$,
\begin{equation}
\colort
\coloneq
\begin{cases}
\left\lfloor \log_2(\colorT) \right\rfloor - \colors + 1 & \text{if $\colorT \geq \colorS$} \\
0 & \text{otherwise.}
\end{cases}
\label{eqn:epoch-defn}
\end{equation}

Under this definition, epochs begin exactly at even powers of two (e.g., $\colorT = 16$) for $\colorT \geq \colorS$.
Correction is applied to begin epoch $\colort=1$ at $\colorT = \colorS$.

\subsection{Site Reservations $\colorHcal_{\colort}(\colork)$}
\label{sec:notation-reservation}

Algorithm design is structured around ``reserving'' (setting aside) buffer sites $\colork \in [0 \twodots \colorS)$ to host data items whose time index $\colorTbar$ has a specific \hv{}, $\{\colorTbar : \colorH(\colorTbar) = \colorh \}$, on an epoch-to-epoch-basis.
Denote site $\colork$'s \textbf{hanoi value reservation} during epoch $\colort$ as $\colorHcal_{\colort}(\colork)$.%
\footnote{
A careful reader may wonder if the notation for site $\colork$'s hanoi value reservation $\colorHcal_{\colort}(\colork)$ should also be qualified by overall buffer size $\colorS$ as $\colorHcal_{\colort,\colorS}(\colork)$, in addition to current epoch $\colort$.
Although omitted from our notation for brevity, this is indeed the case.
}
Note that a data item $\colorTbar \not\in \colortsetofT$ may occupy site $\colork$ during epoch $\colort$ with $\colorHcal_{\colort}(\colork) \neq \colorH(\colorTbar)$, having been held over from the previous epoch $\colort - 1$ before being overwriten with an instance of \hv{} $\colorh = \colorHcal_{\colort}(\colork)$ during the current epoch $\colort$.

A substantial fraction of implementation for presented algorithms relates to how hanoi value reservations $\colorHcal_{\colort}$ are arranged over buffer space $\colork \in [0\twodots\colorS)$ as epochs $\colort$ elapse.
Each algorithm organizes buffer space into contiguous \textbf{reservation segments}.
Within a single reservation segment, all hanoi value reservations are distinct.
That is, no two sites share the same reserved hanoi value.
Reservation segments are themselves further organized into \textbf{segment bunches}.
All segments within a bunch are the same length and have the same left-to-right hanoi value reservation layout.
However, unlike sites in a segment, segments in a bunch may not be laid out contiguously.
Reservation segments in a bunch are contiguous in buffer space under the steady algorithm, but are not under the stretched and tilted algorithms.

Beyond the commonalities above, the precise makeup and layout of segments and segment bunches differs between the steady algorithm versus the stretched and tilted algorithms.
(The latter two algorithms share large commonalities.)
Figures \ref{fig:hsurf-steady-intuition-heatmap} and \ref{fig:hsurf-stretched-intuition-reservations} sketch the makeup of hanoi value reservations, reservation segments, and segment bunches in buffer space over time for the steady algorithm and stretched/tilted algorithms, respectively.
Further details are covered separately for each algorithm in Sections \cref{sec:steady,sec:stretched,sec:tilted}.

\subsection{Time Meta-epoch $\colortau$}
\label{sec:notation-metaepoch}

In the case of the \textit{stretched} and \textit{tilted} algorithms, it becomes useful to group sequential epochs $\colort$ together as \textbf{meta-epochs} $\colortau$.
We define $\colortau=0$ as corresponding to epoch $\colort=0$.
Meta-epoch $\colortau=1$ therefore begins at epoch $\colort = 1$.
As later motivated in Lemma \ref{thm:stretched-meta-epoch}, we define meta-epochs $\colortau\geq1$ as lasting $2^{\colortau} - 1$ epochs.
Under this definition, we have $\colortau\geq1$ as beginning at epoch
\begin{align}
\min(\colort \in \colortausetoft)
&= 1 + \sum_{i=1}^{\colortau - 1} (2^{i} - 1) \nonumber \\
&= 2^{\colortau} - \colortau.
\label{eqn:meta-epoch-defn}
\end{align}
For epoch $\colort > 0$, we can thus calculate the current meta-epoch $\colortau$ exactly as
\begin{align*}
\colortau
=
\begin{cases}
\left\lfloor \log_2(\colort) \right\rfloor + 1 & \text{if } \colort = 2^{\left\lfloor \log_2(\colort) \right\rfloor} - \left\lfloor \log_2(\colort) \right\rfloor \\
\left\lfloor \log_2(\colort) \right\rfloor & \text{otherwise.}
\end{cases}
\end{align*}

\subsection{Restrictions on Logical Time $\colorT$, Epoch $\colort$, and Meta-epoch $\colortau$}

Ideally, data stream curation would support indefinite ingestions, $\colorT \in \mathbb{N}$.
Our proposed \textit{steady curation} algorithm, introduced below, operates in this fashion.
However, our proposed \textit{stretched} and \textit{tilted curation} algorithms accept only $2^{\colorS} - 2$ ingestions.
We expect this capacity to suffice for many applications using even moderately sized buffers.
For instance, a buffer with space for 64 data items suffices to ingest items continuously at 5GHz for over 100 years.
As such, we leave behavior for stretched and tilted curation past $2^{\colorS} - 2$ ingests to future work.

For convenience in exposition, note that we formally define and characterize the stretched and tilted algorithms only for $\colorT \in [0 \twodots 2^{\colorS - 1})$.
However, in practice, extension to $\colorT \in [0 \twodots 2^{\colorS} - 1)$ that respects established guarantees on curation quality is straightforward.
All algorithm psuedocode and reference implementations support this extended domain.

Restricting logical time $\colorT < 2^{\colorS - 1}$ bounds time epoch $\colort$ below
\begin{align*}
\colort &\leq \left\lfloor\log_2(2^{\colorS - 1} - 1)\right\rfloor - \colors + 1
\leq \colorS - \colors - 1
\end{align*}
assuming $\colorS \geq 4$.
The $\colorS - \colors$ relation can be understood as arising due to delay of epoch $\colort=1$ to time $\colorT = \colorS = 2^{\colors}$.
Supplementary Lemma \ref{thm:meta-epoch-bound} establishes the following upper bound on time meta-epoch $\colortau$,
\begin{align*}
\colortau
&\leq
\min\Big(
  \log_2(\colort + \colors),\;\;
  \log_2(\colort) + 1
\Big)
\text{ for } \colort \in [1 \twodots \colorS - \colors).
\end{align*}
Taking $\colort = \colorS - \colors - 1$, we can also bound $\colortau$ over the stretched and tilted algorithms' domains as $\colortau < \colors$.

\subsection{Miscellania}

Algorithm listings refer to a handful of utility helper functions (e.g., \textsc{BitCount}, \textsc{BitLength}, etc.).
Refer to Supplementary Section \ref{sec:pseudocode} for full definitions of these.

Let the binary floor of a value $x$ be denoted $\left\lfloor x \right\rfloor_\mathrm{bin} = 2^{\left\lfloor \log_2 x \right\rfloor}$.
For binary ceiling, let $\left\lceil x \right\rceil_\mathrm{bin} = 2^{\left\lceil \log_2 x \right\rceil}$.
In both cases, we correct $\left\lfloor x \right\rfloor_\mathrm{bin} = \left\lceil x \right\rceil_\mathrm{bin} = 0$.
As a final piece of minutiae, take $\{2^{\mathbb{N}}\}$ as shorthand for $\{2^n : n \in \mathbb{N} \}$.

\section{Software and Data Availability}
\label{sec:materials}

Supporting software and executable notebooks for this work are available via Zenodo at \url{https://doi.org/10.5281/zenodo.10779240} \citep{moreno2024hsurf}.
DStream algorithm implementations are also published on PyPI in the \texttt{downstream} Python package, where we plan to conduct longer-term, end-user-facing development and maintenance \citep{moreno2024downstream}.
All accompanying materials are provided open-source under the MIT License.

This project benefited significantly from open-source scientific software \citep{2020SciPy-NMeth,harris2020array,reback2020pandas,mckinney-proc-scipy-2010,waskom2021seaborn,hunter2007matplotlib,moreno2023teeplot}.

\section{Steady Algorithm} \label{sec:steady}

The steady criterion seeks to retain data items from time points evenly spread across observed history.
As given in Equation \ref{eqn:steady-cost} in Section \ref{sec:stream-curation-problem}, the steady criterion's cost function is the largest gap size between retained data items, $\mathsf{cost\_steady}(\colorT) = \max\{\colorG_{\colorT}(\colorTbar) : \colorTbar \in [0 \twodots \colorT)\}$.
For a buffer size $\colorS$ and time elapsed $\colorT$, largest gap size can be minimized no lower than
\begin{align}
\mathsf{cost\_steady}(\colorT)
&\geq
\left\lceil
\frac{\colorT - \colorS}{\colorS + 1}
\right\rceil
=
\left\lfloor
\frac{\colorT}{\colorS + 1}
\right\rfloor.
\label{eqn:steady-optimal-gap-size}
\end{align}
This section presents a stream curation algorithm designed to support the steady criterion, guaranteeing maximum gap size no worse than
\begin{align*}
\mathsf{cost\_steady}(\colorT)
&\leq 2 \left\lfloor \frac{\colorT}{\colorS} \right\rfloor_{\mathrm{bin}} - 1.
\end{align*}
Disparity from ideal arises because maintaining uniform gap spacing on an ongoing basis is impossible on account of data item discards merging neighboring gaps.

\subsection{Steady Algorithm Strategy}
\label{sec:steady-strategy}

Figure \ref{fig:hanoi-intuition-steady} overviews the proposed algorithm's core strategy, which revolves around prioritizing data item retention according to the \hv{} of the sequence indices, $\colorH(\colorTbar)$.
Specifically, we aim to keep data items with the largest hanoi values.

It turns out that with all data items $\colorH(\colorTbar) > m$ retained, gap size is at most $\colorg \leq 2^m - 1$.
To understand, imagine discarding items with $\colorH(\colorTbar) = 0$.
This action would drop every other item, and increase gap size from 0 to $\colorg \leq 1$.
Then, removing items with $\colorH(\colorTbar) = 1$ would again drop every other item, and increase gap size to $\colorg \leq 3$.
Continuing this pattern to prune successive hanoi values provides well-behaved transitions that gradually increase gap size while maintaining even spacing.

We thus set out to maintain, for a ratcheting threshold $n(\colorT)$, all items $\colorH(\colorTbar) > n(\colorT)$.
(The threshold $n(\colorT)$ must increase over time to ensure space for new high h.v. data items as we encounter them.)
Formally,
\begin{align*}
\mathsf{goal\_steady}
\coloneq \{
\colorTbar \in [0 \twodots \colorT)
: \colorH(\colorTbar) > n(\colorT)
\}.
\end{align*}
In practice, this requires repeatedly discarding all items with lowest \hv{} $\colorH(\colorTbar) = n(\colorT)$ as time elapses.
Supplementary Lemma \ref{thm:steady-hv-geq-epoch} shows that using a threshold of $n(\colorT) = \colort - 1$ fills available buffer space $\colorS$.%

\subsection{Steady Algorithm Mechanism}
\label{sec:steady-mechanism}

\begin{figure*}[htbp!]
  \flushleft
  \begin{subfigure}[b]{\linewidth}
    \includegraphics[width=\textwidth, trim={0cm 0cm 0cm 1cm}, clip]{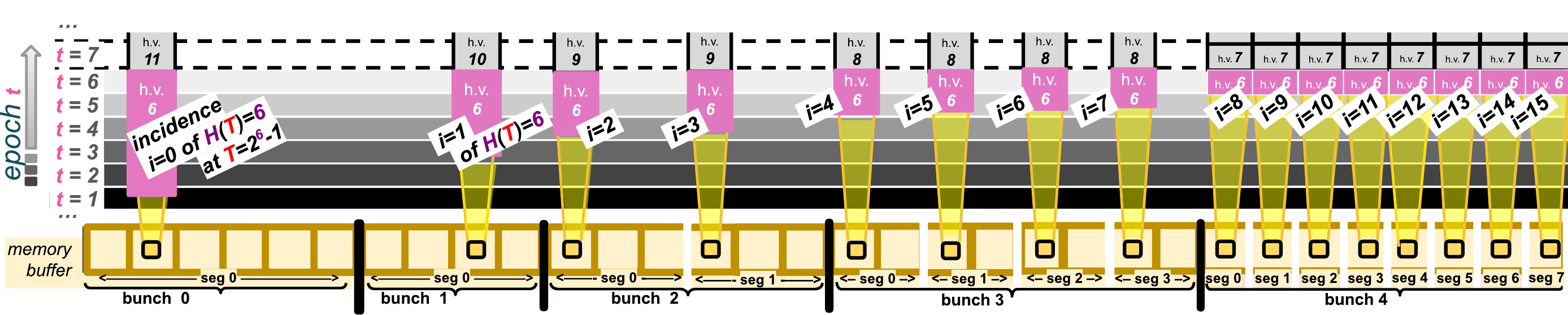}
    \vspace{-4ex}
    \caption{\footnotesize Site selection $\colorK(\colorT)$ for data items of one \hv{}, $\colorh = 6$.}
    \label{fig:hsurf-steady-intuition-diagram}
  \end{subfigure}
\vspace{-5ex}
  \flushright
  \begin{subfigure}[b]{0.98\linewidth}
    \includegraphics[width=0.011\textwidth, trim={0.2cm 2.8cm 31.8cm 2.8cm}, clip]{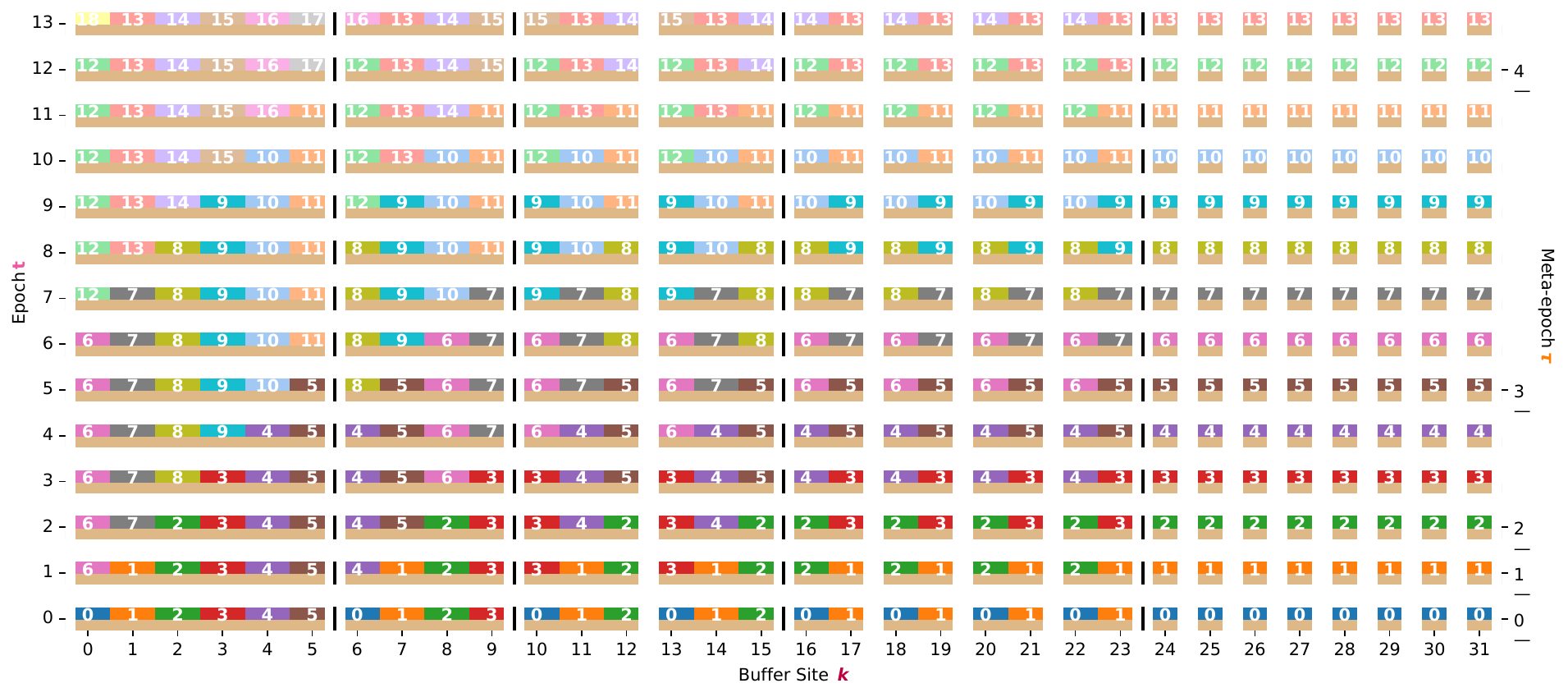}%
    \begin{tikzpicture}

    \node [
      above right,
      inner sep=0] (image) at (0,0) {\includegraphics[width=0.989\textwidth, trim={0.7cm 0cm 1.4cm 5.6cm}, clip]{binder/teeplots/12/reservation-mode=steady-full+surface-size=32+viz=site-reservation-at-ranks-heatmap+ext=.pdf}};

    \begin{scope}[
      x={($0.1*(image.south east)$)},
      y={($0.1*(image.north west)$)}]


\draw[stealth-, ultra thick, gray](0.37,1.1+2.68) -- ++(0, -0.5) node[font=\footnotesize] at (1.62, 2.38+1.1) {incidence $i=0$ of $\colorH(\colorT)=6$ at $\colorT=63$};
\draw[stealth-, ultra thick, gray](2.21+0.6,1.1+3.84) -- ++(0, -0.5) node[font=\footnotesize] at (2.45+0.6, 3.5+1.1) {$i=1$~~~of $\colorH(\colorT)=6$};
\draw[stealth-, ultra thick, gray](3.74-0.3,1.1+4.9) -- ++(0, -0.5) node[font=\footnotesize] at (3.53-0.3, 4.61+1.1) {$i=2$};
\draw[stealth-, ultra thick, gray](4.66-0.3,1.1+4.9) -- ++(0, -0.5) node[font=\footnotesize] at (4.45-0.3, 4.61+1.1) {$i=3$};
\draw[stealth-, ultra thick, gray](5.27,1.1+5.96) -- ++(0, -0.5) node[font=\footnotesize] at (5.06, 5.67+1.1) {$i=4$};
\draw[stealth-, ultra thick, gray](5.88,1.1+5.96) -- ++(0, -0.5) node[font=\footnotesize] at (5.67, 5.71+1.1) {$i=5$};
\draw[stealth-, ultra thick, gray](6.5,1.1+5.96) -- ++(0, -0.5) node[font=\footnotesize] at (6.29, 5.71+1.1) {$i=6$};
\draw[stealth-, ultra thick, gray](7.12,1.1+5.96) -- ++(0, -0.5) node[font=\footnotesize] at (6.91, 5.68+1.1) {$i=7$};
\draw[stealth-, ultra thick, gray](7.41+0.3,1.1+7.06) -- ++(0, -0.5) node[font=\footnotesize] at (7.30+0.3, 6.65+1.1) {$8$};
\draw[stealth-, ultra thick, gray](7.72+0.3,1.1+7.06) -- ++(0, -0.5) node[font=\footnotesize] at (7.61+0.3, 6.65+1.1) {$9$};
\draw[stealth-, ultra thick, gray](8.02+0.3,1.1+7.06) -- ++(0, -0.5) node[font=\footnotesize] at (7.91+0.3, 6.65+1.1) {$10$};
\draw[stealth-, ultra thick, gray](8.32+0.3,1.1+7.06) -- ++(0, -0.5) node[font=\footnotesize] at (8.21+0.3, 6.65+1.1) {$11$};
\draw[stealth-, ultra thick, gray](8.61+0.3,1.1+7.06) -- ++(0, -0.5) node[font=\footnotesize] at (8.5+0.3, 6.65+1.1) {$12$};
\draw[stealth-, ultra thick, gray](8.94+0.3,1.1+7.06) -- ++(0, -0.5) node[font=\footnotesize] at (8.83+0.3, 6.65+1.1) {$13$};
\draw[stealth-, ultra thick, gray](9.25+0.3,1.1+7.06) -- ++(0, -0.5) node[font=\footnotesize] at (9.14+0.3, 6.65+1.1) {$14$};
\draw[stealth-, ultra thick, gray](9.55+0.3,1.1+7.06) -- ++(0, -0.5) node[font=\footnotesize] at (9.44+0.3, 6.65+1.1) {$15$};
    \end{scope}

    \end{tikzpicture}
    \vspace{-5ex}
    \caption{\footnotesize Sites reserved for \hv{} $\colorHcal_{\colort}(\colork)$ over epochs $\colort=0$ to $\colort=7$. H.v.{} $\colorh=6$ annotated as an example.}
    \label{fig:hsurf-steady-intuition-heatmap}
  \end{subfigure}%
  \vspace{-2ex}
  \caption{
    \textbf{Steady algorithm strategy.}
    \footnotesize
    Top panel \ref{fig:hsurf-steady-intuition-diagram} shows sites selected for items with \hv{} $\colorh=6$ from their first occurrence during epoch $\colort=2$ to epoch $\colort=7$, when stored instances of that \hv{} are overwritten.
    Memory buffer sites are shown across the bottom of the schematic.
    Data items' vertical span stretches across time from the epoch when they are stored to the epoch when they are overwritten.
    The first data item with hanoi value $\colorH(\colorT) = \colorh$ is placed in bunch 0 during epoch $\colort=\colorh-4$.
    The next data item with \hv{} $\colorh$ is encountered in the following epoch, and it is placed in bunch 1.
    In epoch $\colort=\colorh-2$, two data items with \hv{} $\colorh$ are encountered and placed into segments within bunch 2.
    Epoch $\colort=\colorh-1$, encounters 4 data items with \hv{} $\colorh-1$ places them in bunch 3's segments.
    In epoch $\colort=\colorh$, eight \hv{} $\colorh$ data items (twice as many) are encountered.
    We place them in bunch 4's one-site segments.
    Finally, during epoch $\colort=\colorh+1$, all further ingested data items with \hv{} $\colorh$ are discarded and all existing stored \hv{} $\colorh$ items are overwritten.
    In this manner, data items with highest \hv{} are retained on a rolling basis to provide uniformly-spaced gaps --- as laid out in Figure \ref{fig:hanoi-intuition-steady}.
    Bottom panel \ref{fig:hsurf-steady-intuition-heatmap} shows \hv{} site reservations $\colorHcal_{\colort}(\colork)$ from epoch $\colort=0$ through $\colort=5$ with buffer size $\colorS=16$.
    Numbering/color coding corresponds to which \hv{} a site is reserved for.
    Black dividers separate bunches; white space divides segments within bunches.
    Annotations highlight the lifecycle of data items with \hv{} $\colorh=6$.
  }
  \label{fig:hsurf-steady-intuition}
\end{figure*}

Each epoch $\colort$, all items with $\colorH(\colorTbar) = \colort - 1$ must be overwritten to make space for new items with \hv{} $\colorh \geq \colort$.
Figure \ref{fig:hsurf-steady-intuition} overviews the layout procedure used to orchestrate replacement of data items with \hv{} $\colorh = \colort - 1$ each epoch.
We divide buffer space into $\colors$ ``bunches,'' themselves divided into ``segments.''
Bunch $i=0$ contains one segment of length $\colors + 1$ sites.
The layout of bunch $i=0$ is a special case, relative to subsequent bunches $i>0$.
For $i > 0$, bunch $i$ contains $2^{i-1}$ segments.
Although segment count increases across bunches $i > 0$, segment length decreases by 1 each bunch as $\colors - i$.
So, segments in the last bunch contain only one site.
With $\colors$ bunches, available buffer space $\colorS$ is filled by this reservation layout,
\begin{align*}
\colors + 1 + \sum_{i=0}^{\colors-1} (\colors - i - 1) \times 2^{i} = 2^{\colors} = \colorS.
\end{align*}

For each hanoi value $\colorh$, if we store one data item $\colorH(\colorTbar) = \colorh$ per segment, data items with a hanoi value $\colorh$ will touch all segments within exactly one bunch over the course of each epoch.
Bunch 0 will contain the first data item with \hv{} $\colorh$, which is encountered in epoch $\colort=\colorh - \colors$.
Bunch 1 contains the one data item with that \hv{} $\colorh$ from epoch $\colort=\colorh - \colors + 2$.
Bunch 2 contains the two data items with \hv{} $\colorh$ from epoch $\colort=\colorh - \colors + 3$.
In general, bunch $i>0$ will contain data items $\{ \colorTbar \in \lBrace \colort = \colorh - \colors + i + 1 \rBrace : \colorH(\colorTbar) = \colorh \}$.
Segment size (decreasing by one each bunch) is arranged so that one instance of all $\colors - i$ \hv's that have ``progressed'' to bunch $i$ can be stored within each segment in that bunch.

\begin{figure*}[htbp!]
  \centering

\begin{minipage}{\textwidth}
  \scriptsize
  \setlength{\tabcolsep}{2.5pt}
  \begin{tabularx}{\textwidth}{
    r
    Y|Y|Y|Y|Y|Y|Y|Y|
    Y|Y|Y|Y|Y Y Y|Y
    |Y|Y|Y|Y|Y|Y|Y
    |Y|Y|Y|Y Y
    }
     { Time $\colorT$} & \textbf{0} & \textbf{1} & \textbf{2} & \textbf{3} & \textbf{4} & \textbf{5} & \textbf{6} & \textbf{7}
    & \textbf{8} & \textbf{9} & \textbf{10} & \textbf{11} & \textbf{12} 
    &  \ldots
    & \textbf{28} & \textbf{29} & \textbf{30} & \textbf{31}
    & \textbf{32} & \textbf{33} & \textbf{34} & \textbf{35}
    & \textbf{36} & \textbf{37} & \textbf{38} & \textbf{39} & \textbf{40}
    & \ldots \\ \hline
     \rowcolor{lightgray!30}
   { Epoch $\colort$} & 0 & 0 & 0 & 0 & 0 & 0 & 0 & 0
    & 0 & 0 & 0 & 0 & 0 
    &  \ldots
    & 0 & 0 & 0 & 0
    & 1 & 1 & 1 & 1
    & 1 & 1 & 1 & 1 & 1
    & \ldots \\
    { \scriptsize$\colorH(\colorT)$} & 0 & 1 & 0 & 2 & 0 & 1 & 0 & 3
    & 0 & 1 & 0 & 2 & 0 
    &  \ldots
    & 0 & 1 & 0 & 5
    & 0 & 1 & 0 & 2
    & 0 & 1 & 0 & 3 & 0
    & \ldots \\
    \hline
     { \scriptsize $\colorK(\colorT)$} & \textbf{0} & \textbf{1} & \textbf{6} & \textbf{2} & \textbf{10} & \textbf{7} & \textbf{13} & \textbf{3}
     & \textbf{16} & \textbf{11} & \textbf{18} & \textbf{8} & \textbf{20} & \ldots
 & \textbf{30} & \textbf{23} & \textbf{31} & \textbf{5} & {\tiny \texttt{\textbf{null\hphantom{}}}}  
 & \textbf{24} & {\tiny \texttt{\textbf{null\hphantom{}}}} & \textbf{16}
 & {\tiny \texttt{\textbf{null\hphantom{}}}} & \textbf{25} & {\tiny \texttt{\textbf{null\hphantom{}}}} & \textbf{10} & {\tiny \texttt{\textbf{null\hphantom{}}}}  &\ldots
  \end{tabularx}
  \vspace{-2ex}
\end{minipage}
\begin{subfigure}{\textwidth}
\caption{\footnotesize Steady policy site selection $\colorK(\colorT)$ with buffer size $\colorS=32$. Ingests marked \nullval{} indicate item discarded without storing.}
\label{fig:hsurf-steady-implementation-site-selection}
\end{subfigure}
\vspace{-3ex}

\begin{subfigure}[b]{\linewidth}
\includegraphics[width=\linewidth]{
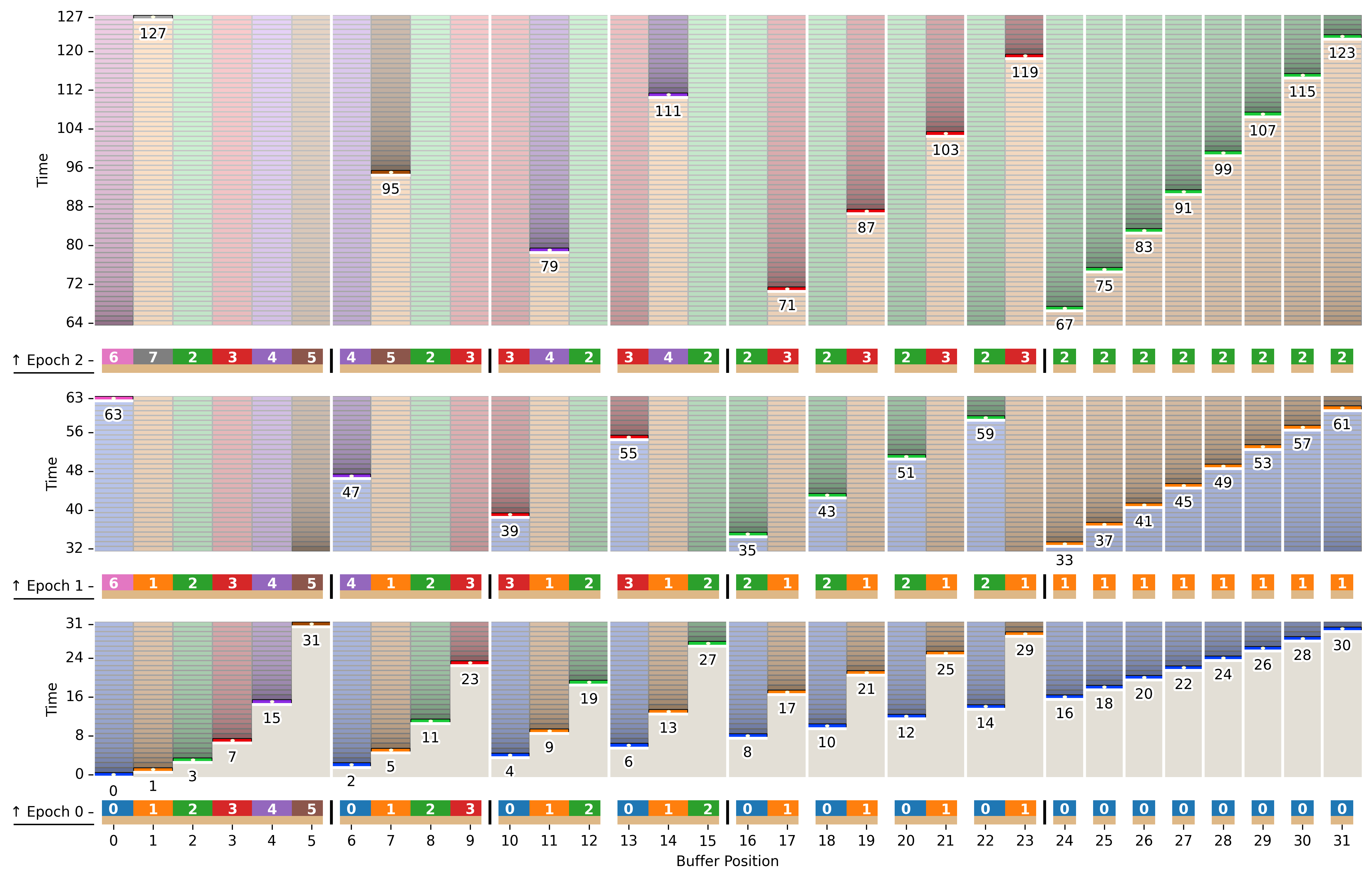}
\vspace{-4.5ex}\caption{\footnotesize
  Buffer composition across time, split by epoch with data items color-coded by hanoi value $\colorH(\colorTbar)$.
}
\label{fig:hsurf-steady-implementation-schematic}
\end{subfigure}

\vspace{0.5ex}
\begin{minipage}[]{\textwidth}
 \vspace{-2pt}
  \begin{subfigure}[t]{0.65\linewidth}
    \vspace{0pt}
    \centering
  \includegraphics[width=0.88\linewidth,clip]{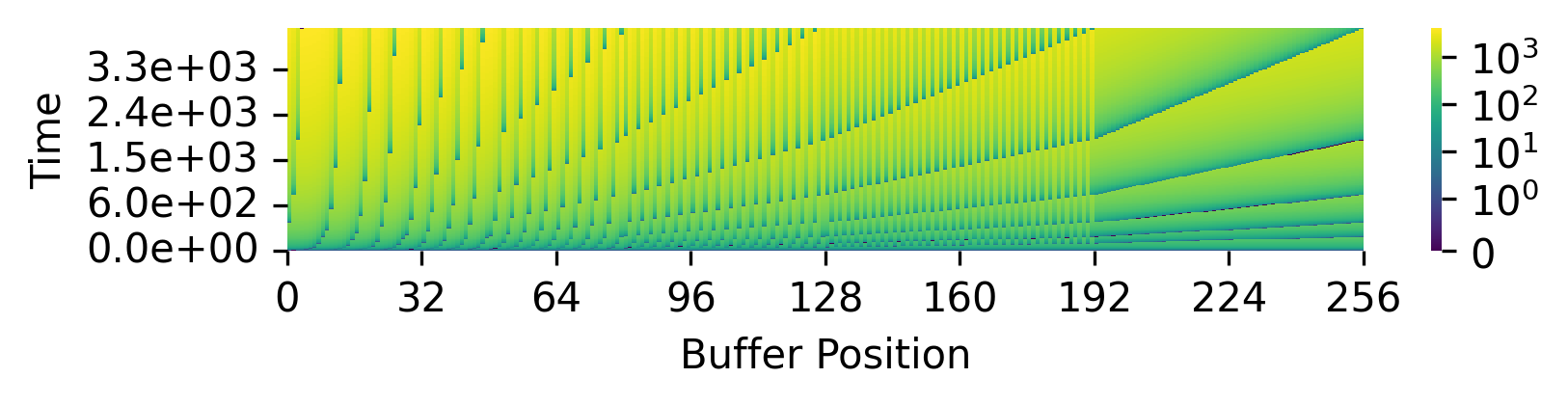}  
  \end{subfigure}%
  \begin{subfigure}[t]{0.35\linewidth}
  \vspace{-2pt}
  \caption{%
    \footnotesize
    Stored data item age across buffer sites for buffer size $\colorS=256$ from $\colorT=0$ to 4,096.
  }
  \label{fig:hsurf-steady-implementation-heatmap}
\end{subfigure}
\end{minipage}

  \vspace{-0.5ex}
   \begin{minipage}[]{\textwidth}
   \vspace{-2pt}
  \begin{subfigure}[t]{0.65\linewidth}
  \vspace{0pt}
    \centering
    \includegraphics[width=0.88\linewidth,clip]{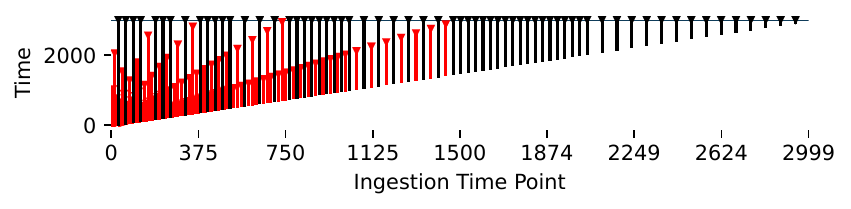}
  \end{subfigure}%
  \begin{subfigure}[t]{0.35\linewidth}
  \vspace{-2pt}
  \caption{%
    \footnotesize
    Data item retention time spans by ingestion time point for buffer size $\colorS=64$ from $\colorT=0$ to 3,000.
  }
  \label{fig:hsurf-steady-implementation-dripplot}
  \end{subfigure}
  \end{minipage}

  \vspace{-0.5ex}
 \begin{minipage}[]{\textwidth}
 \vspace{-2pt}
\begin{subfigure}[t]{0.65\linewidth}
\vspace{0pt}
  \centering
  \includegraphics[width=0.88\linewidth,clip]{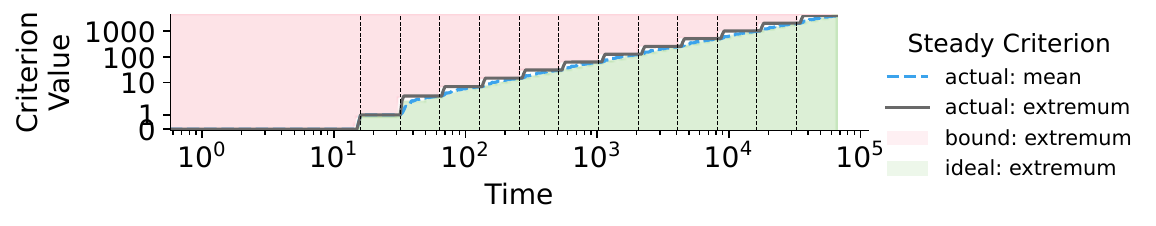}
\end{subfigure}%
\begin{subfigure}[t]{0.35\linewidth}
\vspace{-2pt}
\caption{%
  \footnotesize
  Steady criterion satisfaction across time points for buffer size $\colorS=16$.
}
\label{fig:hsurf-steady-implementation-satisfaction}
\end{subfigure}
\end{minipage}

\vspace{-2ex}\caption{%
  \textbf{Steady algorithm implementation.}
  \footnotesize
  Top panel \ref{fig:hsurf-steady-implementation-site-selection} enumerates initial steady policy site selection on a 32-site buffer.
  Panel \ref{fig:hsurf-steady-implementation-schematic} summarizes how data items are ingested and retained over time within a 32-site buffer, color-coded by data items' hanoi values $\colorH(\colorT)$.
  Between $\colorT=0$ and $\colorT=126$, time is segmented into epochs $\colort=0$, $\colort=1$, and $\colort=2$; strips before  each epoch show hanoi values assigned to each buffer site during that epoch.
  Time increases along the $y$ axis.
  Rectangles with small white ``$\blkhorzoval$'' symbol denote buffer site where the ingested data item from each timestep $\colorT$ is placed.
  Buffer space is split into ``reservation segments.''
  Reservation segments occur in five ``bunches'' --- (1) one 6-site segment, (2) one 4-site segment, (3) two 3-site segments, (4) four 2-site segments, and (5) eight 1-site segments.
  At each epoch, data items are filled into sites newly assigned for their ingestion-order hanoi value from left to right.
  In epoch $\colort=0$, all sites are filled with a first data item.
  During each subsequent epoch $\colort>0$, segments within bunch $i$ each accept one data item with h.v. $\colorh=\colort + \colors - 1 - i$.
  All newly-assigned sites were previously assigned to the overall now-lowest hanoi value $\colorh=\colort - 1$.
  In this way, all instances of the overall lowest hanoi value are overwritten each epoch.
  Heatmap panel \ref{fig:hsurf-steady-implementation-heatmap} shows the evolution of data item age at each site on a 256-bit field over the course of 4,096 time steps.
  Dripplot panel \ref{fig:hsurf-steady-implementation-dripplot} shows retention spans for 3,000 ingested time points.
  Vertical lines span durations between ingestion and elimination for data items from successive time points.
  Time points previously eliminated are marked in red.
  Lineplot panel \ref{fig:hsurf-steady-implementation-satisfaction} shows steady criterion satisfaction on a 16-bit surface over $2^{16}$ timepoints.
  Lower and upper shaded areas are best- and worst-case bounds, respectively.
  }
\label{fig:hsurf-steady-implementation}

\end{figure*}

The particulars of our layout become useful in managing elimination of data items with \hv{} $\colorh = \colort - 1$ during epoch $\colort$.
As noted above, \hv{} $\colorh = \colort + \colors$ will store exactly one data item in bunch 0 during epoch $\colort > 0$.
This is the same number of data items left by \hv{} $\colorh = \colort - 1$ in bunch 0 during earlier epoch $\colort - \colors - 2$.
The same correspondence holds in bunch 1, between \hv{} $\colorh = \colort + \colors - 2$ and \hv{} $\colorh = \colort - 1$.
Indeed, across all bunches $i>0$, the number of data items left by \hv{} $\colorh = \colort + \colors - i$ in bunch $i$ equals those left earlier by \hv{} $\colorh = \colort - 1$.

As shown in Figure \ref{fig:hsurf-steady-implementation}, we can take advantage of one-to-one correspondence between incoming data items and data items of \hv{} $\colorh=\colort-1$ to choreograph clean elimination of \hv{} $\colorh=\colort-1$ by overwrites each epoch.
In determining storage site $\colork$ for ingest $\colorTbar$, we map incoming data items with \hv{} $\colorh \geq \colort$ over items $\colorh = \colort - 1$ slated for elimination by placing them at segment positions $\colorh$ modulus segment size.
The number of \hv{} instances $\colorh = \colorH(\colorTbar)$ already seen, which can be calculated $\mathcal{O}(1)$, identifies the segment where data item $\colorTbar$ should be stored.
Supplementary Lemma \ref{thm:steady-hv-elimination} verifies the behavior of this procedure.

\begin{algorithm}[H]
\caption{Steady algorithm site selection $\colorK(\colorT)$.\\ \footnotesize Supplementary Algorithm \ref{alg:steady-time-lookup} gives steady algorithm site lookup $\colorL(\colorT)$. Supplementary Listings \cref{lst:steady_site_selection.py,lst:steady_time_lookup.py} provide reference Python code.}
\label{alg:steady-site-selection}
\begin{minipage}{0.5\textwidth}
    \hspace*{\algorithmicindent} \textbf{Input:} $\colorS \in \{2^{\mathbb{N}}\},\;\; \colorT \in \mathbb{N}$ \Comment{Buffer size and current logical time}\\
    \hspace*{\algorithmicindent} \textbf{Output:} $\colork \in [0 \twodots \colorS - 1) \cup \{\nullval\}$ \Comment{Selected site, if any}
    \begin{algorithmic}[1]
        \State $\texttt{uint\_t} ~ ~ \colors \gets \Call{BitLength}{\colorS} - 1$
        \State $\texttt{uint\_t} ~ ~ \colort \gets \Call{BitLength}{\colorT} - \colors $ \Comment{Current epoch (or negative)}
        \State $\texttt{uint\_t} ~ ~ \colorh \gets \Call{CountTrailingZeros}{\colorT + 1}$ \Comment{Current \hv{}}
        \If{$\colorh < \colort$} \Comment{If not a top $n(\colorT)$ \hv{}\;\ldots}
        \State \Return \nullval \Comment{\ldots discard without storing}
        \EndIf
        \State $\texttt{uint\_t} ~ ~ i \gets \Call{RightShift}{\colorT, \;\; \colorh + 1}$ \Comment{Hanoi value incidence (i.e., num seen)}
        \If{$i = 0$} \Comment{Special case the 0th bunch}
        \State $\texttt{uint\_t} ~ ~ \colork_b \gets 0$ \Comment{Bunch position}
        \State $\texttt{uint\_t} ~ ~ o \gets 0$ \Comment{Within-bunch offset}
        \State $\texttt{uint\_t} ~ ~ w \gets \colors + 1$ \Comment{Segment width}
        \Else
        \State $\texttt{uint\_t} ~ ~ j \gets \Call{BitFloor}{i} - 1$ \Comment{Num full-bunch segments}
        \State $\texttt{uint\_t} ~ ~ b \gets \Call{BitLength}{j}$ \Comment{Num full bunches}
        \State $\texttt{uint\_t} ~ ~ \colork_b \gets 2^{b}(\colors - b + 1)$ \Comment{Bunch position}
        \State $\texttt{uint\_t} ~ ~ w \gets \colorh - \colort + 1$ \Comment{Segment width}
        \State $\texttt{uint\_t} ~ ~ o \gets w (i - j - 1)$ \Comment{Within-bunch offset}
        \EndIf
        \State $\texttt{uint\_t} ~ ~ p \gets \colorh \bmod w$ \Comment{Within-segment offset}
        \State \Return $\colork_b + o + p$ \Comment{Calculate placement site}
    \end{algorithmic}
\end{minipage}
\end{algorithm}

Algorithm \ref{alg:steady-site-selection} provides a step-by-step listing of site selection calculation $\colorK(\colorT)$, which is $\mathcal{O}(1)$.
Site lookup $\colorL(\colorT)$ is provided in supplementary material, as Algorithm \ref{alg:steady-time-lookup}.
Reference Python implementations appear in Supplementary Listings \ref{lst:steady_site_selection.py} and \ref{lst:steady_time_lookup.py}, as well as accompanying unit tests.
Lookup of ingest time $\colorTbar$ for data item at $\colork$ at time $\colorT$ boils down to decoding its segment/bunch indices and checking whether (if slated) it has yet been replaced during the current epoch $\colort$.
Calculation of site lookup $\colorL(\colorTbar) = \colorTbar_{\colork=0},\;\; \colorTbar_{\colork=1},\;\; \ldots,\;\; \colorTbar_{\colork=\colorS-1}$ proceeds in $\mathcal{O}(\colorS)$ time.

\subsection{Steady Algorithm Criterion Satisfaction}
\label{sec:stready-satisfaction}

In this final subsection, we establish an upper bound on $\mathsf{cost\_steady}(\colorT)$ under the proposed steady curation algorithm.
Figure \ref{fig:hsurf-steady-implementation-satisfaction} plots an example of actual worst gap size over time under this algorithm.

\begin{theorem}[Steady algorithm gap size upper bound]
\label{thm:steady-gap-size}
Under the steady curation algorithm,
\begin{align*}
\mathsf{cost\_steady}(\colorT) \leq 2 \frac{\colorS + 1}{\colorS} \hat{\colorg} + 1,
\end{align*}
where $\hat{\colorg}$ is the optimal lower bound on $\mathsf{cost\_steady}(\colorT)$ given in Equation \ref{eqn:steady-optimal-gap-size}.
\end{theorem}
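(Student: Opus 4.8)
The plan is to prove the bound in two stages. First I would establish the intermediate estimate
\begin{align*}
\mathsf{cost\_steady}(\colorT) \leq 2\left\lfloor \frac{\colorT}{\colorS}\right\rfloor_{\mathrm{bin}} - 1 = 2^{\colort} - 1,
\end{align*}
where $\colort$ is the epoch of $\colorT$; the equality follows from $\left\lfloor \colorT/\colorS\right\rfloor_{\mathrm{bin}} = 2^{\lfloor \log_2 \colorT\rfloor - \colors} = 2^{\colort - 1}$ for $\colorT \geq \colorS$ via Equation \ref{eqn:epoch-defn}. Second, I would convert this epoch-indexed bound into the advertised form by a short floor-function computation against the best-case lower bound $\hat{\colorg} = \lfloor \colorT/(\colorS+1)\rfloor$ of Equation \ref{eqn:steady-optimal-gap-size}.

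For the first stage, the buffer-filling regime $\colort = 0$ (i.e.\ $\colorT < \colorS$) is immediate, since no item has yet been discarded and $\colorg = 0 = 2^{0} - 1$. For $\colort \geq 1$, the crux is to argue that every ingested item with $\colorH(\colorTbar) \geq \colort$ is present in the buffer \emph{throughout} epoch $\colort$. This rests on the two supplementary lemmas: Lemma \ref{thm:steady-hv-geq-epoch} guarantees that the goal set $\{\colorTbar : \colorH(\colorTbar) > \colort - 1\}$ exactly fills the $\colorS$ available sites, while Lemma \ref{thm:steady-hv-elimination} guarantees that the only items overwritten during epoch $\colort$ are those of \hv{} $\colorh = \colort - 1$. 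Hence items of \hv{} at least $\colort$ are never evicted mid-epoch. Since the indices $\colorTbar$ with $\colorH(\colorTbar) \geq \colort$ are exactly those satisfying $\colorTbar \equiv 2^{\colort} - 1 \pmod{2^{\colort}}$, consecutive retained items of this kind sit $2^{\colort}$ apart, so any maximal run of discarded indices has length at most $2^{\colort} - 1$. Retaining any additional items (e.g.\ the not-yet-evicted \hv{}-$(\colort-1)$ items) can only shrink gaps, so $\colorG_{\colorT}(\colorTbar) \leq 2^{\colort} - 1$ for all $\colorTbar$, giving the intermediate estimate.

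For the second stage, write $M = 2^{\colort-1}$, so the intermediate bound reads $2M - 1$. From $\colorT \geq 2^{\colort + \colors - 1} = M\colorS$ I obtain $\hat{\colorg} = \lfloor \colorT/(\colorS+1)\rfloor \geq \lfloor M\colorS/(\colorS+1)\rfloor$. Writing $M = q(\colorS+1) + \rho$ with $0 \leq \rho \leq \colorS$ and using $M\colorS/(\colorS+1) = M - M/(\colorS+1)$, a direct check shows $\lfloor M\colorS/(\colorS+1)\rfloor \geq \colorS(M-1)/(\colorS+1)$ in both cases $\rho = 0$ and $\rho \geq 1$. Rearranging (multiply through by $2(\colorS+1)/\colorS$ and add $1$) yields $2\frac{\colorS+1}{\colorS}\hat{\colorg} + 1 \geq 2(M-1) + 1 = 2^{\colort} - 1 \geq \mathsf{cost\_steady}(\colorT)$, as claimed.

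I expect the main obstacle to lie in the first stage, specifically in cleanly justifying that the retained set dominates $\{\colorTbar : \colorH(\colorTbar) \geq \colort\}$ at \emph{every} instant within an epoch rather than merely at epoch boundaries --- this is where the interaction between ongoing ingestion and the in-progress eviction of \hv{}-$(\colort-1)$ items must be handled, and where the two supplementary lemmas carry the real weight. By contrast, the floor-function algebra of the second stage is routine; the only mild subtlety is confirming the boundary cases ($\colort = 0$ and small $\hat{\colorg}$), where the additive $+1$ slack absorbs the rounding.
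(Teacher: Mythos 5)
Your proposal is correct and follows essentially the same route as the paper's proof: first bound the worst gap by $2^{\colort}-1$ via the observation that all items with \hv{} $\geq \colort$ are retained and occur $2^{\colort}$ steps apart, then close with floor/ceiling algebra relating $2^{\colort}-1$ to $2\frac{\colorS+1}{\colorS}\hat{\colorg}+1$. Your treatment is in fact somewhat more careful than the paper's --- you explicitly invoke Lemmas \ref{thm:steady-hv-geq-epoch} and \ref{thm:steady-hv-elimination} to justify retention throughout an epoch, and you track the floor in $\hat{\colorg}$ exactly rather than simply dropping the ceiling as the paper does --- but the decomposition and key ideas are identical.
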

\begin{proof}
Recall that the time between instances of a data item with \hv{} $\colorH(\colorTbar) = \colorh$ is $2^{\colorh + 1}$ data items.
Recall also that the time elapsed between a \hv{} $\colorh$ and a data item with \hv{} greater than $\colorh$ is $2^{\colorh}$ data items.

Under the proposed algorithm, we retain all data items for hanoi values $\colorh \geq \colort$.
So, retained data items occur at most $2^{\colort}$ time steps apart.
This corresponds to gap size at most $2^{\colort} - 1$.
Finally, we test
\begin{align*}
2 \frac{\colorS + 1}{\colorS} \left\lceil \frac{\colorT - \colorS}{\colorS + 1} \right\rceil + 1
&\stackrel{?}{\geq}
2^{\colort} - 1\\
2 \frac{\colorS + 1}{\colorS} \left\lceil \frac{\colorT - \colorS}{\colorS + 1} \right\rceil
&\stackrel{?}{\geq}
2^{\left\lfloor \log_2(\colorT) \right\rfloor - \colors + 1} - 2 \tag{definition $\colort$, Equation \ref{eqn:epoch-defn}}\\
&\stackrel{?}{\geq}
2^{\left\lfloor \log_2(\colorT) - \log_2(\colorS) \right\rfloor + 1} - 2\\
&\stackrel{?}{\geq}
2\left\lfloor \frac{\colorT}{\colorS} \right\rfloor_{\mathrm{bin}} - 2\frac{\colorS}{\colorS}\\
\frac{\colorS + 1}{\colorS} \frac{\colorT - \colorS}{\colorS + 1}
&\stackrel{?}{\geq}
\frac{\colorT - \colorS}{\colorS}\\
\frac{\colorT - \colorS}{\colorS + 1}
&\stackrel{\checkmark}{\geq}
\frac{\colorT - \colorS}{\colorS + 1}.
\end{align*}
\end{proof}

\section{Stretched Algorithm} \label{sec:stretched}

The stretched criterion favors early data items, targeting a record with gap sizes proportional to data item ingest time $\colorTbar$.
As given in Equation \ref{eqn:stretched-cost} in Section \ref{sec:stream-curation-problem}, the stretched criterion's cost function is the largest ratio of gap size to ingest time,
\begin{align*}
\mathsf{cost\_stretched}(\colorT)
&=
\max\Big\{\frac{\colorG_{\colorT}(\colorTbar)}{\colorTbar} : \colorTbar \in [1 \twodots \colorT)\Big\}.
\end{align*}
For buffer size $\colorS$ and time elapsed $\colorT$, ideal retention would space retained items so that gap size grows proportionally to $\colorTbar$.
Under such a layout, spacing between data items would scale exponentially, and --- counting from zero --- the $n$th retained data item would have ingestion time $\colorT^{n/(\colorS - 1)}$.
Deriving an approximate bound without accounting for discretization effects, gap size ratio would be minimized at best,
\begin{align}
\label{eqn:approx-gap-bound}
\mathsf{cost\_stretched}(\colorT)
&\stackrel{\sim}{\geq}
\colorT^{1/\colorS} - 1.
\end{align}
Lemma \ref{thm:stretched-ideal-strict} works in discretization to prove a strict lower bound on gap size ratio,
\begin{align}
\mathsf{cost\_stretched}(\colorT)
\geq
\frac{
  1
}{
  1 + \colorS
  - \left\lfloor \colorS \log_{\colorT}\Big(
    (\colorT - \colorS)(\colorT^{1/\colorS} - 1) + 1
  \Big)\right\rfloor
}
\geq
\frac{
  1
}{
  1 + \colorS
}.
\label{eqn:stretched-best}
\end{align}
This section proposes a stream curation algorithm tailored to the stretched criterion, achieving gap size ratios no worse than
\begin{align}
\mathsf{cost\_stretched}(\colorT)
&\leq
\min\Big(
  \frac{2^{\colortau + 1}}{\colorS},\;\;
  \frac{2(\colort + \colors)}{\colorS},\;\;
  \frac{4\colort}{\colorS}
\Big)
\label{eqn:stretchednoworse}
\end{align}
over supported epochs $\colort \in [0\twodots\colorS - \colors)$.
This bound ensures gap size ratio ${\colorG_{\colorT}(\colorTbar)}/{\colorTbar} \leq 1$.
More generally, guarantees gap size ratio can be shown guaranteed within a factor of $(1 + 1/\colorS)\times\min(2\colort + 2\colors, \;\; 4\colort, \;\; 2^{\colortau + 1})$ times the optimal bound established in Equation \ref{eqn:stretched-best}.


\subsection{Stretched Algorithm Strategy}
\label{sec:stretched-strategy}

As with the steady algorithm, processing data items $\colorTbar$ based on their hanoi value $\colorH(\colorTbar)$ provides the backbone of our approach to stretched curation.
However, instead of keeping just the $m$ highest \hv{}'s encountered, we approximate a stretched distribution by keeping the first $n$ instances of all encountered \hv{}'s.
Figure \ref{fig:hanoi-intuition-stretched} shows how keeping the first $n$ instances of each \hv{} approximates stretched distribution.

To respect fixed buffer capacity, per-\hv{} capacity $n$ must degrade as we encounter new \hv{}'s.
We thus set out to maintain -- for a declining threshold $n(\colorT)$ --- the set of data items,
\begin{align*}
\mathsf{goal\_stretched}
&\coloneq
\bigcup_{\colorh \geq 0}
\{ \colorTbar = i2^{\colorh + 1} + 2^{\colorh} - 1 \text{ for } i \in [0 \twodots n(\colorT) - 1] : \colorTbar < \colorT \}.
\end{align*}
The set $\mathsf{goal\_stretched}$ is constructed as a union of the smallest $n(\colorT)$ instances of each \hv{}, excluding those not yet encountered at current time $\colorTbar$.
By construction, $\mathsf{goal\_stretched} \subseteq [0 \twodots \colorT)$.
Lemma \ref{thm:stretched-first-n-space} shows setting $n(\colorT) \coloneq 2^{\colors - 1 - \colortau}$ suffices to respect available buffer capacity $\colorS$.


\FloatBarrier  
\begin{figure*}[htbp!]
  \centering
  \begin{subfigure}{0.5\textwidth}
  \includegraphics[width=\textwidth, clip, trim={0 1.25cm 1.25cm 0}]{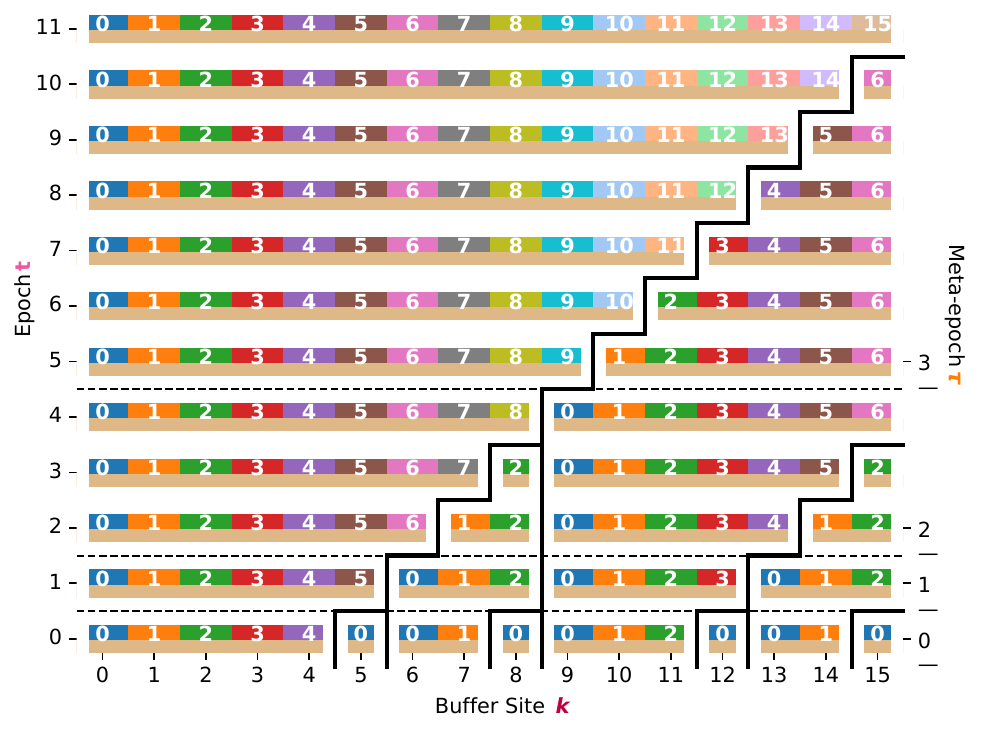}
  \end{subfigure}%
  \begin{subfigure}{0.5\textwidth}
  \includegraphics[width=\textwidth, clip, trim={1.25cm 1.25cm 0 0}]{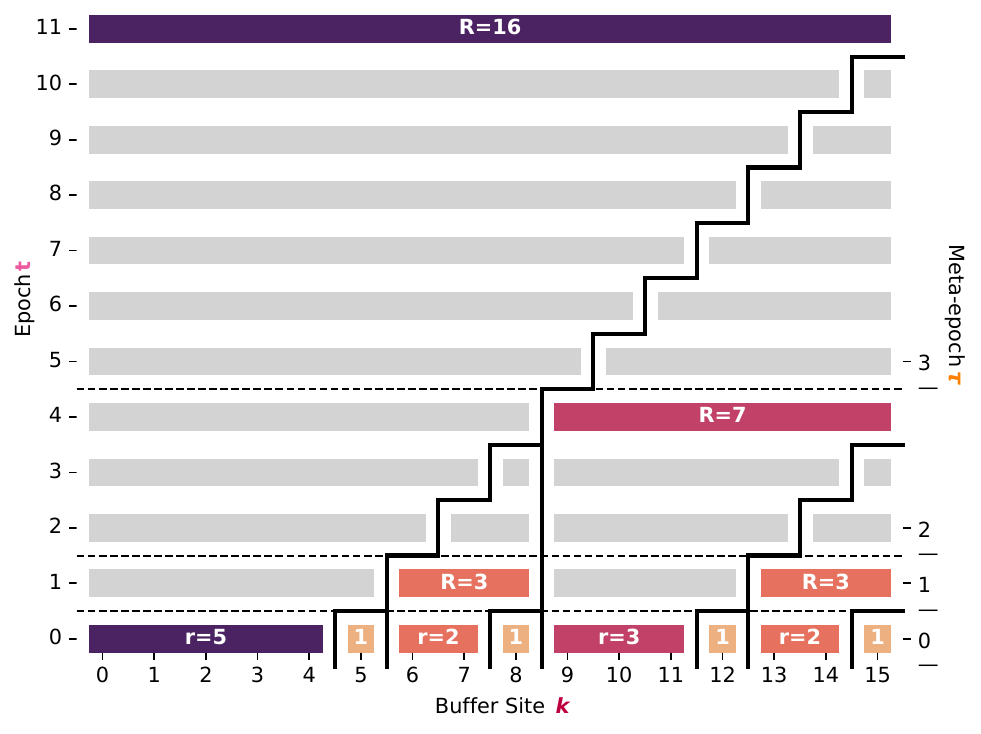}
  \end{subfigure}

\begin{subfigure}{\linewidth}
\begin{tikzpicture}[scale=\linewidth/50cm]

  \begin{scope}


\node[anchor=west, text=black!100] (texta) at (27.0, 3.5) {Bunch $r=1$};
\node[anchor=west, text=black!70] (textb) at (27.0, 2.5) {Bunch $r=2$};
\node[anchor=west, text=black!50] (textc) at (27.0, 1.5) {Bunch $r=3$};
\node[anchor=west, text=black!30] (textd) at (27.0, 0.5) {Bunch $r=5$};

\node[anchor=center,font={\footnotesize}] (textx) at (36.0, 4.5) {\textit{Num Segs.}};
\node[anchor=center,font={\footnotesize}] (textax) at (36.0, 3.5) {\textit{4}};
\node[anchor=center,font={\footnotesize}] (textbx) at (36.0, 2.5) {\textit{2}};
\node[anchor=center,font={\footnotesize}] (textcx) at (36.0, 1.5) {\textit{1}};
\node[anchor=center,font={\footnotesize}] (textdx) at (36.0, 0.5) {\textit{1}};

\node[anchor=center,font={\footnotesize}] (textay) at (40.0, 3.5) {\textit{filled last}};
\node[anchor=center,font={\footnotesize}] (textdy) at (40.0, 0.5) {\textit{filled first}};
\draw[-{stealth[scale=0.2]},shorten >=-1mm, shorten <=-0mm, line width=3pt,draw=MidnightBlue!40](textdy.north) -- (textay.south);

\node[anchor=center,font={\footnotesize}] (textay) at (44.5, 3.5) {\textit{invaded first}};
\node[anchor=center,font={\footnotesize}] (textdy) at (44.5, 0.5) {\textit{invaded last}};
\draw[-{stealth[scale=0.2]},shorten >=-1mm, shorten <=-0mm, line width=3pt,draw=MidnightBlue!45](textay.south) -- (textdy.north);

\filldraw[draw=white,ultra thick,fill=black!100] (8.5,5) -- ++(0.5,-1) -- ++(0.5,1) -- cycle;
\node[anchor=center,text=black!100] (textaa) at (9, 3.5) {S.{} 0};
\filldraw[draw=white,ultra thick,fill=black!100] (12.75,5) -- ++(0.5,-1) -- ++(0.5,1) -- cycle;
\node[anchor=center,text=black!100] (textab) at (13.25, 3.5) {S.{} 1};
\filldraw[draw=white,ultra thick,fill=black!100] (18.5,5) -- ++(0.5,-1) -- ++(0.5,1) -- cycle;
\node[anchor=center,text=black!100] (textac) at (19, 3.5) {S.{} 2};
\filldraw[draw=white,ultra thick,fill=black!100] (22.75,5) -- ++(0.5,-1) -- ++(0.5,1) -- cycle;
\node[anchor=center,text=black!100] (textad) at (23.25, 3.5) {S.{} 3};

\draw[-, line width=2pt, draw=black!100](textaa.east) -- (textab.west);
\draw[-, line width=2pt, draw=black!100](textab.east) -- (textac.west);
\draw[-, line width=2pt, draw=black!100](textac.east) -- (textad.west);
\draw[-, line width=2pt, draw=black!100](textad.east) -- (texta.west);

\filldraw[draw=white,ultra thick,fill=black!70] (10,5) -- ++(1.25,-2) -- ++(1.25,2) -- cycle;
\node[anchor=center,text=black!70] (textba) at (11.25, 2.5) {Seg.{} 0};
\filldraw[draw=white,ultra thick,fill=black!70] (20,5) -- ++(1.25,-2) -- ++(1.25,2) -- cycle;
\node[anchor=center,text=black!70] (textbb) at (21.25, 2.5) {Seg.{} 1};

\draw[-,line width=2pt,draw=black!70](textba.east) -- (textbb.west);
\draw[-,line width=2pt,draw=black!70](textbb.east) -- (textb.west);

\filldraw[draw=white,ultra thick,fill=black!50] (14,5) -- ++(2.15,-3) -- ++(2.15,3) -- cycle;
\node[anchor=center,text=black!50] (textca) at (16, 1.5) {Seg.{} 0};

\draw[-, line width=2pt,draw=black!50](textca.east) -- (textc.west);

\filldraw[draw=white,ultra thick,fill=black!30] (1.5,5) -- ++(3.25,-4) -- ++(3.25,4) -- cycle;
\node[anchor=center,text=black!30] (textda) at (4.75, 0.5) {Segment{} 0};

\draw[-, line width=2pt,draw=black!30](textda.east) -- (textd.west);

\node[anchor=west] (textbunch) at (-1, 4.2) {\phantom{Bunch}};


  \end{scope}
\end{tikzpicture}
\end{subfigure}

\begin{subfigure}{0.5\textwidth}
\includegraphics[width=\textwidth, clip, trim={0 0 1.25cm 11.2cm}]{binder/teeplots/20/surface-size=16+viz=site-reservation-at-ranks-heatmap+ext=.pdf}\vspace{-1ex}
\caption{\footnotesize Sites numbered by reserved hanoi value $\colorHcal_{\colort}(\colork)$ for epochs $\colort=0$ to $\colort=11$.}
\label{fig:hsurf-stretched-intuition-reservations}
\end{subfigure}%
\begin{subfigure}{0.5\textwidth}
\includegraphics[width=\textwidth, clip, trim={1.25cm 0 0 11.2cm}]{binder/teeplots/20/plotter=size+surface-size=16+viz=site-reservation-at-ranks-heatmap+ext=.pdf}\vspace{-1ex}
\caption{\footnotesize Initialized $r$ and mature $R$ reservation segment sizes.}
\label{fig:hsurf-stretched-intuition-reservations-size}
\end{subfigure}\vspace{-1ex}
  \caption{
    \textbf{Stretched algorithm strategy.}
    \footnotesize
    Left panel \ref{fig:hsurf-stretched-intuition-reservations} shows progression of \hv{} reservations $\colorHcal_{\colort}(\colork)$ on a buffer with size $\colorS=16$ across supported epochs $\colort \in [0\twodots\colorS - \colors)$.
    Epoch $\colort$ is indicated on the leftmost axis.
    The rightmost axis, in the right panel, indicates meta-epoch $\colortau$.
    Color coding reflects assigned \hv{}
    Observe, for instance, that four sites, colored dark blue, are reserved for \hv{} $\colorh=0$ during epoch $\colort=0$.
    As shown in the right panel \ref{fig:hsurf-stretched-intuition-reservations-size}, reservation segment bunches are nested recursively, with inner bunches having shorter segments.
    Reservation segments are separated by black lines in both diagrams.
    On the left, inverted triangles schematize the layout of segment bunches, which are nested and discontiguous.
    Bunches are indicated by color code in the right diagram, with segments having same initial size $r$ belonging to the same bunch.
    As epochs elapse, segments grow from initial size $r$ to mature size $R$ and are then invaded to elimination by their larger left neighbor.
    Note how recursive nesting ensures that the shortest segments are eliminated first.
    Note also how sites invaded during the same epoch all share the same reserved \hv{}, causing available sites for that \hv{} to instantaneously halve.
    To ensure it lasts longest, the first item with \hv{} $\colorH(\colorT) = 0$ is placed in the leftmost (and largest) segment $r=5$.
    Subsequent \hv{} instances are accommodated in segment $r=3$, the two $r=2$ segments, and then the four $r=1$ segments.
    Once available segment reservations are filled, subsequent \hv{} instances are discarded without storage.
    Because the segment sizes $r$ mirror the hanoi sequence, expansion of invading segments by one site per epoch $\colort$ ensures buffer space for instances of high \hv{} as they are encountered at later $\colorT$.
  In this manner, layout approximates the first-$n$ \hv{} strategy depicted in Figure \ref{fig:hanoi-intuition-stretched}, with $n$ progressively decreasing as segments are invaded and lost.
  }
  \label{fig:hsurf-stretched-intuition}
\end{figure*}

\subsection{Stretched Algorithm Mechanism}
\label{sec:stretched-mechanism}

\begin{figure*}[htbp!]
  \centering

\begin{minipage}{\textwidth}
  \scriptsize
  \setlength{\tabcolsep}{2.5pt}
  \begin{tabularx}{\textwidth}{
    r
    Y|Y|Y|Y|Y|Y|Y|Y|
    Y|Y|Y|Y|Y Y Y|Y
    |Y|Y|Y|Y|Y|Y|Y
    |Y|Y|Y|Y Y
    }
     { Time $\colorT$} & \textbf{0} & \textbf{1} & \textbf{2} & \textbf{3} & \textbf{4} & \textbf{5} & \textbf{6} & \textbf{7}
    & \textbf{8} & \textbf{9} & \textbf{10} & \textbf{11} & \textbf{12} 
    &  \ldots
    & \textbf{28} & \textbf{29} & \textbf{30} & \textbf{31}
    & \textbf{32} & \textbf{33} & \textbf{34} & \textbf{35}
    & \textbf{36} & \textbf{37} & \textbf{38} & \textbf{39} & \textbf{40}
    & \ldots \\ \hline
   { Epoch $\colort$} & 0 & 0 & 0 & 0 & 0 & 0 & 0 & 0
    & 0 & 0 & 0 & 0 & 0 
    &  \ldots
    & 0 & 0 & 0 & 0
    & 1 & 1 & 1 & 1
    & 1 & 1 & 1 & 1 & 1
    & \ldots \\
     \rowcolor{lightgray!30}
   { Meta-epoch $\colortau$} & 0 & 0 & 0 & 0 & 0 & 0 & 0 & 0
    & 0 & 0 & 0 & 0 & 0 
    &  \ldots
    & 0 & 0 & 0 & 0
    & 1 & 1 & 1 & 1
    & 1 & 1 & 1 & 1 & 1
    & \ldots \\
    { \scriptsize$\colorH(\colorT)$} & 0 & 1 & 0 & 2 & 0 & 1 & 0 & 3
    & 0 & 1 & 0 & 2 & 0 
    &  \ldots
    & 0 & 1 & 0 & 5
    & 0 & 1 & 0 & 2
    & 0 & 1 & 0 & 3 & 0
    & \ldots \\ \hline
     { \scriptsize $\colorK(\colorT)$} & \textbf{0} & \textbf{1} & \textbf{17} & \textbf{2} & \textbf{10} & \textbf{18} & \textbf{25} & \textbf{3}
     & \textbf{7} & \textbf{11} & \textbf{14} & \textbf{19} & \textbf{22} & \ldots
 & \textbf{28} & \textbf{30} & \textbf{31} & \textbf{5} & {\tiny \texttt{\textbf{null\hphantom{}}}}  
 & {\tiny \texttt{\textbf{null\hphantom{}}}} & {\tiny \texttt{\textbf{null\hphantom{}}}}  & \textbf{9} & {\tiny \texttt{\textbf{null\hphantom{}}}}
 & {\tiny \texttt{\textbf{null\hphantom{}}}} & {\tiny \texttt{\textbf{null\hphantom{}}}} & \textbf{13} & {\tiny \texttt{\textbf{null\hphantom{}}}}  &\ldots
  \end{tabularx}
  \vspace{-2ex}
\end{minipage}
\begin{subfigure}{\textwidth}
\caption{\footnotesize Stretched policy site selection $\colorK(\colorT)$ with buffer size $\colorS=32$. Ingests marked \nullval{} indicate item discarded without storing.}
\label{fig:hsurf-stretched-implementation-site-selection}
\end{subfigure}
\vspace{-3ex}

\begin{subfigure}[b]{\linewidth}
\includegraphics[width=\linewidth]{
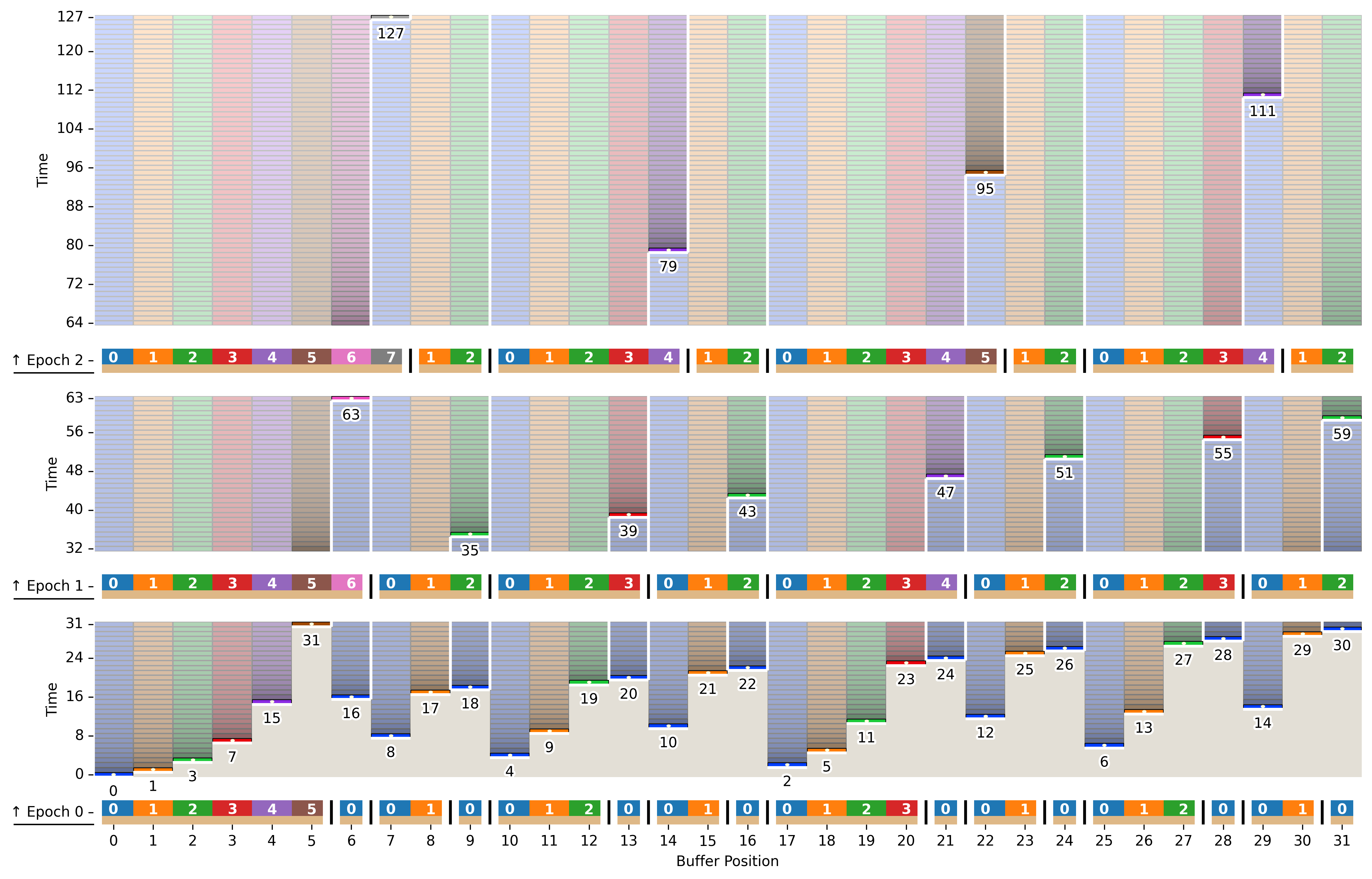}
\vspace{-4.5ex}\caption{\footnotesize
  Buffer composition across time, split by epoch with data items color-coded by hanoi value $\colorH(\colorTbar)$.
}
\label{fig:hsurf-stretched-implementation-schematic}
\end{subfigure}

\vspace{0.5ex}
\begin{minipage}[]{\textwidth}
 \vspace{-2pt}
  \begin{subfigure}[t]{0.65\linewidth}
    \vspace{0pt}
    \centering
  \includegraphics[width=0.88\linewidth,clip]{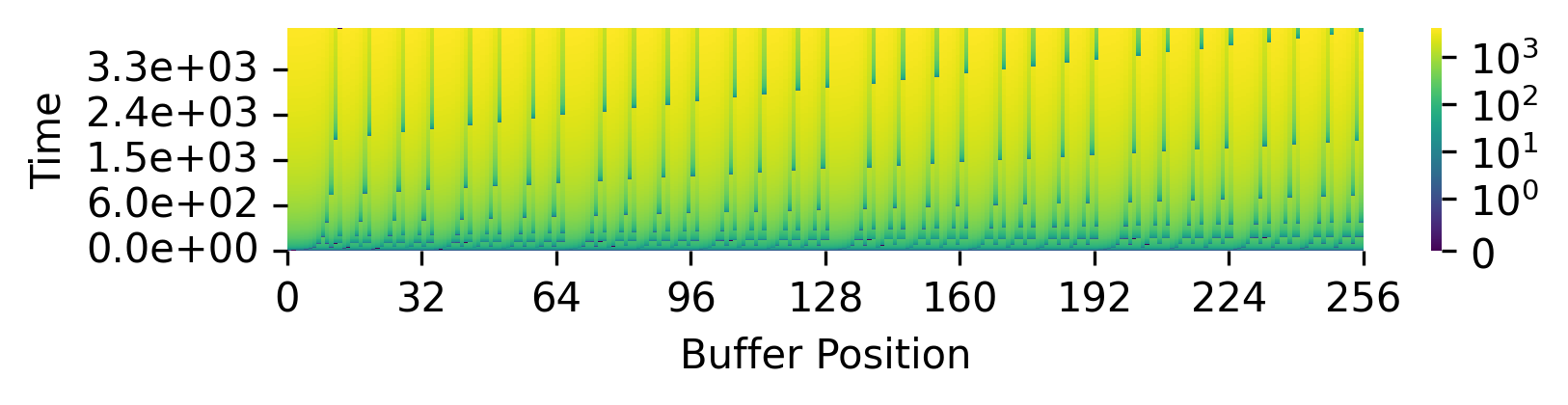}  
  \end{subfigure}%
  \begin{subfigure}[t]{0.35\linewidth}
  \vspace{-2pt}
  \caption{%
    \footnotesize
    Stored data item age across buffer sites for buffer size $\colorS=256$ from $\colorT=0$ to 4,096.
  }
  \label{fig:hsurf-stretched-implementation-heatmap}
\end{subfigure}
\end{minipage}

  \vspace{-0.5ex}
   \begin{minipage}[]{\textwidth}
   \vspace{-2pt}
  \begin{subfigure}[t]{0.65\linewidth}
  \vspace{0pt}
    \centering
    \includegraphics[width=0.88\linewidth,clip]{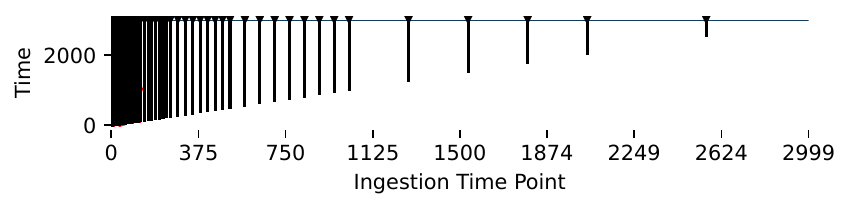}
  \end{subfigure}%
  \begin{subfigure}[t]{0.35\linewidth}
  \vspace{-2pt}
  \caption{%
    \footnotesize
    Data item retention time spans by ingestion time point for buffer size $\colorS=64$ from $\colorT=0$ to 3,000.
  }
  \label{fig:hsurf-stretched-implementation-dripplot}
  \end{subfigure}
  \end{minipage}

  \vspace{-0.5ex}
 \begin{minipage}[]{\textwidth}
 \vspace{-2pt}
\begin{subfigure}[t]{0.65\linewidth}
\vspace{0pt}
  \centering
  \includegraphics[width=0.88\linewidth,clip]{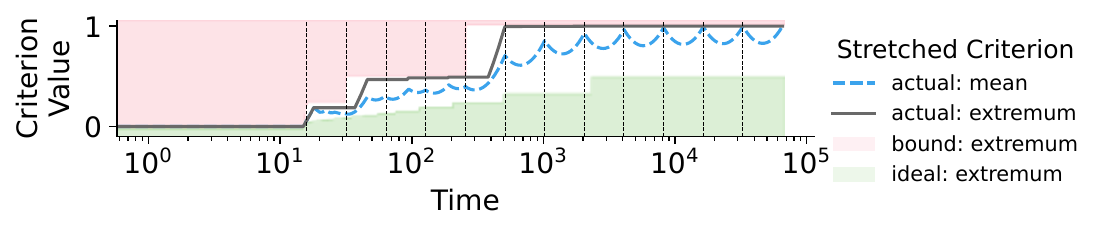}
\end{subfigure}%
\begin{subfigure}[t]{0.35\linewidth}
\vspace{-2pt}
\caption{%
  \footnotesize
  Stretched criterion satisfaction across time points for buffer size $\colorS=16$.
}
\label{fig:hsurf-stretched-implementation-satisfaction}
\end{subfigure}
\end{minipage}

\vspace{-2ex}\caption{%
  \textbf{Stretched algorithm implementation.}
  \footnotesize
  Top panel \ref{fig:hsurf-stretched-implementation-site-selection} enumerates initial stretched policy site selection on a 32-site buffer.
  Panel \ref{fig:hsurf-stretched-implementation-schematic} summarizes how data items are ingested and retained over time within a 32-site buffer, color-coded by data items' hanoi values $\colorH(\colorTbar)$.
  Between $\colorT=0$ and $\colorT=127$, time is segmented into epochs $\colort=0$, $\colort=1$, and $\colort=2$; strips before  each epoch show hanoi values assigned to each buffer site during that epoch.
  Time increases along the $y$ axis.
  Rectangles with small white ``$\blkhorzoval$'' symbol denote buffer site where the ingested data item from each timestep $\colorT$ is placed.
  Reservation segments occur in five recursively nested ``bunches'' --- (1) one 6-site reservation segment, (2) one 4-site reservation segment, (3) two 3-site segments, (4) four 2-site segments, and (5) eight 1-site segments.
  At each epoch, data items are filled into sites newly assigned for their ingestion-order hanoi value from left to right.
  In epoch 0, all sites are filled with a first data item.
  At subsequent epochs, the first site of all innermost-nested segments is ``invaded'' by new high \hv{} sites added to other segments.
  When data items are placed, they remain retained until invaded by a higher-\hv{} data item.
  This process continues until only one segment remains, as shown in Figure \ref{fig:hsurf-stretched-intuition-reservations}.
  Heatmap panel \ref{fig:hsurf-stretched-implementation-heatmap} shows the evolution of data item age at each site on a 256-bit field over the course of 4,096 time steps.
  Dripplot panel \ref{fig:hsurf-stretched-implementation-dripplot} shows retention spans for 3,000 ingested time points.
  Vertical lines span durations between ingestion and elimination for data items from successive time points.
  Time points previously eliminated are marked in red, although in this case they are largely obscured by crowding in small $\colorTbar$.
  Lineplot panel \ref{fig:hsurf-stretched-implementation-satisfaction} shows stretched criterion satisfaction on a 16-bit surface over $2^{16}$ timepoints.
  Lower and upper shaded areas are best- and worst-case bounds, respectively.
  }
\label{fig:hsurf-stretched-implementation}

\end{figure*}

Be reminded that our stretched retention plan is to guarantee space for the first $n(\colorT) =  2^{\colors - 1 - \colortau}$ instances of each hanoi value.
A naive layout might reserve a full $n(\colorT)$ sites for all $2^{\colors + \colort}$ \hv{}'s $\colorh$ that have been encountered by time $\colorT$.
However, such a naive approach would exceed available buffer capacity.
For example, at $\colortau=\colort=0$,
\begin{align*}
2^{\colors - 1 - \colortau} \times 2^{\colors + \colort}
&\geq
2^{2\colors - 1}\\
&\geq
\frac{\colorS^{2}}{2}\\
&> \colorS \text{ for } \colorS > 1.
\end{align*}
A more sophisticated approach will be needed, which we develop next.

\subsubsection{Stretched Algorithm Layout at $\colort,\colortau=0$}

In motivating a more apt stretched layout strategy, begin by restricting focus to epoch $\colort=\colortau=0$, where $\colorT < \colorS$.
Assume that we assign one site to each data item $0\leq \colorTbar < \colorS$ and arrange site assignments according to \hv{} $\colorh = \colorH(\colorTbar)$.
Suppose organization of reserved sites into contiguous segments, with no two items in the same segment allowed to share the same hanoi value $\colorh$.

Under this scheme, we will have at least $\colorS/2$ segments --- one per \hv{} $\colorh=0$ instance encountered.
In constructing segments, half of these $\colorh=0$ segments can be augmented with a site to house one of the $\colorS/4$ \hv{} $\colorh=1$ data items.
We can continue, and further augment $\colorS/8$ segments with \hv{} $\colorh=2$, etc.
Continuing this pattern to place all encountered \hv{} $\colorh\leq\colors$ yields segment sizes that turn out to recapitulate the hanoi sequence.
Special-casing the largest segment, constructed segment sizes can be enumerated as
\begin{align}
\colors + 1,\;\; \colorH(0) + 1,\;\; \colorH(1) + 1,\;\; \ldots,\;\; \colorH(\colorS/2 - 2) + 1.
\label{eqn:stretched-segment-sizes}
\end{align}
These segment sizes can be shown to exactly fill available buffer space $\colorS$,
\begin{align*}
\colors + 1
+  \sum_{\colorh = 0}^{\colors - 2}
2^{\colors - 2 - \colorh} \times (\colorh + 1)
&=
\colors + 1 +
2^{\colors} - \colors - 1
\stackrel{\checkmark}{=}
\colorS.
\end{align*}

Thus far, we have only considered segment sizes --- and not discussed the arrangement of segment order within buffer space $\colorS$.
One naive approach would simply order segments by length, as previously in Section \ref{sec:steady}.
However, as we will see shortly, it turns out that adopting the hanoi sequence's natural ordering (as done in Formula \ref{eqn:stretched-segment-sizes}) better serves our objectives.
The bottom row (``epoch 0'') of Figure \ref{fig:hsurf-stretched-intuition-reservations} shows application of this layout strategy to a 32-site buffer, with segments sized and arranged directly as enumerated in Formula \ref{eqn:stretched-segment-sizes}.

\subsubsection{Stretched Algorithm Layout at $\colort,\colortau\geq1$}

What about $\colorT \geq \colorS$ (i.e., $\colort \geq 1$)?
At epoch $\colort=\colortau=0$, we have successfully guaranteed $n(\colorT) = 2^{\colors - 1 - \colortau} = \colorS / 2$ reserved sites per hanoi value.
To satisfy $\mathsf{goal\_stretched}$ at $\colort=\colortau=1$, we only need to guarantee $n(\colorT) = \colorS/4$ reserved sites --- half as many as at $\colort=\colortau=0$.
So, half of our $S/2$ sites reserved to \hv{} $\colorh=0$ may be freed up.
One way to do this is by releasing all singleton segments containing \textit{only} \hv{} $\colorh=0$.

Because singleton segments intersperse all other segments, their elimination makes space for all remaining segments to ``invade'' by growing one site.
Sticking with our convention of at most one site with each \hv{} $\colorh$ per reservation segment, invading segments accrue space to host an additional high hanoi value data item.
For instance, the largest segment will grow a site reserved to \hv{} $\colorh=\colors +1$.
Two reservation sites will be added for \hv{} $\colorh=\colors - 1$, four for \hv{} $\colorh=\colors - 2$, etc. --- crucially, mirroring the incidence counts for these \hv{}'s during epoch $\colort=1$.

In subsequent epochs $\colort>1$, we can continue dissolving the smallest, innermost-nested reservation segments to grow capacity for new high-\hv{} data items.
Figure \ref{fig:hsurf-stretched-intuition-reservations} shows several steps through this ``invasion'' process on a 32-site buffer.
At final epoch $\colort=\colorS-\colors - 1$ (i.e., $\colorT \approx 2^{\colorS - 1}$), the proposed process of progressive, nested segment subsumption culminates to a single reservation segment containing one site for each \hv{} $0 \leq \colorh < \colorS$.

We will next show that meta-epochs $\colortau$, as defined earlier in Section \ref{sec:notation-metaepoch}, correspond precisely to the timing with which successive inner segments are subsumed.

\begin{lemma}[Meta-epochs $\colortau$ correspond to segment subsumption cycles]
\label{thm:stretched-meta-epoch}

The timing of meta-epoch $\colortau$, defined in Section \ref{sec:notation-metaepoch} as lasting $2^{\colortau} - 1$ epochs for $\colortau\geq1$, corresponds to the time window during which the reservation segments initialized with size $r=\colortau$ are removed through ``invasion.''
\end{lemma}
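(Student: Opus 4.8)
The plan is to prove the correspondence by induction on the meta-epoch index $\colortau$, carrying a single invariant that records, at the first epoch of each meta-epoch, exactly which reservation segments survive and what size each has grown to. Recall from Formula \ref{eqn:stretched-segment-sizes} that the non-special segments are indexed left-to-right by $i \in [0 \twodots \colorS/2 - 1)$ with initial size $r = \colorH(i) + 1$, so a segment has initial size $r$ precisely when $\colorH(i) = r - 1$; the special leftmost segment has $r = \colors + 1$ and (since $\colortau < \colors$ on the supported domain) survives throughout. I would establish the invariant $H(\colortau)$: at epoch $\colort = 2^{\colortau} - \colortau$ (the start of meta-epoch $\colortau$, per Formula \ref{eqn:meta-epoch-defn}), the surviving segments are exactly those of initial size $r \geq \colortau$, and each survivor of initial size $r$ has current size $r + (2^{\colortau} - \colortau - 1)$. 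The base case $\colortau = 1$ is immediate: at epoch $0$ nothing has been invaded, every segment survives, and each sits at its initial size $r + 0$.

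The heart of the argument is a spatial claim about the layout at the start of meta-epoch $\colortau$. Since invasion only lets a segment absorb an adjacent neighbor and never reorders segments, the left-to-right order of survivors is inherited from the initial layout and can be read off directly from the index $i$. Using the trailing-zeros characterization $\colorH(i) \geq v \iff 2^v \mid (i+1)$ implied by Formula \ref{eqn:hanoi-defn}, the surviving non-special segments under $H(\colortau)$ are those with $\colorH(i) \geq \colortau - 1$, i.e. with $i + 1 = m\,2^{\colortau-1}$ for $m = 1, 2, \ldots$; writing $\colorH(i) = (\colortau - 1) + \nu_2(m)$, where $\nu_2$ is the $2$-adic valuation (number of trailing binary zeros), shows that reading left to right the survivors alternate between initial size exactly $\colortau$ (odd $m$) and strictly larger (even $m$, or the special segment). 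Hence the $r = \colortau$ segments are precisely the innermost/smallest survivors, and each has a strictly larger immediate left neighbor — the survivor at index $m-1$, or the special segment when $m=1$. This licenses ``invasion by the larger left neighbor'' for exactly the $r = \colortau$ segments and pairs each shrinking segment bijectively with a distinct growing neighbor.

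With the spatial structure in hand, the dynamics within meta-epoch $\colortau$ follow by bookkeeping. By $H(\colortau)$ the $r = \colortau$ segments enter the meta-epoch at size $\colortau + (2^{\colortau} - \colortau - 1) = 2^{\colortau} - 1$, which I would record as the mature size $R(\colortau) = 2^{\colortau} - 1$ advertised in Table \ref{tab:notation}. Each such segment then surrenders its leftmost site once per epoch to its larger left neighbor, so after exactly $2^{\colortau} - 1$ epochs — the stipulated length of meta-epoch $\colortau$ — it reaches size $0$ and is removed, while every $r > \colortau$ neighbor absorbing one of them grows by exactly $1$ per epoch, hence by $2^{\colortau} - 1$ in total. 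Re-indexing to the start of meta-epoch $\colortau+1$ at epoch $2^{\colortau+1} - (\colortau+1)$ and checking $r + (2^{\colortau} - \colortau - 1) + (2^{\colortau} - 1) = r + (2^{\colortau+1} - (\colortau+1) - 1)$ reproduces $H(\colortau+1)$, closing the induction and showing that the segments of initial size $r = \colortau$ are exactly those eliminated during meta-epoch $\colortau$.

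I expect the main obstacle to be the spatial claim: rigorously pinning down, after the earlier meta-epochs have deleted all segments of size $< \colortau$ and merged the layout, that the remaining segments sit in the alternating small/large pattern predicted by the $\nu_2$-parity of $m$, and that the invasion pairing stays a clean bijection all the way to the right boundary, where the finite buffer and the restricted domain $\colort \in [0 \twodots \colorS - \colors)$ must be handled so that no growing segment is left without a partner to absorb. Everything downstream of that claim is size arithmetic that the invariant $H(\colortau)$ propagates automatically.
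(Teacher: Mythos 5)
Your proof is correct and follows essentially the same route as the paper's: both arguments reduce the claim to showing that a segment initialized at size $r$ grows by one site per epoch to mature size $R(r) = 2^{r} - 1$ before being frozen, and that its elimination then takes exactly $R(r)$ epochs, matching the stipulated meta-epoch duration. Your version packages this as an explicit induction on $\colortau$ with a size invariant and supplies a $2$-adic-valuation justification of the alternating small/large adjacency of survivors, a spatial fact the paper simply asserts from the recursively nested layout; this is added rigor rather than a different idea.
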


\begin{proof}

Recall that under the stretched algorithm's proposed layout strategy, buffer space is filled without any overwrites during epoch 0.
Then, during subsequent epochs, half of segments (designated ``invading'' segments) grow by addition of new high-\hv{} sites.
The other half of reservation segments are subsumed one site at a time, successively losing low-\hv{} sites to their invading neighbors.
Note that ``invaded'' segments are not allowed to add high-\hv{} sites --- during the invasion process, they are frozen while being eliminated.

By specification, ``invaded'' segments are always those of smallest remaining size.
Owing to the recursively nested structure of segment layout, smallest-remaining segments are always interspersed every second and always constitute half of active segments.

Because invading segments grow by exactly one buffer site per epoch, the number of epochs $\colort$ it takes for a reservation segment to be invaded to elimination corresponds exactly to the invaded segment's reservation size at invasion outset.
Our proof objective can thus be recast as determining the maximal ``mature''' size $R(r)$ reached by segments initialized size $r$ at epoch $\colort=0$ before frozen for elimination.

Recall from Section \ref{sec:notation-metaepoch} that the duration of meta-epoch $\tau$, $|\colort \in \colortausetoft|$, is $2^{\colortau} - 1$.
For reservation segments with $r=1$ (which are invaded in epoch $\colort=1$ and meta-epoch $\colortau = 1$), our goal is therefore to show $|\colort \in \colortausetoft| = 2^{\colortau} - 1$ matches $R(r)$ by showing $R(r) = 2^{r} - 1$.
As already mentioned, initialized-singleton $r=1$ segments are always invaded first, in epoch $\colort=1$.
Trivially, these segments also have $R(1) = 1$. on account of never having the opportunity to act as an invader.
Segments initialized at size $r=2$ are invaded next.
These segments acted as invader during epoch $\colort=1$, and so grew to size $R(2) = 3$.
Note that $R(1) \stackrel{\checkmark}{=} 2^1 - 1$ and $R(2) \stackrel{\checkmark}{=} 2^2 - 1$.

Subsequent segments $r>2$ grow exponentially --- having invaded segments that themselves already grew by invasion.
For instance, segments $r=3$ begin by invading their singleton neighbors $r=1$ during epoch $\colort=1$.
Then $r=3$ segments invade segments that began as $r=2$.
Thus, for $r=3$,
\begin{align*}
R(3)
&= 3 + R(1) + R(2)\\
&= 3 + 1 + 2 + 1\\
&\stackrel{\checkmark}{=} 2^3 - 1.
\end{align*}

This pattern generalizes across initialized segment sizes $r$ as
\begin{align*}
r + \sum_{j=1}^{r-1} j \times 2^{r-1-j}
&\stackrel{\checkmark}{=} 2^{r} - 1.
\end{align*}

\end{proof}




With relationship between segment subsumption and meta-epoch $\colortau$ thus established, Lemma \ref{thm:stretched-discarded-incidence-count} shows that our scheme maintains reservation layout sufficient to accommodate at least $n(\colorT) = 2^{\colors - 1 - \colortau}$ items of each hanoi value.

\subsubsection{Stretched Algorithm Implementation}
\label{sec:stretched-implementation}

Having determined reservation segment layout strategy, the remaining details of site selection can be addressed succinctly.

As we encounter data items with $\colorH(\colorTbar) = \colorh$, we fill reserved sites for that item's \hv{} in descending order of initialized segment size $r$.
Among same-size segments, we simply fill from left to right.
As invasion eliminates the smallest initialized segments first, this approach guarantees retention of the oldest data items with $\colorH(\colorTbar) = \colorh$.
We may thus reinterpret Lemma \ref{thm:stretched-discarded-incidence-count} as providing guarantees on the first $n$ instances of each \hv{} retained.
Once sites reserved to \hv{} $\colorh$ fill, it is necessary to discard further instances $\colorH(\colorTbar) = \colorh$ without storage.
Figure \ref{fig:hsurf-stretched-implementation-schematic} illustrates the resulting site selection process $\colorK(\colorT)$ over epochs $\colort \in \{0,1,2\}$ on an example buffer, size $\colorS=32$.
Algorithm \ref{alg:stretched-site-selection} provides a step-by-step listing of the stretched site selection procedure $\colorK(\colorT)$, which is $\mathcal{O}(1)$.

\begin{algorithm}[H]
\caption{Stretched algorithm site selection $\colorK(\colorT)$.\\ \footnotesize Supplementary Algorithm \ref{alg:stretched-time-lookup} gives stretched algorithm site lookup $\colorL(\colorT)$. Supplementary Listings \cref{lst:stretched_site_selection.py,lst:stretched_time_lookup.py} provide reference Python code.}
\label{alg:stretched-site-selection}
\begin{minipage}{0.53\textwidth}
    \hspace*{\algorithmicindent} \textbf{Input:} $\colorS \in \{2^{\mathbb{N}}\},\;\; \colorT \in \mathbb{N}$ \Comment{Buffer size and current logical time}\\
    \hspace*{\algorithmicindent} \textbf{Output:} $\colork \in [0 \twodots \colorS - 1) \cup \{\nullval\}$ \Comment{Selected site, if any}
    \begin{algorithmic}[1]
        \State $\texttt{uint\_t} ~ ~ \colors \gets \Call{BitLength}{\colorS} - 1$
        \State $\texttt{uint\_t} ~ ~ \colort \gets \max(0,\;\; \Call{BitLength}{\colorT} - \colors)$ \Comment{Current epoch}
        \State $\texttt{uint\_t} ~ ~ \colorh \gets \Call{CountTrailingZeros}{\colorT + 1}$ \Comment{Current \hv{}}
        \Statex
        \State $\texttt{uint\_t} ~ ~ i \gets \Call{RightShift}{\colorT, \;\; \colorh + 1}$ \Comment{Hanoi value incidence (i.e., num seen)}
        \State $\texttt{bool\_t} ~ ~ \epsilon_{\colortau} \gets \Call{BitFloorSafe}{2\colort} \;\; > \;\; \colort + \Call{BitLength}{\colort}$ \Comment{Correction factor}
        \State $\texttt{uint\_t} ~ ~ \colortau \gets  \Call{BitLength}{\colort} - \Call{I}{\epsilon_{\colortau}}$ \Comment{Current meta-epoch}
        \State $\texttt{uint\_t} ~ ~ B \gets \min(1,\;\; \Call{RightShift}{\colorS, \;\; \colortau + 1})$ \Comment{Num bunches available to \hv}
        \If{$i \geq B$} \Comment{If seen more than sites reserved to \hv{}\;\ldots}
            \State \Return \nullval \Comment{\ldots discard without storing}
        \EndIf
        \Statex
        \State $\texttt{uint\_t} ~ ~ b_l \gets i$ \Comment{Logical bunch index, in order filled \ldots}
        \Statex \Comment{\ldots i.e., increasing nestedness/decreasing init size $r$}
        \Statex
        \Statex \Comment{Need to calculate physical bunch index\ldots}
        \Statex \Comment{\ldots i.e., among bunches left-to-right in buffer space}
        \Statex
        \State $\texttt{uint\_t} ~ ~ v \gets \Call{BitLength}{b_l}$ \Comment{Nestedness depth level for physical bunch}
        \State $\texttt{uint\_t} ~ ~ w \gets \Call{RightShift}{\colorS, \;\; v} \;\; \times \;\;\Call{I}{v > 0}$ \Comment{Num bunches spaced between bunches in same nest level}
        \State $\texttt{uint\_t} ~ ~ o \gets 2w$  \Comment{Offset of nestedness level in physical bunch order}
        \State $\texttt{uint\_t} ~ ~ p \gets b_l - \Call{BitFloorSafe}{b_l}$ \Comment{Bunch position within nestedness level}
        \State $\texttt{uint\_t} ~ ~ b_p \gets o + wp$ \Comment{Physical bunch index\ldots}
        \Statex \Comment{\ldots i.e., in left-to-right buffer space ordering}
        \Statex
        \Statex \Comment{Need to calculate buffer position of $b_p$\textsuperscript{th} bunch}
        \Statex
        \State $\texttt{uint\_t} ~ ~ \epsilon_{\colork_b} = \Call{I}{b_l > 0}$  \Comment{Correction factor, 0\textsuperscript{th} bunch (i.e., bunch $r=\colors$ at site $\colork=0$)}
        \State $\texttt{uint\_t} ~ ~ \colork_b \gets \Call{BitCount}{2b_p +(2\colorS - b_p)} - 1 - \epsilon_{\colork_b}$  \Comment{Site index of bunch}
        \Statex
        \State \Return $\colork_b + \colorh$ \Comment{Calculate placement site, \hv{} $\colorh$ is offset within bunch}
    \end{algorithmic}
\end{minipage}
\end{algorithm}

Stretched site lookup $\colorL(\colorT)$ is provided in supplementary material, as Algorithm \ref{alg:stretched-time-lookup}.
Reference Python implementations appear in Supplementary Listings \ref{lst:stretched_site_selection.py} and \ref{lst:stretched_time_lookup.py}, as well as accompanying tests.
The data item $\colorTbar$ present at buffer site $\colork$ at time $\colorT$ can be determined by decoding that site's segment index and checking whether (if slated) it has yet been replaced during the current epoch $\colort$.
Both site selection and ingest time calculation can be accomplished through fast $\mathcal{O}(1)$ binary operations (e.g., bit mask, bit shift, count leading zeros, popcount).

\subsection{Stretched Algorithm Criterion Satisfaction}
\label{sec:stretched-satisfaction}

In this final subsection, we establish an upper bound on $\mathsf{cost\_stretched}(\colorT)$ for a buffer of size $\colorS$ at time $\colorT$ under the proposed stretched curation algorithm.

\begin{theorem}[Stretched algorithm gap size ratio upper bound]
\label{thm:stretched-gap-size}
Under the stretched curation algorithm, gap size ratio is bounded according to Equation \ref{eqn:stretchednoworse}.
\end{theorem}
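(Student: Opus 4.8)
The plan is to reduce the cost to a statement about the spacing of retained items, recast in terms of the integers $x = \colorTbar + 1$, and then exploit a dyadic regularity of the retained set. First I would invoke Lemma~\ref{thm:stretched-discarded-incidence-count} — together with the implementation note that the \emph{oldest} instances of each \hv{} are the ones kept — to establish that at logical time $\colorT$ the retained set contains $\mathsf{goal\_stretched}$ with $n = n(\colorT) = 2^{\colors - 1 - \colortau}$. Since additional retained items can only shrink gaps, it suffices to bound $\mathsf{cost\_stretched}(\colorT)$ under the assumption that \emph{exactly} $\mathsf{goal\_stretched}$ is retained. Writing $x = \colorTbar + 1$ and $M = 2n = 2^{\colors - \colortau}$, a direct inspection of the retained positions $\colorTbar = i2^{\colorh+1} + 2^{\colorh} - 1$ shows that $\colorTbar$ is retained iff the odd part of $x$ is strictly less than $M$; I will call such $x$ ``good.''

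Next I would translate the cost into gap language. Every maximal run of discarded items lies between two consecutive good values $x_1 < x_2$, has length $x_2 - x_1 - 1$, and contributes its largest ratio at its earliest index, so that $\mathsf{cost\_stretched}(\colorT) = \max (x_2 - x_1 - 1)/x_1$ taken over consecutive good pairs (here $x_1 \ge 1$, since $\colorTbar = 0$ is always retained). The crux is then a single gap lemma: for any good $x_1$ with $k = \lfloor \log_2 x_1\rfloor$, the next good value satisfies $x_2 \le x_1 + 2^{k}/n$. I would prove this by observing that $g := 2^{k}/n = 2^{k - \colors + 1 + \colortau}$ divides $2^k$, and that every multiple $g(n + m)$ with $m \in [0 \twodots n]$, which ranges over all of $[2^{k} \twodots 2^{k+1}]$, has odd part equal to the odd part of $n+m$; since $n + m \le M$ and $M$ is a power of two, this odd part is strictly less than $M$, so the multiple is good. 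These good multiples tile $[2^{k} \twodots 2^{k+1}]$ at spacing $g$, hence the next good value above any $x_1$ in this dyadic band is at most $g$ away.

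With the gap lemma in hand, $2^{k} \le x_1$ gives $x_2 - x_1 \le g = 2^{k}/n \le x_1/n$, whence $(x_2 - x_1 - 1)/x_1 \le 1/n$ and therefore $\mathsf{cost\_stretched}(\colorT) \le 1/n = 2^{\colortau + 1}/\colorS$, the first term of Equation~\ref{eqn:stretchednoworse}. Finally I would convert this into the stated minimum by applying the meta-epoch bound $\colortau \le \min(\log_2(\colort + \colors),\; \log_2 \colort + 1)$ from Supplementary Lemma~\ref{thm:meta-epoch-bound}, which yields $2^{\colortau+1} \le \min(2(\colort + \colors),\; 4\colort)$ and hence $2^{\colortau+1}/\colorS = \min\big(2^{\colortau+1}/\colorS,\; 2(\colort+\colors)/\colorS,\; 4\colort/\colorS\big)$, exactly the right-hand side of Equation~\ref{eqn:stretchednoworse}.

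The main obstacle I anticipate is the first step: carefully justifying that the retained set is genuinely a superset of $\mathsf{goal\_stretched}$ at \emph{every} time within a meta-epoch — including immediately after a meta-epoch boundary, where $n$ halves and held-over items from the previous epoch must be accounted for — so that the clean odd-part characterization of ``good'' is valid. The dyadic gap lemma itself is elementary once that characterization is fixed; the small edge cases ($x_1 < M$, where every index is retained and gaps vanish, and $n = 1$ at $\colortau = \colors - 1$, where the good values are exactly the powers of two and the ratio stays strictly below $1$) should be dispatched separately but pose no real difficulty.
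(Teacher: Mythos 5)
Your proposal is correct and follows essentially the same route as the paper: the paper likewise reduces to (i) the retained set containing the first $n = 2^{\colors-1-\colortau}$ instances of each \hv{} (Lemma \ref{thm:stretched-discarded-incidence-count}), (ii) a $1/n$ bound on gap size ratio for that set (Lemma \ref{thm:gap-size-ratio-stretched}, whose proof via the set $\{j'2^{\colorh'}-1 : j' \in [1 \twodots 2n]\}$ and the binary ceiling $\lceil (\colorTbar+1)/(2n)\rceil_{\mathrm{bin}}$ is exactly your odd-part/dyadic-tiling argument in different clothing), and (iii) the meta-epoch bound of Lemma \ref{thm:meta-epoch-bound} to convert $2^{\colortau+1}/\colorS$ into the three-way minimum. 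Your inlined presentation merely fuses the paper's Lemmas \ref{thm:retained-equivalence-stretched} and \ref{thm:gap-size-ratio-stretched} into one step, and you correctly flag the meta-epoch-transition issue as the part deferred to Lemma \ref{thm:stretched-discarded-incidence-count}.
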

\begin{proof}

Lemma \ref{thm:gap-size-ratio-stretched} establishes that gap size ratio is bounded below by $1/n$ if the first $n$ instances of each \hv{} $\colorh$ are retained.
Substituting expressions for the number of sites reserved per \hv{} derived in Lemma \ref{thm:stretched-discarded-incidence-count} and Corrolary \ref{thm:stretched-reservation-count} gives
\begin{align*}
  \mathsf{cost\_stretched}(\colorTbar)
  &\leq
  \Big[
    \max\Big(
      2^{\colors - 1 - \colortau},\;\;
      \frac{\colorS}{2(\colort + \colors)},\;\;
      \frac{\colorS}{4\colort}
    \Big)
  \Big]^{-1}.
\end{align*}

Simplifying resolves the result.

\end{proof}

During early epoch $\colort = 1$, $\mathsf{cost\_stretched}(\colorT) \leq 4/\colorS$.
Likewise, at the opposite extremum, $\mathsf{cost\_stretched}(\colorT) \leq 1$ during the last supported meta-epoch $\colortau = \colors - 1$.
Figure \ref{fig:hsurf-stretched-implementation-satisfaction} shows algorithm performance on the stretched criterion for buffer size $\colorS=16$, $\colorT \in [0\twodots 2^{\colorS} - 1)$.

\section{Tilted Algorithm} \label{sec:tilted}

The tilted criterion favors recent data items, mandating a record spaced proportionally to time elapsed since ingest, $\colorT - 1 - \colorTbar$.
This is opposite to the stretched criterion, which favors early data items.
As given in Equation \ref{eqn:tilted-cost} in Section \ref{sec:stream-curation-problem}, the tilted criterion's cost function is the largest ratio of gap size to ingest time,
\begin{align*}
\mathsf{cost\_tilted}(\colorT)
&=
\max\Big\{\frac{\colorG_{\colorT}(\colorTbar)}{\colorT - 1 - \colorTbar} : \colorTbar \in [0 \twodots \colorT-1)\Big\}.
\end{align*}

The approximate lower bound on best-case gap size ratio provided in Equation \ref{eqn:approx-gap-bound} for the stretched curation can also be applied to tilted curation, as can the strict bound on best-case gap size ratio accounting for discretization effects established in Theorem \ref{thm:stretched-ideal-strict}.
In this section, we present a stream curation algorithm tailored to the tilted criterion, achieving maximum gap size ratio no worse than
\begin{align}
  \mathsf{cost\_tilted}(\colorT)
  &\leq
  \frac{
    1
  }{
    \max\Big(
      \frac{\colorS}{2(\colort + \colors)},\;\;
      \frac{\colorS}{4\colort},\;\;
      \frac{\colorS}{2^{\colortau + 1}}
    \Big)
    - 1/2
  }
  \text{ for }
  \colorTbar < \colorT - 1
  \label{eqn:tilted-gap-size-bound}
\end{align}
over supported epochs $\colort \in [0\twodots\colorS - \colors)$.
Because $\min(2\colort + 2\colors,\;\; 4\colort,\;\; 2^{\colors - \colortau - 1}) \leq \colorS$, tilted gap size ratio is no greater than a factor of $2(1 + 1/\colorS)\times\min(2\colort + 2\colors, \;\; 4\colort, \;\; 2^{\colortau + 1})$ times the optimal bound established in Equation \ref{eqn:stretched-best}.
Additionally, gap size ratio is bounded $\mathsf{cost\_tilted}(\colorT) \leq 2$.


\subsection{Tilted Algorithm Strategy}
\label{sec:tilted-strategy}

\begin{figure*}[htbp!]
  \centering

  \hfill
\begin{subfigure}{0.43\textwidth}
  \includegraphics[width=\textwidth]{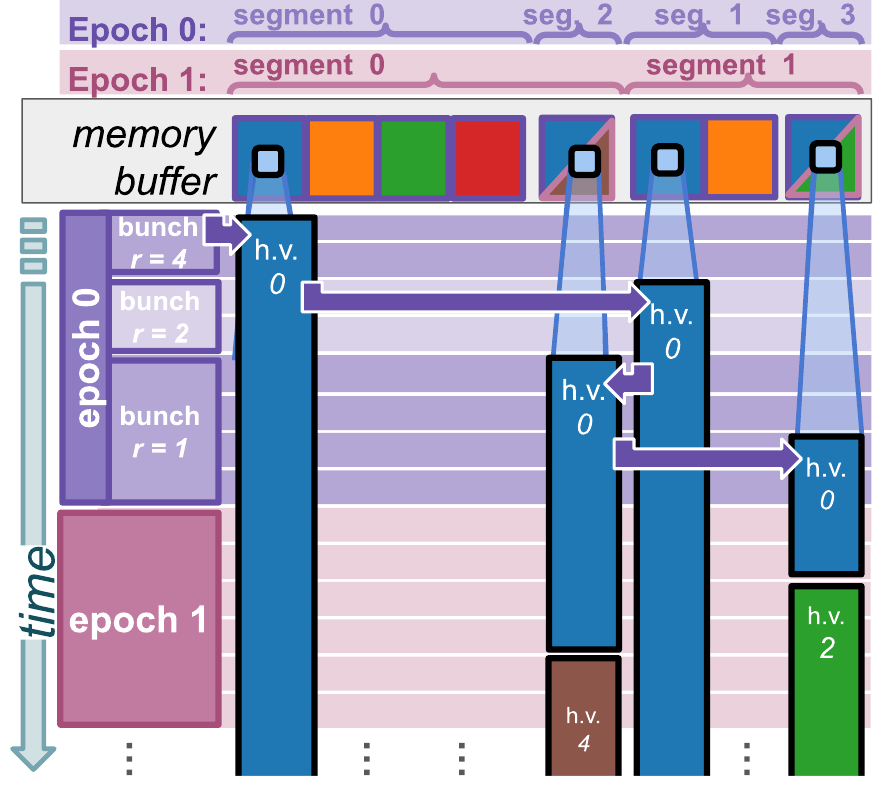}
  \caption{\footnotesize site selection for \hv{} $\colorh=0$ under \textit{\textbf{stretched}} algorithm}
  \label{fig:hsurf-tilted-intuition-site-selection}
\end{subfigure}
\hfill
\begin{subfigure}{0.43\textwidth}
  \includegraphics[width=\textwidth]{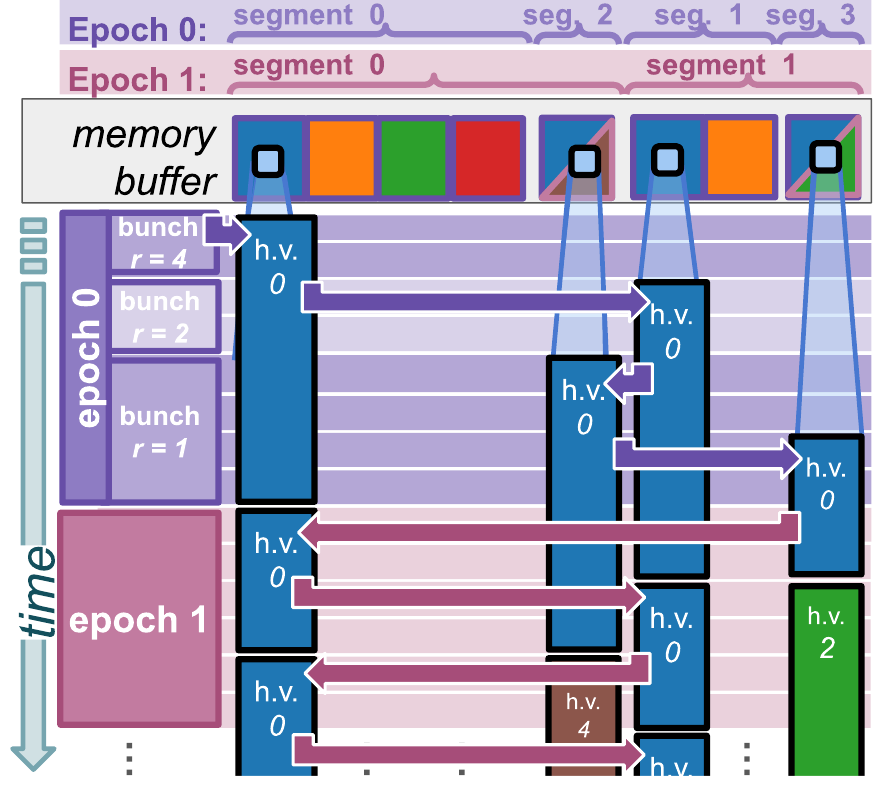}
  \caption{\footnotesize site selection for \hv{} $\colorh=0$ under \textit{\textbf{tilted}} algorithm}
  \label{fig:hsurf-tilted-intuition-tilted}
\end{subfigure}
\hfill

  \caption{
    \textbf{Tilted algorithm strategy.}
    \footnotesize
    Tilted algorithm strategy relates closely to stretched algorithm strategy.
    In particular, the tilted algorithm uses \hv{} reservation layout $\colorHcal_{\colort}(\colork)$ exactly identical to the stretched algorithm ( shown in Figure \ref{fig:hsurf-stretched-intuition}).
    As contrasted between left and right panels, the tilted and stretched algorithms differ in how they handle \hv{} instances after available reservation segments have been filled.
    Schematics show site selection strategy for items with \hv{} $\colorH(\colorTbar) = 0$ on a buffer of size $\colorS=8$.
    Whereas the stretched algorithm discards these items, the tilted algorithm treats reserved segments as a ring buffer by ``wrapping around'' and beginning again from the largest (and leftmost) segment $r=\colors$.
    In this way, the most recent $n$ (as opposed to the first $n$) data items corresponding to each hanoi value are kept, satisfying the tilted retention objective depicted in Figure \ref{fig:hanoi-intuition-tilted}.
    So, placements for a particular \hv{} cycle around available reservation sites, and then continue cycling around remaining sites after the \hv{} is invaded and half of reservation sites for that \hv{} are ceded.
  }
  \label{fig:hsurf-tilted-intuition}
\end{figure*}

The retention strategy for the tilted algorithm strongly resembles that of the stretched algorithm.
Recall that under the stretched algorithm the first $n(\colorT)$ data items of each \hv{} $\colorH(\colorTbar)$ are retained, with $n(\colorT)$ decreasing so as to shift from many copies of few encountered hanoi values to few copies of many encountered hanoi values.
Under the tilted algorithm, we instead keep the \textit{last} $n(\colorT)$ data items of each hanoi value.
Figure \ref{fig:hanoi-intuition-tilted} shows how keeping the last $n$ instances of each \hv{} approximates tilted distribution.

We thus set out to maintain --- for a declining threshold $n(\colorT)$ --- the set of data items,
\begin{align}
\\
\mathsf{goal\_tilted}
&\coloneq
\bigcup_{\colorh \geq 0}
\{ \colorTbar =
\eqnmarkbox[gray]{maxhanoi}{
  \left\lfloor
  \frac{\colorT - 2^{\colorh}}{2^{\colorh + 1}}
  \right\rfloor
  2^{\colorh + 1}
  + 2^{\colorh}
  - 1
}
- i2^{\colorh + 1} \text{ for } i \in [0 \twodots n(\colorT) - 1] : 0 \leq \colorTbar < \colorT \}.
\annotate[yshift=1em]{above,left}{maxhanoi}{$\max\{
  \colorTbar' \in [0 \twodots \colorT) : \colorH(\colorTbar') = \colorh
\}$}
\label{eqn:goal-tilted-set}
\end{align}

It can be shown analogously to the stretched algorithm's Lemma \ref{thm:stretched-first-n-space} that setting $n(\colorT) \coloneq 2^{\colors - 1 - \colortau}$ suffices to respect available buffer capacity $\colorS$ under the tilted algorithm.

\subsection{Tilted Algorithm Mechanism}
\label{sec:tilted-mechanism}

Because the tilted algorithm, like the stretched algorithm, also approximates an equal-$n$-per-\hv{} scheme, hanoi value reservation layout is maintained identically to the stretched algorithm's segment-based scheme.
Refer to Section \ref{sec:stretched-mechanism} for a detailed description of this \hv{} reservation layout, and how it unfolds across epochs $0 \leq \colort \leq \colorS - \colors$.

A pertinent result of the stretched layout is that at least $2^{\colors - 1 - \colortau}$ data item instances of each \hv{} are retained (Lemma \ref{thm:stretched-discarded-incidence-count}).
However, unlike the stretched algorithm, for the tilted algorithm we wish to keep the \textit{last} $n$ rather than the \textit{first} $n$ instances of each hanoi value.
We can do that by continuing to write data items for each \hv{} into buffer sites reserved for that \hv{} after they initially fill --- overwriting older instances of the \hv{} to keep a ``ring buffer'' of fresh \hv{} instances.

Supplemental materials prove several results related to the tilted algorithm's ring buffer mechanism, including that fill cycles align evenly to epoch and meta-epoch transitions (Lemma \ref{thm:tilted-last-touched}).
These results build to Lemma \ref{thm:tilted-most-recent-retained}, which confirms that our strategy always preserves the last $2^{\colors - 1 - \colortau}$ instances of each hanoi value.
We take particular care in considering transitions where the ``ring buffer'' of sites reserved to a \hv{} is halved by growth of invading segments.

\subsection{Tilted Algorithm Implementation}
\label{sec:tilted-implementation}

\begin{figure*}[htbp!]
  \centering

\begin{minipage}{\textwidth}
  \scriptsize
  \setlength{\tabcolsep}{2.5pt}
  \begin{tabularx}{\textwidth}{
    r
    Y|Y|Y|Y|Y|Y|Y|Y|
    Y|Y|Y|Y|Y Y Y|Y
    |Y|Y|Y|Y|Y|Y|Y
    |Y|Y|Y|Y Y
    }
     { Time $\colorT$} & \textbf{0} & \textbf{1} & \textbf{2} & \textbf{3} & \textbf{4} & \textbf{5} & \textbf{6} & \textbf{7}
    & \textbf{8} & \textbf{9} & \textbf{10} & \textbf{11} & \textbf{12} 
    &  \ldots
    & \textbf{28} & \textbf{29} & \textbf{30} & \textbf{31}
    & \textbf{32} & \textbf{33} & \textbf{34} & \textbf{35}
    & \textbf{36} & \textbf{37} & \textbf{38} & \textbf{39} & \textbf{40}
    & \ldots \\ \hline
   { Epoch $\colort$} & 0 & 0 & 0 & 0 & 0 & 0 & 0 & 0
    & 0 & 0 & 0 & 0 & 0 
    &  \ldots
    & 0 & 0 & 0 & 0
    & 1 & 1 & 1 & 1
    & 1 & 1 & 1 & 1 & 1
    & \ldots \\
     \rowcolor{lightgray!30}
   { Meta-epoch $\colortau$} & 0 & 0 & 0 & 0 & 0 & 0 & 0 & 0
    & 0 & 0 & 0 & 0 & 0 
    &  \ldots
    & 0 & 0 & 0 & 0
    & 1 & 1 & 1 & 1
    & 1 & 1 & 1 & 1 & 1
    & \ldots \\
    { \scriptsize$\colorH(\colorT)$} & 0 & 1 & 0 & 2 & 0 & 1 & 0 & 3
    & 0 & 1 & 0 & 2 & 0 
    &  \ldots
    & 0 & 1 & 0 & 5
    & 0 & 1 & 0 & 2
    & 0 & 1 & 0 & 3 & 0
    & \ldots \\ \hline
     { \scriptsize $\colorK(\colorT)$} & \textbf{0} & \textbf{1} & \textbf{17} & \textbf{2} & \textbf{10} & \textbf{18} & \textbf{25} & \textbf{3}
     & \textbf{7} & \textbf{11} & \textbf{14} & \textbf{19} & \textbf{22} & \ldots
 & \textbf{28} & \textbf{30} & \textbf{31} & \textbf{5} & \textbf{0}
 & \textbf{1} & \textbf{17} & \textbf{9} & \textbf{10}
 & \textbf{18} & \textbf{25} & \textbf{13} & \textbf{7}  &\ldots
  \end{tabularx}
\vspace{-2ex}
\end{minipage}
\begin{subfigure}{\textwidth}
\caption{\footnotesize Tilted policy site selection $\colorK(\colorT)$ with buffer size $\colorS=32$.}
\label{fig:hsurf-tilted-implementation-site-selection}
\end{subfigure}
\vspace{-3ex}

\begin{subfigure}[b]{\linewidth}
\includegraphics[width=\linewidth]{
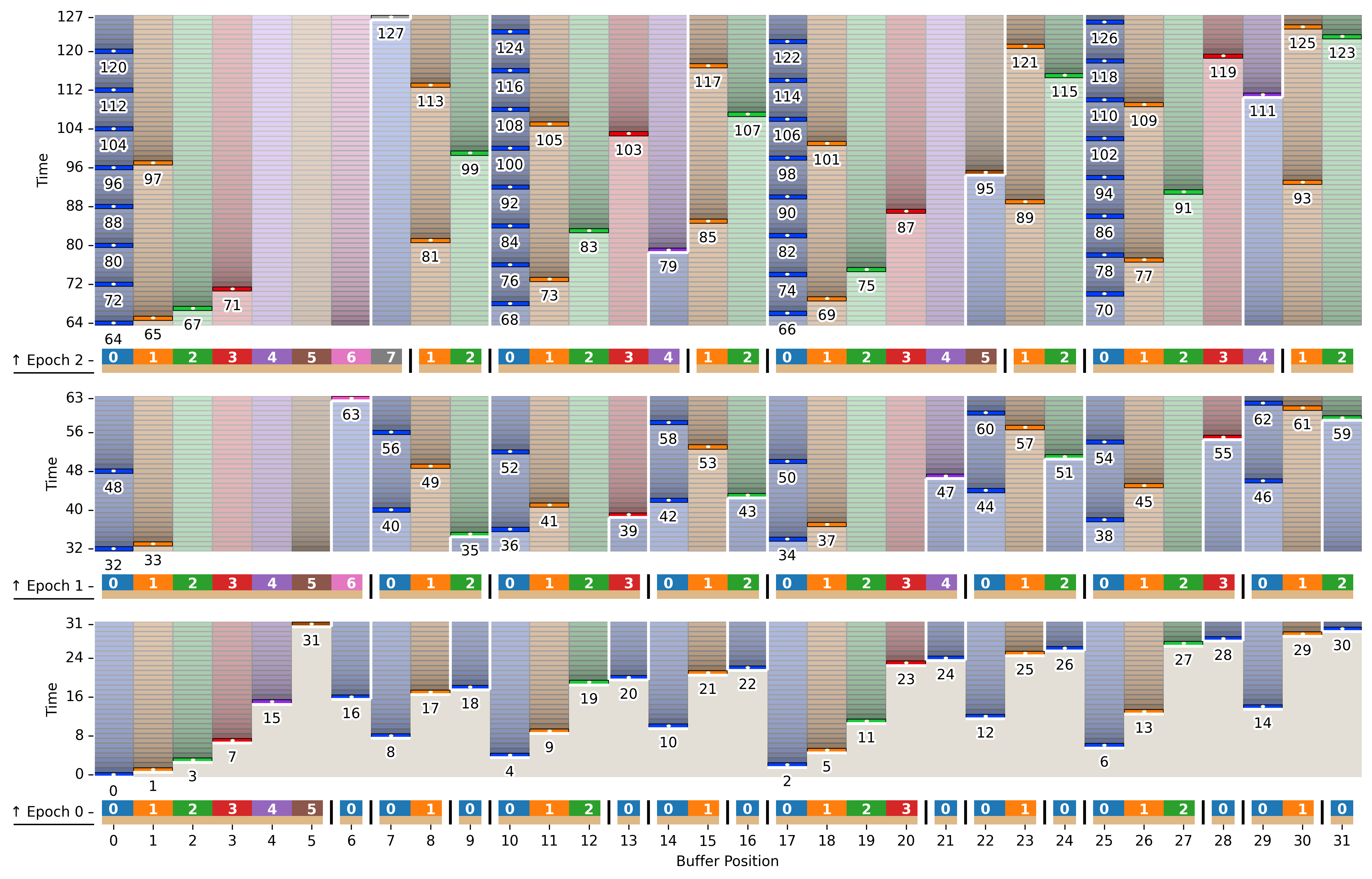}
\vspace{-4.5ex}\caption{\footnotesize
  Buffer composition across time, split by epoch with data items color-coded by hanoi value $\colorH(\colorTbar)$.
}
\label{fig:hsurf-tilted-implementation-schematic}
\end{subfigure}

\vspace{0.5ex}
\begin{minipage}[]{\textwidth}
 \vspace{-2pt}
  \begin{subfigure}[t]{0.65\linewidth}
    \vspace{0pt}
    \centering
  \includegraphics[width=0.88\linewidth,clip]{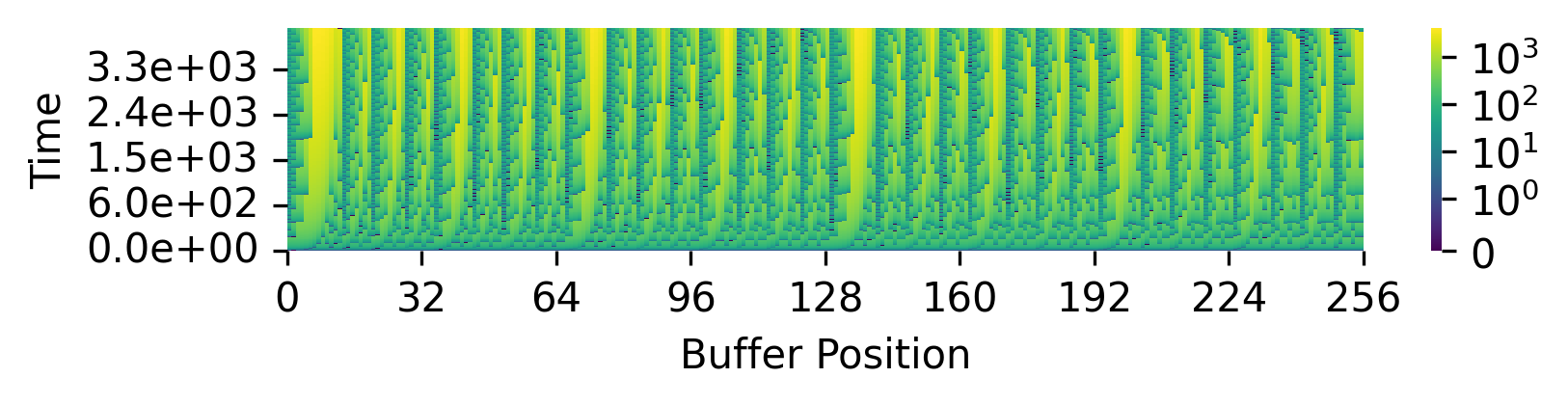}  
  \end{subfigure}%
  \begin{subfigure}[t]{0.35\linewidth}
  \vspace{-2pt}
  \caption{%
    \footnotesize
    Stored data item age across buffer sites for buffer size $\colorS=256$ from $\colorT=0$ to 4,096.
  }
  \label{fig:hsurf-tilted-implementation-heatmap}
\end{subfigure}
\end{minipage}

  \vspace{-0.5ex}
   \begin{minipage}[]{\textwidth}
   \vspace{-2pt}
  \begin{subfigure}[t]{0.65\linewidth}
  \vspace{0pt}
    \centering
    \includegraphics[width=0.88\linewidth,clip]{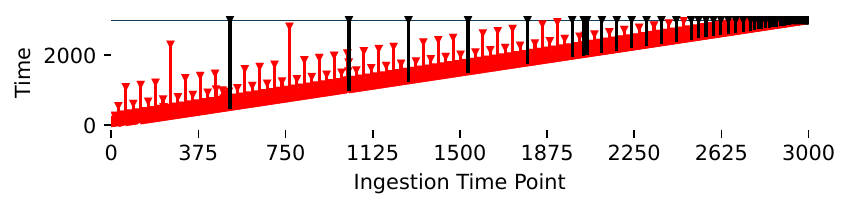}
  \end{subfigure}%
  \begin{subfigure}[t]{0.35\linewidth}
  \vspace{-2pt}
  \caption{%
    \footnotesize
    Data item retention time spans by ingestion time point for buffer size $\colorS=64$ from $\colorT=0$ to 3,000.
  }
  \label{fig:hsurf-tilted-implementation-dripplot}
  \end{subfigure}
  \end{minipage}

  \vspace{-0.5ex}
 \begin{minipage}[]{\textwidth}
 \vspace{-2pt}
\begin{subfigure}[t]{0.65\linewidth}
\vspace{0pt}
  \centering
  \includegraphics[width=0.88\linewidth,clip]{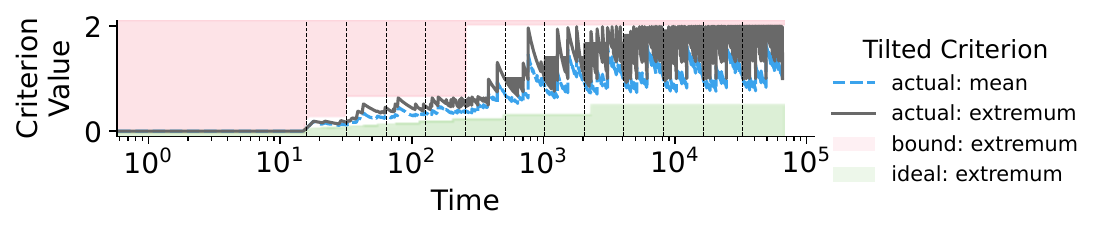}
\end{subfigure}%
\begin{subfigure}[t]{0.35\linewidth}
\vspace{-2pt}
\caption{%
  \footnotesize
  Tilted criterion satisfaction across time points for buffer size $\colorS=16$.
}
\label{fig:hsurf-tilted-implementation-satisfaction}
\end{subfigure}
\end{minipage}

\vspace{-2ex}\caption{%
  \textbf{Tilted algorithm implementation.}
  \footnotesize
  Top panel \ref{fig:hsurf-tilted-implementation-site-selection} enumerates initial tilted policy site selection on a 32-site buffer.
  Panel \ref{fig:hsurf-tilted-implementation-schematic} summarizes how data items are ingested and retained over time within a 32-site buffer, color-coded by data items' hanoi values $\colorH(\colorTbar)$.
  Between $\colorT=0$ and $\colorT=127$, time is segmented into epochs $\colort=0$, $\colort=1$, and $\colort=2$; strips before  each epoch show hanoi values assigned to each buffer site during that epoch.
  Time increases along the $y$ axis.
  Rectangles with small white ``$\blkhorzoval$'' symbol denote buffer site where the ingested data item from each timestep $\colorT$ is placed.
  At each epoch, data items are filled into sites newly assigned for their ingestion-order hanoi value from left to right.
  In epoch 0, all sites are filled with a first data item.
  At subsequent epochs, the first site of all innermost-nested segments is ``invaded'' by new high \hv{} sites added to other segments.
  Low \hv{} data items for which a newly-allocated reservation site is not available ``cycle'' within sites reserved for that \hv{}, ensuring the most recent data items corresponding to that hanoi value are retained.
  The invasion process continues over successive epochs until only one segment remains, as shown in Figure \ref{fig:hsurf-stretched-intuition-reservations}.
  Heatmap panel \ref{fig:hsurf-tilted-implementation-heatmap} shows evolution of data item age at buffer sites.
  Dripplot panel \ref{fig:hsurf-tilted-implementation-dripplot} shows retention spans for 3,000 ingested time points.
  Vertical lines span durations between ingestion and elimination for data items from successive time points.
  Time points previously eliminated are marked in red.
  Lineplot panel \ref{fig:hsurf-tilted-implementation-satisfaction} shows tilted criterion satisfaction on a 16-bit surface over $2^{16}$ timepoints.
  Lower and upper shaded areas are best- and worst-case bounds, respectively.
}
\label{fig:hsurf-tilted-implementation}

\end{figure*}

Site selection for data ingest proceeds similarly to the stretched algorithm, described in Section \ref{sec:stretched}.
However, instead of discarding data items after available sites reserved to that \hv{} fill, we simply cycle back and overwrite the first data items within that \hv{}'s reservations.
In practice, the target index among available sites reserved to a \hv{} can be calculated as the number of previous times a \hv{} has been encountered before time $\colorT$, modulus the number of sites reserved to that hanoi value.
Figure \ref{fig:hsurf-tilted-implementation-schematic} illustrates site selection over epochs $\colort \in \{0,1,2\}$ on buffer size $\colorS=32$.
Algorithm \ref{alg:tilted-site-selection} provides a step-by-step listing of the tilted site selection procedure, which is $\mathcal{O}(1)$.

\begin{algorithm}[H]
\caption{Tilted algorithm site selection $\colorK(\colorT)$.\\ \footnotesize Supplementary Algorithm \ref{alg:tilted-time-lookup} gives tilted algorithm site lookup $\colorL(\colorT)$. Supplementary Listings \cref{lst:tilted_site_selection.py,lst:tilted_time_lookup.py} provide reference Python code.}
\label{alg:tilted-site-selection}
\begin{minipage}{0.53\textwidth}
    \hspace*{\algorithmicindent} \textbf{Input:} $\colorS \in \{2^{\mathbb{N}}\},\;\; \colorT \in \mathbb{N}$ \Comment{Buffer size and current logical time}\\
    \hspace*{\algorithmicindent} \textbf{Output:} $\colork \in [0 \twodots \colorS - 1) \cup \{\nullval\}$ \Comment{Selected site, if any}
    \begin{algorithmic}[1]
        \State $\texttt{uint\_t} ~ ~ \colors \gets \Call{BitLength}{\colorS} - 1$
        \State $\texttt{uint\_t} ~ ~ \colort \gets \max(0,\;\; \Call{BitLength}{\colorT} - \colors)$ \Comment{Current epoch}
        \State $\texttt{uint\_t} ~ ~ \colorh \gets \Call{CountTrailingZeros}{\colorT + 1}$ \Comment{Current \hv{}}
        \Statex
        \State $\texttt{uint\_t} ~ ~ i \gets \Call{RightShift}{\colorT, \;\; \colorh + 1}$ \Comment{Hanoi value incidence (i.e., num seen)}
        \State $\texttt{bool\_t} ~ ~ \epsilon_{\colortau} \gets \Call{BitFloorSafe}{2\colort} \;\; > \;\; \colort + \Call{BitLength}{\colort}$ \Comment{Correction factor}
        \State $\texttt{uint\_t} ~ ~ \colortau \gets  \Call{BitLength}{\colort} - \Call{I}{\epsilon_{\colortau}}$ \Comment{Current meta-epoch}
        \State $\texttt{uint\_t} ~ ~ \colort_0 \gets 2^{\colortau} - \colortau$ \Comment{First epoch of meta-epoch}
        \State $\texttt{uint\_t} ~ ~ \colort_1 \gets 2^{\colortau + 1} - (\colortau + 1)$ \Comment{First epoch of next meta-epoch}
        \State $\texttt{uint\_t} ~ ~ \epsilon_B \gets \Call{I}{\colort \;\; < \;\; \colorh + \colort_0 \;\; < \;\; \colort_1}$ \Comment{Uninvaded correction factor}
        \State $\texttt{uint\_t} ~ ~ B \gets \max(1,\;\; \Call{RightShift}{\colorS, \;\; \colortau + 1 - \epsilon_B})$ \Comment{Num bunches available to \hv}
        \Statex
        \State $\texttt{uint\_t} ~ ~ b_l \gets \Call{ModPow2}{i, \;\; B}$ \Comment{Logical bunch index, in order filled \ldots}
        \Statex \Comment{\ldots i.e., increasing nestedness/decreasing init size $r$}
        \Statex
        \Statex \Comment{Need to calculate physical bunch index\ldots}
        \Statex \Comment{\ldots i.e., among bunches left-to-right in buffer space}
        \Statex
        \State $\texttt{uint\_t} ~ ~ v \gets \Call{BitLength}{b_l}$ \Comment{Nestedness depth level for physical bunch}
        \State $\texttt{uint\_t} ~ ~ w \gets \Call{RightShift}{\colorS, \;\; v} \;\; \times \;\;\Call{I}{v > 0}$ \Comment{Num bunches spaced between bunches in same nest level}
        \State $\texttt{uint\_t} ~ ~ o \gets 2w$  \Comment{Offset of nestedness level in physical bunch order}
        \State $\texttt{uint\_t} ~ ~ p \gets b_l - \Call{BitFloorSafe}{b_l}$ \Comment{Bunch position within nestedness level}
        \State $\texttt{uint\_t} ~ ~ b_p \gets o + wp$ \Comment{Physical bunch index\ldots}
        \Statex \Comment{\ldots i.e., in left-to-right buffer space ordering}
        \Statex
        \Statex \Comment{Need to calculate buffer position of $b_p$\textsuperscript{th} bunch}
        \Statex
        \State $\texttt{uint\_t} ~ ~ \epsilon_{\colork_b} \gets \Call{I}{b_l > 0}$  \Comment{Correction factor, 0\textsuperscript{th} bunch (i.e., bunch $r=\colors$ at site $\colork=0$)}
        \State $\texttt{uint\_t} ~ ~ \colork_b \gets \Call{BitCount}{2b_p +(2\colorS - b_p)} - 1 - \epsilon_{\colork_b}$  \Comment{Site index of bunch}
        \Statex
        \State \Return $\colork_b + \colorh$ \Comment{Calculate placement site, \hv{} $\colorh$ is offset within bunch}
    \end{algorithmic}
\end{minipage}
\end{algorithm}

The data item $\colorTbar$ present at buffer site $\colork$ at time $\colorT$ can be determined by decoding that site's segment index and checking whether (if slated) it has yet been replaced during the current epoch $\colort$.
Both site selection $\colorK$ and lookup $\colorL$ can be accomplished through fast $\mathcal{O}(1)$ binary operations (e.g., bit mask, bit shift, count leading zeros, popcount).
Tilted site lookup is provided in supplementary material, as Algorithm \ref{alg:tilted-time-lookup}.
Reference Python implementations appear in Supplementary Listings \ref{lst:tilted_site_selection.py} and \ref{lst:tilted_time_lookup.py}, as well as accompanying tests.

\subsection{Tilted Algorithm Criterion Satisfaction}
\label{sec:tilted-satisfaction}

In this final subsection, we establish an upper bound on $\mathsf{cost\_tilted}(\colorT)$ for a buffer of size $\colorS$ at time $\colorT$ under the proposed tilted curation algorithm.

\begin{theorem}[Tilted algorithm gap size ratio upper bound]
\label{thm:tilted-gap-size}
Under the tilted curation algorithm, gap size ratio is bounded per Equation \ref{eqn:tilted-gap-size-bound}.
\end{theorem}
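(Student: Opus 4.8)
The plan is to follow the template of the stretched bound (Theorem~\ref{thm:stretched-gap-size}), substituting the recency measure $\colorT - 1 - \colorTbar$ for ingest time $\colorTbar$ and folding a discretization penalty into the denominator. The argument factors cleanly into two independent pieces: a \emph{counting} piece that certifies how many most-recent instances of each hanoi value are retained, and a \emph{geometric} piece that converts such a per-hanoi-value retention count $n$ into a bound on the recency gap-size ratio. Because the denominator terms $\colorS/2^{\colortau+1}$, $\colorS/(2(\colort+\colors))$, and $\colorS/(4\colort)$ in Equation~\ref{eqn:tilted-gap-size-bound} coincide exactly with the three candidate retention counts (note $2^{\colors-1-\colortau} = \colorS/2^{\colortau+1}$), the whole result reduces to showing that retaining $n$ most-recent instances forces a ratio at most $1/(n - 1/2)$.

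For the counting piece I would lean entirely on results already in place. Since the tilted algorithm uses a hanoi-value reservation layout \emph{identical} to the stretched algorithm (Section~\ref{sec:tilted-mechanism}), the reservation-count results carry over verbatim: Lemma~\ref{thm:stretched-discarded-incidence-count} together with Corollary~\ref{thm:stretched-reservation-count} guarantee at least $\max(\colorS/2^{\colortau+1},\, \colorS/(2(\colort+\colors)),\, \colorS/(4\colort))$ sites reserved to every hanoi value. Lemma~\ref{thm:tilted-most-recent-retained} then certifies that the ring-buffer write discipline fills those reserved sites with the \emph{last} $n$ instances of each hanoi value rather than merely some $n$ of them, with the fill-cycle alignment machinery (Lemma~\ref{thm:tilted-last-touched}) handling the delicate halving events. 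Thus the actual retained count $n'$ satisfies $n' \ge \max(\ldots)$, and $1/(n'-1/2) \le 1/(\max(\ldots)-1/2)$ gives the stated bound once the geometric piece is in hand.

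The geometric piece is the genuinely new ingredient: a recency analog of Lemma~\ref{thm:gap-size-ratio-stretched}, asserting that if the last $n$ instances of every hanoi value are retained then $\colorG_{\colorT}(\colorTbar)/(\colorT - 1 - \colorTbar) \le 1/(n-1/2)$ for all $\colorTbar < \colorT - 1$. I would fix a discarded $\colorTbar$ lying in a maximal gap $(a,b)$ and let $\colorh$ be the largest hanoi value occurring inside the gap; since any window of $2^{\colorh+1}$ consecutive indices contains an index of hanoi value exceeding $\colorh$, maximality forces the gap length $\colorG_{\colorT}(\colorTbar) \le 2^{\colorh+1}-1$. The witnessing discarded index of hanoi value $\colorh$ is not among the last $n$ instances of $\colorh$, so it lies at least $n$ periods of length $2^{\colorh+1}$ before the present; propagating this lower bound on its recency to the worst in-gap point $\colorTbar = b-1$ shows $\colorT - 1 - \colorTbar \gtrsim (n-\tfrac12)\,2^{\colorh+1}$, which divides out to the claimed ratio.

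The main obstacle — and the precise reason the bound degrades from the stretched $1/n$ to $1/(n-1/2)$ — is the boundary offset near the present. Unlike the stretched case, where the $i$th instance of hanoi value $\colorh$ sits at the clean position $i\,2^{\colorh+1} + 2^{\colorh} - 1$, here the most-recent retained instance $m_{\colorh} = \max\{\colorTbar' : \colorH(\colorTbar') = \colorh\}$ lags the current time by an offset in $[0, 2^{\colorh+1})$ that depends on $\colorT$ and varies across hanoi values. The careful bookkeeping needed to bridge from the witnessing discarded index to the gap's right endpoint $b$ costs up to half a period, and one must separately confirm that small-recency points remain densely bracketed despite these offsets, that the excluded endpoint $\colorTbar = \colorT-1$ causes no issue, and that the ring-buffer halving transitions do not momentarily violate the retention count. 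Dispatching this discretization accounting rigorously is where essentially all the work lies; the final combination and algebraic simplification into the form of Equation~\ref{eqn:tilted-gap-size-bound} is then immediate.
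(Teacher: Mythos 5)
Your proposal matches the paper's own proof of Theorem \ref{thm:tilted-gap-size} at the top level: the paper's argument is exactly the two-piece combination you describe, invoking Lemma \ref{thm:stretched-discarded-incidence-count} and Corollary \ref{thm:stretched-reservation-count} for the per-\hv{} retention count $n$ (with Lemma \ref{thm:tilted-most-recent-retained} certifying that the ring-buffer discipline keeps the \emph{last} $n$ instances through invasion transitions), and Supplementary Lemma \ref{thm:gap-size-ratio-tilted} to convert that count into the $1/(n-1/2)$ ratio bound. Where you genuinely diverge is in how you would prove that geometric lemma. The paper first reparametrizes the retained set (Lemma \ref{thm:retained-equivalence-tilted}) as $\{2^{\colorh'}(\lfloor\colorT/2^{\colorh'}\rfloor - j') - 1 : j' \in [0\twodots 2n-1],\ \colorh'\in\mathbb{N}\}$ and then, for a given $\colorTbar$, computes the smallest binary scale $m=\lceil(\colorT-1-\colorTbar)/(2n-1)\rceil_{\mathrm{bin}}$ whose oldest retained representative precedes $\colorTbar$, concluding $\colorG_{\colorT}(\colorTbar)\leq m-1$. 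Your alternative --- localize to the maximal gap containing $\colorTbar$, let $\colorh$ be its dominant \hv{}, bound the gap length by $2^{\colorh+1}-1$, and note that the unique in-gap witness of \hv{} $\colorh$ must predate the $n$ retained instances so that $\colorT-1-\colorTbar\geq n2^{\colorh+1}-2^{\colorh}$ --- checks out and is arguably more elementary, since it bypasses the set-equivalence lemma entirely; your identification of the half-period overhang of the gap to the right of the witness (at most $2^{\colorh}-1$ sites, since the next index of \hv{} exceeding $\colorh$ sits $2^{\colorh}$ past the witness) as the source of the $-1/2$ is exactly where the paper's $2n-1$ versus the stretched case's $2n$ comes from. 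The discretization bookkeeping you flag is real but comparable to the floor-function corrections the paper spends on Lemma \ref{thm:retained-equivalence-tilted}, and the ring-buffer halving concern does not belong to this lemma at all --- it is discharged once and for all by Lemma \ref{thm:tilted-most-recent-retained}, which you already cite. What the paper's route buys is uniformity with the stretched analysis (Lemmas \ref{thm:retained-equivalence-stretched} and \ref{thm:gap-size-ratio-stretched}); what yours buys is a shorter, self-contained argument for the tilted case.
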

\begin{proof}

From Supplementary Lemma \ref{thm:gap-size-ratio-tilted}, we have that if the first $n$ instances of each \hv{} $\colorh$ are retained, $\mathsf{cost\_tilted}(\colorT)$ is bounded below by $1/(n - 1/2)$.
Substituting expressions for the number of sites reserved per \hv{} $n$ from Supplementary Lemma \ref{thm:stretched-discarded-incidence-count} and Supplementary Corollary \ref{thm:stretched-reservation-count} gives the result.
\end{proof}

During early epoch $\colort = 1$, $\mathsf{cost\_tilted}(\colorT) \leq 4/\colorS$.
Likewise, during the last supported meta-epoch $\colortau = \colors - 1$, $\mathsf{cost\_tilted}(\colorT) \leq 2$.
Figure \ref{fig:hsurf-tilted-implementation-satisfaction} shows algorithm performance on the tilted criterion for buffer size $\colorS=16$, $\colorT \in [0\twodots 2^{\colorS} - 1)$.

\section{Conclusions and Further Directions} \label{sec:conclusion}

In closing, we will briefly review the principal objectives, major results, and impact of our presented work.
We finish by laying out future work --- in yet-incomplete aspects of the presented work, as well as opportunities for extension and elaboration.
We also outline steps to build out broad availability of developed algorithms as an off-the-shelf, plug-and-play software tool.

\subsection{Summary and Discussion}

In this work, we have introduced new ``DStream'' algorithms for fast and space-efficient data stream curation --- subsampling from a rolling sequence of data items to dynamically maintain a representative cross-sample across observed time points.
Our approach, in particular, targets use cases that are fixed-capacity and resource-constrained.

As a simplifying assumption, we have reduced data ingestion to a sole update operation: ``site selection,'' picking a buffer index for the $n$th received data item --- overwriting any existing data item at that location.
In the interest of concision and efficiency, we forgo any explicit metadata storage or data structure overhead (e.g., pointers).
Instead, we require site selection for the $n$th ingested item to be computable \textit{a priori}.
Interpreting stored data, therefore, additionally requires support for ``inverse'' decoding of ingest time based solely on an item's buffer index $\colork$ and current time $\colorT$.

Ultimately, the purpose of stream curation is to dictate what data to keep, and for how long.
As objectives in this regard differ by use case, we have explored a suite of three possible retention strategies.

The first is \textit{steady} curation, which calls for retention of evenly-spaced samples across data stream history.
Our proposed algorithm guarantees worst-case even coverage within a constant factor of the optimum.

The next two curation objectives explored (\textit{stretched} and \textit{tilted} criteria) bias retention to favor earlier or more recent data items, respectively.
Proposed algorithms for these two criteria relate closely in structure, differing only in that the former freezes the first encountered data items in place, while the latter uses a ring buffer approach to maintain the most recently encountered data items.
Unlike the proposed steady curation algorithm, which handles indefinitely many data item ingestions, we leave behavior for time $\colorT \geq 2^{\colorS} - 2$ unspecified in defining the proposed stretched and tilted algorithms.
As noted earlier, we expect support for $2^{\colorS} - 2$ ingests to suffice for most use cases.

For all three DStream algorithms, we explain buffer layout procedure and show how site selection proceeds on this basis.
As implemented, all algorithms provide $\mathcal{O}(1)$ site selection operations and are $\mathcal{O}(\colorS)$ to decode ingest times at all $\colorS$ buffer sites.
Each algorithm also provides strict worst-case upper bounds on curation quality across elapsed stream history.

\subsection{Future Algorithm Development}

As mentioned above, the core limitation of this work is the restriction of stretched and tilted algorithms to $2^{\colorS} - 2$ data item ingests.
As such, work remains to design behavior past this point.
One possibility is switching over at $\colorT = 2^{\colorS} - 1$ to apply steady curation on logical time hanoi value $\colorH(\colorT)$ (i.e., rather than on logical time $\colorT$ itself, as originally formulated).

Another enhancement would be random-access lookup calculation.
Current implementations assume an $\mathcal{O}(\colorS)$ pass over all all stored data items.

Several interesting openings exist for extension of additional operations on curated data.
Notably, fast retrieval of the retained data item closest to a query $\colorTbar$ would be useful, as would fast ingest-order iteration over buffer sites $\colork \in [0 \twodots \colorS)$.

A final unexplored direction is fast comparison between curated collections --- which is critical for applications that rely on identifying discrepancies between stream histories, such as hereditary stratigraphy.
These use cases would benefit from fast operations to identify the retained data items $\colorTbar$ shared in common between two time points $\colorT_1$ and $\colorT_2$.
The stable buffer position of data items, once stored, raises the possibility of applying vectorized operations for record-to-record comparison (e.g., masked bitwise equality tests).


\subsection{Algorithm Implementation}

Our foremost motivation for this work is application-driven: We hope to see DStream algorithms put into production to help address real-world challenges in resource-constrained data management.

Indeed, a key driver of this work has been development of ``hereditary stratigraphy'' tooling to support distributed lineage tracking in large-scale digital evolution experiments.
In this use case, stream curation downsamples randomly-generated lineage ``checkpoints'' that accrue as generations elapse, allowing divergence between lineages to be identified via mismatching checkpoints \citep{moreno2022hereditary}.
Prototype implementations of presented algorithms have already seen successful deployment in lineage tracking over massively distributed, agent-based evolution experiments conducted on the 850,000 core Cerebras Wafer-Scale Engine (WSE) device \citep{moreno2024trackable}.
Promisingly, empirical microbenchmark experiments reported in that work corroborate order-of-magnitude efficiency gains from the algorithms presented here, compared to existing approaches used for hereditary stratigraphy.

However, we also anticipate broader use cases beyond hereditary stratigraphy.
This possibility warrants standalone software implementations of algorithms proposed herein, independent of infrastructure developed to support hereditary stratigraphy \citep{moreno2022hstrat}.
As described in Section \ref{sec:materials}, we have organized stream-curation-specific components --- including all three algorithms presented here --- as the standalone software library \citep{moreno2024downstream}.
Going forward, we intend for stream curation algorithms to support lineage tracking implementation as a public-facing external dependency rather than as an opaque internal utility.

One challenge in supporting end-users is cross-language interoperation.
Partial implementations are currently available in Python, Zig, and the closely related Cerebras Software Language (CSL) \citep{moreno2024hsurf,moreno2024downstream,moreno2024wse}.
For our own purposes, we plan to establish ports of stream curation algorithms for Rust and C++.

We would be highly interested in collaborations in assembling DStream implementations in other languages as needed --- whether incorporating new implementations into the \textit{downstream} software repository or linking to outside repositories from the \textit{downstream} documentation.
In either case, care will be needed for consistency across implementations, as the semantics of stored data depend subtly upon exactly how site selection unfolded.
One possible approach to this issue would be to simply designate a canonical implementation and provide language-agnostic tests to validate other implementations against it.
Alternatively, effort could be invested in preparing and maintaining an explicit standard or specification.

\section*{Acknowledgment}

This material is based upon work supported by the Eric and Wendy Schmidt AI in Science Postdoctoral Fellowship, a Schmidt Sciences program.
Thank you to Ryan Moreno and Connor Yang for providing valuable feedback on manuscript drafts.
This work also benefited from the thoughtful suggestions of several anonymous reviewers.

\putbib

\end{bibunit}

\clearpage
\newpage

\begin{bibunit}

\appendix
\section*{\Huge Supplemental Material}

\setcounter{section}{0}

\makeatletter
\def\@seccntformat#1{\@ifundefined{#1@cntformat}%
   {\csname the#1\endcsname\space}
   {\csname #1@cntformat\endcsname}}
\newcommand\section@cntformat{\thesection.\space} 
\makeatother
\renewcommand{\thesection}{S\arabic{section}}
\counterwithin{equation}{section}
\counterwithin{figure}{section}
\counterwithin{table}{section}
\counterwithin{theorem}{section}
\counterwithin{algorithm}{section}
\counterwithin{lstlisting}{section}

\section{Pseudocode Helper Functions}
\label{sec:pseudocode}

\begin{minipage}{0.5\textwidth}
\begin{algorithmic}
  \Function{BitCount}{$x$}
    \LComment{Equivalent {\normalfont $\texttt{std::popcount}(x)$}}
    \State \Return $|\{ n \in \mathbb{N} : x \bmod 2^n = 0 \}|$
  \EndFunction
  \Statex
  \Function{BitFloor}{$x$}
    \State \Return $\Call{LeftShift}{1, \;\; \Call{BitLength}{x} - 1}$
  \EndFunction
  \Statex
  \Function{BitFloorSafe}{$x$}
    \LComment{Equivalent {\normalfont $\texttt{std::bit\_floor}(x)$}}
    \State \Return \Call{BitFloor}{$x$} \textbf{if} $x > 0$ \textbf{else} 0
  \EndFunction
  \Statex
  \Function{BitLength}{$x$}
    \LComment{Equivalent {\normalfont $\texttt{std::bit\_width}(x)$}}
    \State \Return $\left\lfloor\log_{2}(x)\right\rfloor + 1$  \textbf{if} $x$ \textbf{else} $0$
  \EndFunction
  \Statex
  \Function{I}{$x$} \Comment{``Indicator'' function $I$}
    \LComment{Equivalent {\normalfont $\texttt{static\_cast<unsigned int>}(x)$}}
    \State \Return $1$ \textbf{if} $x > 0$ \textbf{else} $0$
  \EndFunction
  \Statex
  \Function{CountTrailingZeros}{$x$}
    \LComment{Equivalent {{\normalfont $\texttt{std::countr\_zero}(x)$}}}
    \State \Return $\max \{ n \in \mathbb{N} : x \bmod 2^n = 0 \}$
  \EndFunction
  \Statex
  \Function{ElvisOp}{$x, \;\; y$}
    \LComment{Equivalent {\normalfont $x \;\; \texttt{?:} \;\; y$}}
    \State \Return $x$ \textbf{if} $x \neq 0$ \textbf{else} $y$
  \EndFunction
  \Statex
  \Function{LeftShift}{$x, \;\; n$}
    \LComment{Equivalent {\normalfont $x \;\; \texttt{<{}<} \;\; y$}}
    \State \Return $2^n x$
  \EndFunction
  \Statex
  \Function{ModPow2}{$x, \;\; n$}
    \LComment{Equivalent {\normalfont $x \;\; \texttt{\&} \;\; (n - 1)$}, requiring $n \in \{2^{\mathbb{N}}\}$}
    \State \Return $x \bmod n$
  \EndFunction
  \Statex
  \Function{RightShift}{$x, \;\;n$}
    \LComment{Equivalent {\normalfont $x \;\; \texttt{>{}>} \;\; y$}}
    \State \Return $\left\lfloor x/2^n \right\rfloor$
  \EndFunction
  \Statex
\end{algorithmic}
\end{minipage}

\section{Site Lookup Algorithms}

\begin{algorithm}[H]
\caption{Steady algorithm ingest time lookup $\colorL(\colorT)$.\\
\footnotesize Supplementary Listing \ref{lst:steady_time_lookup.py} provides reference Python code.}
\label{alg:steady-time-lookup}
\begin{minipage}{0.5\textwidth}
    \hspace*{\algorithmicindent} \textbf{Input:} $\colorS \in \{2^{\mathbb{N}}\},\;\; \colorT \in \mathbb{N}$ \Comment{Buffer size and current logical time}\\
    \hspace*{\algorithmicindent} \textbf{Output:} $\colorTbar \in [0 \twodots \colorT) \cup \{\nullval\}$ \Comment{Ingestion time of stored data item, if any}
    \begin{algorithmic}[1]
    \If{$\colorT < \colorS$} \Comment{If buffer not yet filled\ldots}
        \ForAll{$v \in \Call{$\colorL'$}{\colorS,\;\; \colorS}}$
            \If{$v < \colorT$} \Comment{\ldots filter out not-yet-encountered values}
                \State \textbf{yield} $v$
            \Else
                \State \textbf{yield} $\nullval$
            \EndIf
        \EndFor
    \Else \Comment{No filter needed once buffer is filled}
        \ForAll{$v \in \Call{$\colorL'$}{\colorS,\;\; \colorT}$}
            \State \textbf{yield} $v$
        \EndFor
    \EndIf
    \Statex
    \Function{$\colorL'$}{$\colorS, \;\; \colorT$}\\
        \hspace*{\algorithmicindent} \textbf{Input:} $\colorS \in \{2^{\mathbb{N}}\},\;\; \colorT \in [\colorS \twodots)$ \Comment{Buffer size and current logical time}\\
        \hspace*{\algorithmicindent} \textbf{Output:} $\colorTbar \in [0 \twodots \colorT)$ \Comment{Ingestion time of stored data item, if any}
        \State $\texttt{uint\_t} ~ ~ \colors \gets \Call{BitLength}{\colorS} - 1$
        \State $\texttt{uint\_t} ~ ~ \colort \gets \Call{BitLength}{\colorT} - \colors $ \Comment{Current epoch}
        \State $\texttt{uint\_t} ~ ~ b \gets 0$ \Comment{Bunch logical/physical index (ordered left to right)}
        \State $\texttt{uint\_t} ~ ~ m^{\scriptscriptstyle\shortdownarrow}_b \gets 1$ \Comment{Countdown on segments traversed within bunch}
        \State $\texttt{bool\_t} ~ ~ b^{*} \gets \texttt{True}$ \Comment{Flag if have traversed all segments in bunch?}
        \State $\texttt{uint\_t} ~ ~ {\colork}^{\scriptscriptstyle\shortdownarrow}_m \gets \colors + 1$ \Comment{Countdown on sites traversed within segment}
        \ForAll{$\colork \in [0\twodots \colorS)$} \Comment{Iterate over buffer sites}
        \Statex
        \Statex \Comment{Calculate info about current segment\ldots}
        \State $\texttt{uint\_t} ~ ~ \epsilon_{w} \gets \Call{I}{b == 0}$ \Comment{Correction on seg width for first bunch}
        \State $\texttt{uint\_t} ~ ~ w \gets \colors - b + \epsilon_{w}$ \Comment{Number of sites in current segment (i.e., segment size)}
        \State $\texttt{uint\_t} ~ ~ m_p \gets 2^b -m^{\scriptscriptstyle\shortdownarrow}_b$ \Comment{Calc left-to-right index of current segment}
        \State $\texttt{uint\_t} ~ ~ \colorh_{\max} \gets \colort + w - 1$ \Comment{Max possible \hv{} in segment during current epoch $\colort$}
        \State $\texttt{uint\_t} ~ ~ \colorh' \gets \colorh_{\max} - ((\colorh_{\max} + {\colork}^{\scriptscriptstyle\shortdownarrow}_m) \bmod w)$ \Comment{Candidate hanoi value}
        \Statex
        \Statex \Comment{Decode ingest $\colorTbar$ from physical segment index $m_p$\ldots}
        \Statex \Comment{\dots which tells instance of reserved \hv{} (i.e., how many seen)\ldots}
        \State $\texttt{uint\_t} ~ ~ \colorTbar' \gets 2^{\colorh'}(2m_p + 1) - 1$ \Comment{Guess ingest time of data item at current site}
        \State $\texttt{uint\_t} ~ ~ \epsilon_{\colorh} \gets \Call{I}{\colorTbar \geq \colorT} \;\; \times \;\; w$ \Comment{Correction on \hv{} if assigned instance not yet seen (i.e., $\colorTbar \geq \colorT$)}
        \State $\texttt{uint\_t} ~ ~ \colorh \gets \colorh' - \epsilon_{\colorh}$ \Comment{Corrected true resident \hv{} at site}
        \State $\texttt{uint\_t} ~ ~ \colorTbar_{\colork} \gets 2^{\colorh}(2m_p + 1) - 1$ \Comment{True ingest time}
        \State \textbf{yield} $\colorTbar_{\colork}$
        \Statex
        \Statex \Comment{Update state for next site iterated over\ldots}
        \State ${\colork}^{\scriptscriptstyle\shortdownarrow}_m \gets \Call{ElvisOp}{{\colork}^{\scriptscriptstyle\shortdownarrow}_m, \;\; w} - 1$ \Comment{Bump to next site in segment, or reset for new segment}
        \State $m^{\scriptscriptstyle\shortdownarrow}_b \gets m^{\scriptscriptstyle\shortdownarrow}_b - \Call{I}{{\colork}^{\scriptscriptstyle\shortdownarrow}_m  = 0}$ \Comment{Bump to next segment within bunch}
        \State $b^{*} \gets (m^{\scriptscriptstyle\shortdownarrow}_b = 0) \wedge ({\colork}^{\scriptscriptstyle\shortdownarrow}_m = 0)$ \Comment{Should bump to next bunch?}
        \State $b \gets b + \Call{I}{b^{*}}$ \Comment{Do bump to next bunch, if should}
        \State $m^{\scriptscriptstyle\shortdownarrow}_b \gets \Call{ElvisOp}{m^{\scriptscriptstyle\shortdownarrow}_b, \;\; 2^{b - 1}}$ \Comment{Set within-bunch segment countdown if bumping to next bunch}
        \EndFor
    \EndFunction
    \end{algorithmic}
\end{minipage}
\end{algorithm}

\begin{algorithm}[H]
\caption{Stretched algorithm ingest time lookup $\colorL(\colorT)$.\\
\footnotesize Supplementary Listing \ref{lst:stretched_time_lookup.py} provides reference Python code.}
\label{alg:stretched-time-lookup}
\begin{minipage}{0.55\textwidth}
    \hspace*{\algorithmicindent} \textbf{Input:} $\colorS \in \{2^{\mathbb{N}}\},\;\; \colorT \in [0 \twodots 2^{\colorS} - 1)$ \Comment{Buffer size and current logical time}\\
    \hspace*{\algorithmicindent} \textbf{Output:} $\colorTbar \in [0 \twodots \colorT) \cup \{\nullval\}$ \Comment{Ingestion time of stored data item, if any}
    \begin{algorithmic}[1]
    \If{$\colorT < \colorS - 1$} \Comment{If buffer not yet filled\ldots}
        \ForAll{$v \in \Call{$\colorL'$}{\colorS, \;\; \colorS}}$
            \If{$v < \colorT$} \Comment{\ldots filter out not-yet-encountered values}
                \State \textbf{yield} $v$
            \Else
                \State \textbf{yield} $\nullval$
            \EndIf
        \EndFor
    \Else \Comment{No filter needed once buffer is filled}
        \ForAll{$v \in \Call{$\colorL'$}{\colorS, \;\; \colorT}$}
            \State \textbf{yield} $v$
        \EndFor
    \EndIf
    \Statex
    \Function{$\colorL'$}{$\colorS, \;\; \colorT$}\\
        \hspace*{\algorithmicindent} \textbf{Input:} $\colorS \in \{2^{\mathbb{N}}\},\;\; \colorT \in [\colorS - 1 \twodots 2^{\colorS} - 1)$ \Comment{Buffer size and current logical time}\\
        \hspace*{\algorithmicindent} \textbf{Output:} $\colorTbar \in [0 \twodots \colorT)$ \Comment{Ingestion time of stored data item, if any}
        \State $\texttt{uint\_t} ~ ~ \colors \gets \Call{BitLength}{\colorS} - 1$
        \State $\texttt{uint\_t} ~ ~ \colort \gets \Call{BitLength}{\colorT} - \colors $ \Comment{Current epoch}
        \State $\texttt{bool\_t} ~ ~ \epsilon_{\colortau} \gets \Call{BitFloorSafe}{2\colort} \;\; > \;\; \colort + \Call{BitLength}{\colort}$ \Comment{Correction factor for calculating meta-epoch $\colortau$}
        \State $\texttt{uint\_t} ~ ~ \colortau_{\mkern-5mu\scriptscriptstyle 0} \gets \Call{BitLength}{\colort} - \Call{I}{\epsilon_{\colortau}}$ \Comment{Current meta-epoch}
\State $\texttt{uint\_t} ~ ~ \colortau_{\mkern-5mu\scriptscriptstyle 1} \gets \colortau_{\mkern-5mu\scriptscriptstyle 0} + 1$ \Comment{Next meta-epoch}
        \State $\texttt{uint\_t} ~ ~ M \gets \Call{ElvisOp}{\Call{RightShift}{\colorS, \;\; \colortau_{\mkern-5mu\scriptscriptstyle 1}}, \;\; 1}$ \Comment{Number of invading segments present at current epoch}
        \State $\texttt{uint\_t} ~ ~ w_0 \gets 2^{\colortau_{\mkern-5mu\scriptscriptstyle 0}} - 1$ \Comment{Smallest segment size at outset of meta-epoch $\colortau_{\mkern-5mu\scriptscriptstyle 0}$}
        \State $\texttt{uint\_t} ~ ~ w_1 \gets 2^{\colortau_{\mkern-5mu\scriptscriptstyle 1}} - 1$ \Comment{Smallest segment size at outset of meta-epoch $\colortau_{\mkern-5mu\scriptscriptstyle 1}$}
        \Statex
        \State $\texttt{uint\_t} ~ ~ \colorh' \gets 0$ \Comment{Reserved hanoi value at site $\colork=0$}
        \State $\texttt{uint\_t} ~ ~ m_p \gets 0$ \Comment{Physical segment index at site $\colork=0$ (i.e., left-to-right position)}
        \ForAll{$\colork \in [0\twodots \colorS)$} \Comment{Iterate overall buffer sites}
        \State $\texttt{uint\_t} ~ ~ b_l \gets \Call{CountTrailingZeros}{M + m_p}$ \Comment{Logical bunch index in reverse fill order\ldots}
        \State \Comment{\dots (i.e., decreasing nestedness/increasing initial size $r$)}
        \State $\texttt{uint\_t} ~ ~ \epsilon_w \gets \Call{I}{m_p = 0}$ \Comment{Correction factor for segment size}
        \State $\texttt{uint\_t} ~ ~ w \gets w_1 + b_l + \epsilon_w$ \Comment{Number of sites in current segment (i.e., segment size)}
        \Statex \Comment{Calc corrections for not-yet-seen data items $\colorTbar \geq \colorT$}
        \Statex $\texttt{uint\_t} ~ ~ i' \gets \Call{RightShift}{M + m_p, \;\; b_l + 1}$ \Comment{Guess \hv{} incidence (i.e., num seen)}
        \State $\texttt{uint\_t} ~ ~ \colorTbar_{\colork}' \gets 2^{\colorh}(2i' + 1) - 1$ \Comment{Guess ingest time}
        \State $\texttt{uint\_t} ~ ~ \epsilon_{\colorh} \gets \Call{I}{\colorTbar_{\colork}' \geq \colorT} \;\; \times \;\; (w - w_0)$ \Comment{Correction factor, reserved \hv{} $\colorh$}
        \State $\texttt{uint\_t} ~ ~ \epsilon_{i} \gets \Call{I}{\colorTbar_{\colork}' \geq \colorT} \;\; \times \;\; (m_p + M - i')$ \Comment{Correction factor, \hv{} instance $i$ (i.e., num seen)}
        \Statex
        \Statex \Comment{Decode ingest time of reserved hanoi value \ldots}
        \State $\texttt{uint\_t} ~ ~ \colorh \gets \colorh' - \epsilon_{\colorh}$ \Comment{True reserved \hv{}}
        \State $\texttt{uint\_t} ~ ~ i \gets i' + \epsilon_{i}$ \Comment{True \hv{} incidence}
        \State $\texttt{uint\_t} ~ ~ \colorTbar_{\colork} \gets 2^{\colorh}(2i + 1) - 1$ \Comment{True ingest time}
        \State \textbf{yield} $\colorTbar_{\colork}$
        \Statex
        \Statex \Comment{Update state for next site $\colork+1$ to iterate over\ldots}
        \State $\colorh' \gets \colorh' + 1$ \Comment{Increment next site's \hv{} guess}
        \State $m_p \gets m_p + \Call{I}{\colorh' = w}$ \Comment{Bump to next segment if current is filled}
        \State $\colorh' \gets \colorh' \;\; \times \;\; \Call{I}{\colorh' \neq w}$ \Comment{Reset \hv{} to zero if segment is filled to start new segment}
        \EndFor
    \EndFunction
    \end{algorithmic}
\end{minipage}
\end{algorithm}

\begin{algorithm}[H]
\caption{Tilted algorithm ingest time lookup $\colorL(\colorT)$.\\
\footnotesize Supplementary Listing \ref{lst:tilted_time_lookup.py} provides reference Python code.}
\label{alg:tilted-time-lookup}
\begin{minipage}{0.55\textwidth}
    \hspace*{\algorithmicindent} \textbf{Input:} $\colorS \in \{2^{\mathbb{N}}\},\;\; \colorT \in [0 \twodots 2^{\colorS} - 1)$ \Comment{Buffer size and current logical time}\\
    \hspace*{\algorithmicindent} \textbf{Output:} $\colorTbar \in [0 \twodots \colorT) \cup \{\nullval\}$ \Comment{Ingestion time of stored data item, if any}
    \begin{algorithmic}[1]
    \If{$\colorT < \colorS - 1$} \Comment{If buffer not yet filled\ldots}
        \ForAll{$v \in \Call{$\colorL'$}{\colorS, \;\; \colorS}}$
            \If{$v < \colorT$} \textbf{yield} $v$ \textbf{else} \textbf{yield} $\nullval$ \Comment{\ldots filter out not-yet-encountered values}
            \EndIf
        \EndFor
    \Else ~ \textbf{yield from} $\Call{$\colorL'$}{\colorS, \;\;\colorT}$\Comment{No filter needed once buffer is filled}
    \EndIf
    \Statex
    \Function{$\colorL'$}{$\colorS, \;\; \colorT$} \Comment{Assume buffer has been filled} \\
        \hspace*{\algorithmicindent} \textbf{Input:} $\colorS \in \{2^{\mathbb{N}}\},\;\; \colorT \in [\colorS - 1 \twodots 2^{\colorS} - 1)$ \Comment{Buffer size and current logical time}\\
        \hspace*{\algorithmicindent} \textbf{Output:} $\colorTbar \in [0 \twodots \colorT)$ \Comment{Ingestion time of stored data item, if any}
        \State $\texttt{uint\_t} ~ ~ \colors \gets \Call{BitLength}{\colorS} - 1$
        \State $\texttt{uint\_t} ~ ~ \colort \gets \Call{BitLength}{\colorT} - \colors $ \Comment{Current epoch}
        \State $\texttt{bool\_t} ~ ~ \epsilon_{\colortau} \gets \Call{BitFloorSafe}{2\colort} \;\; > \;\; \colort + \Call{BitLength}{\colort}$ \Comment{Correction factor for calculating meta-epoch $\colortau$}
        \State $\texttt{uint\_t} ~ ~ \colortau_{\mkern-5mu\scriptscriptstyle 0} \gets \Call{BitLength}{\colort} - \Call{I}{\epsilon_{\colortau}}$ \Comment{Current meta-epoch}
        \State $\texttt{uint\_t} ~ ~ \colortau_{\mkern-5mu\scriptscriptstyle 1} \gets \colortau_{\mkern-5mu\scriptscriptstyle 0} + 1$ \Comment{Next meta-epoch}
        \State $\texttt{uint\_t} ~ ~ \colort_0 \gets 2^{\colortau_{\mkern-5mu\scriptscriptstyle 0}} - \colortau_{\mkern-5mu\scriptscriptstyle 0}$ \Comment{Opening epoch of current meta-epoch}
        \State $\texttt{uint\_t} ~ ~ \colorT_0 \gets 2^{\colort + \colors - 1}$ \Comment{Opening time of current epoch}
        \State $\texttt{uint\_t} ~ ~ M' \gets \Call{ElvisOp}{\Call{RightShift}{\colorS, \;\; \colortau_{\mkern-5mu\scriptscriptstyle 1}}, \;\; 1}$ \Comment{Number of invading segments present at current epoch}
        \State $\texttt{uint\_t} ~ ~ w_0 \gets 2^{\colortau_{\mkern-5mu\scriptscriptstyle 0}} - 1$ \Comment{Smallest segment size at outset of meta-epoch $\colortau_{\mkern-5mu\scriptscriptstyle 0}$}
        \State $\texttt{uint\_t} ~ ~ w_1 \gets 2^{\colortau_{\mkern-5mu\scriptscriptstyle 1}} - 1$ \Comment{Smallest segment size at outset of meta-epoch $\colortau_{\mkern-5mu\scriptscriptstyle 1}$}
        \State $\texttt{uint\_t} ~ ~ \colorh' \gets 0$ \Comment{Reserved \hv{} for site $\colork=0$}
        \State $\texttt{uint\_t} ~ ~ m_p \gets 0$ \Comment{Physical segment index at site $\colork=0$ (i.e., left-to-right position)}
        \ForAll{$\colork \in [0\twodots \colorS)$} \Comment{Iterate over all buffer sites}
        \State $\texttt{uint\_t} ~ ~ b_l \gets \Call{CountTrailingZeros}{M + m_p}$ \Comment{Logical bunch index in reverse fill order\ldots}
        \Statex \Comment{\dots (i.e., decreasing nestedness/increasing initial size $r$)}
        \State $\texttt{uint\_t} ~ ~ \epsilon_w \gets \Call{I}{m_p = 0}$ \Comment{Correction factor for segment size $w$}
        \State $\texttt{uint\_t} ~ ~ w \gets w_1 + b_l + \epsilon_w$ \Comment{Number of sites in current segment (i.e., segment size)}
        \State $\texttt{uint\_t} ~ ~ m_l' \gets \Call{RightShift}{M + m_p, \;\; b_l + 1}$ \Comment{Guess logical (fill order) segment index}
        \Statex \Comment{\textbf{Scenario A}: site in invaded segment, \hv{} ring buffer intact}
        \State $\texttt{bool\_t} ~ ~  \rchi_{\mkern-5mu\mathrm{A}} \gets \colorh' - (\colort - \colort_0) > w - w_0$ \Comment{Will \hv{} ring buffer be invaded in future epoch $(\colort' > \colort) \in \lBrace \colortau_{\mkern-5mu\scriptscriptstyle 0} \rBrace$?}
        \State $\texttt{uint\_t} ~ ~ \colorT_i \gets 2^{\colorh'}(2m_l' + 1) - 1$ \Comment{When will current site $\colork$ overwritten by invader?}
        \State $\texttt{bool\_t} ~ ~ \rchi_{\mkern-5mu\mathrm{\widehat{A}}} \gets (\colorh' - (\colort - \colort_0) = w - w_0) \wedge (\colorT_i \geq \colorT)$ \Comment{Current site $\colork$ to be overwritten by invader later in $\colort$?}
        \Statex \Comment{\textbf{Scenario B}: site in invading segment, \hv{} ring buffer intact}
        \State $\texttt{bool\_t} ~ ~ \rchi_{\mkern-5mu\mathrm{B}} \gets (\colort - \colort_0 < \colorh' < w_0) \wedge (\colort < \colorS - \colors)$ \Comment{Will \hv{} ring buffer be invaded in future epoch $(\colort' > \colort) \in \lBrace \colortau_{\mkern-5mu\scriptscriptstyle 0} \rBrace$?}
        \State $\texttt{uint\_t} ~ ~ \colorT_r \gets \colorT_0 + \colorT_i$ \Comment{When will current site $\colork$ be refilled after ring buffer halves?}
        \State $\texttt{bool\_t} ~ ~ \rchi_{\mkern-5mu\mathrm{\widehat{B}}} \gets (\colorh' = \colort - \colort_0) \wedge (\colorT_r \geq \colorT) \wedge (\colort < \colorS - \colors) $ \Comment{Current site $\colork$ to be refilled after invasion later in $\colort$?}
        \Statex \texttt{/*} \textit{Note:} $\Call{I}{\rchi_{\mkern-5mu\mathrm{A}}} + \Call{I}{\rchi_{\mkern-5mu\mathrm{\widehat{A}}}} + \Call{I}{\rchi_{\mkern-5mu\mathrm{B}}} + \Call{I}{\rchi_{\mkern-5mu\mathrm{\widehat{B}}}} \in \{0,1\}$ \texttt{*/} \Comment{Apply corrections for complicating scenarios\ldots}
        \State $\texttt{uint\_t} ~ ~ \epsilon_M \gets \Call{I}{\rchi_{\mkern-5mu\mathrm{A}} \vee \rchi_{\mkern-5mu\mathrm{\widehat{A}}} \vee \rchi_{\mkern-5mu\mathrm{B}} \vee \rchi_{\mkern-5mu\mathrm{\widehat{B}}}} \;\; \times \;\; M'$ \Comment{Correction on guessed segment count $M'$}
        \State $\texttt{uint\_t} ~ ~ \epsilon_{\colorh} \gets \Call{I}{\rchi_{\mkern-5mu\mathrm{A}} \vee \rchi_{\mkern-5mu\mathrm{\widehat{A}}}} \;\; \times \;\; (w - w_0)$ \Comment{Correction on guessed reserved \hv{} $\colorh'$}
        \State $\texttt{uint\_t} ~ ~ \epsilon_{\colorT} \gets \Call{I}{\rchi_{\mkern-5mu\mathrm{\widehat{A}}} \vee \rchi_{\mkern-5mu\mathrm{\widehat{B}}}} \;\; \times \;\; (\colorT - \colorT_0)$ \Comment{Need to snap back to outset of current epoch $\colort$?}
        \State $\texttt{uint\_t} ~ ~ M \gets M' + \epsilon_M$ \Comment{Corrected number of segments in \hv{} ring buffer $M$}
        \State $\texttt{uint\_t} ~ ~ \colorh \gets \colorh' - \epsilon_{\colorh}$ \Comment{Corrected resident \hv{} $\colorh$}
        \State $\texttt{uint\_t} ~ ~ \colorT_c \gets \colorT - \epsilon_{\colorT}$ \Comment{Corrected lookup time $\colorT$}
        \State $\texttt{uint\_t} ~ ~ m_l \gets \Call{ElvisOp}{\Call{I}{\rchi_{\mkern-5mu\mathrm{A}} \vee \rchi_{\mkern-5mu\mathrm{\widehat{A}}}} \;\; \times \;\; (M' + m_p), \;\; m_l'}$ \Comment{Corrected logical segment index $m_l$}
        \State \Comment{Decode what \hv{} instance fell on site $\colork$\ldots}
        \State $\texttt{uint\_t} ~ ~ j \gets \Call{RightShift}{\colorT_c + 2^{\colorh}, \;\; \colorh + 1} - 1$ \Comment{Most recent instance of \hv, zero-indexed (i.e., num seen less 1)}
        \State $\texttt{uint\_t} ~ ~ i \gets j - \Call{ModPow2}{j - m_l + M, \;\; M}$ \Comment{Hanoi value incidence resident at site $\colork$}
        \State \textbf{yield} $2^{\colorh}(2i + 1) - 1$ \Comment{Decode ingest time $\colorTbar_{\colork}$ of assigned \hv{}}
        \Statex \Comment{Update state for next site $\colork+1$ to iterate over\ldots}
        \State $\colorh' \gets \colorh' + 1$ \Comment{Increment next site's \hv{} guess}
        \State $m_p \gets m_p + \Call{I}{\colorh' = w}$ \Comment{Bump to next segment if current segment is filled}
        \State $\colorh' \gets \colorh' \;\; \times \;\; \Call{I}{\colorh' \neq w}$ \Comment{Reset \hv{} to zero if segment filled to start new segment}
        \EndFor
    \EndFunction
    \end{algorithmic}
\end{minipage}
\end{algorithm}

\section{Meta-epoch Bound} \label{sec:meta-epoch-bound}

\begin{lemma}[Current meta-epoch upper bounds]
\label{thm:meta-epoch-bound}
The current meta-epoch at epoch $\colort$ is bounded,
\begin{align*}
\colortau \leq
\min\Big(
  \log_2(\colort + \colors),\;\;
  \log_2(\colort) + 1
\Big)
\text{ for } \colort \in [1 \twodots \colorS - \colors).
\end{align*}
\end{lemma}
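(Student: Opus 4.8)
The plan is to anchor the entire argument on the single structural fact recorded in Equation~\ref{eqn:meta-epoch-defn}: since $2^{\colortau} - \colortau$ is the \emph{first} epoch of meta-epoch $\colortau$, every epoch $\colort$ lying in meta-epoch $\colortau \geq 1$ satisfies $\colort \geq 2^{\colortau} - \colortau$. The domain restriction $\colort \in [1 \twodots \colorS - \colors)$ guarantees $\colortau \geq 1$ throughout (meta-epoch $1$ opens at epoch $\colort = 1$), so this lower bound is always available. Both halves of the $\min$ will then follow by rearranging this one inequality and applying a base-$2$ logarithm; the work is entirely in justifying the two rearrangements.

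For the second bound $\colortau \leq \log_2(\colort) + 1$, I would first observe that it is equivalent to $2^{\colortau - 1} \leq \colort$. It therefore suffices to show $2^{\colortau} - \colortau \geq 2^{\colortau - 1}$, i.e.\ the elementary inequality $2^{\colortau - 1} \geq \colortau$, which holds for all $\colortau \geq 1$ by a one-line induction (doubling $2^{\colortau - 1} \geq \colortau$ gives $2^{\colortau} \geq 2\colortau \geq \colortau + 1$). Chaining this with the structural fact yields $\colort \geq 2^{\colortau - 1}$, and taking logarithms closes this case.

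For the first bound $\colortau \leq \log_2(\colort + \colors)$, rearranging the structural fact gives $\colort + \colors \geq 2^{\colortau} - \colortau + \colors$, so it is enough to prove $\colortau \leq \colors$, after which $\colort + \colors \geq 2^{\colortau}$ and the logarithm finishes. To obtain $\colortau \leq \colors$ I would combine the structural lower bound with the domain ceiling $\colort < \colorS - \colors = 2^{\colors} - \colors$, producing $2^{\colortau} - \colortau < 2^{\colors} - \colors$. Because $f(x) = 2^{x} - x$ is strictly increasing on $x \geq 1$ (as $f'(x) = 2^{x}\ln 2 - 1 > 0$ there, and both $\colortau, \colors \geq 1$), this forces $\colortau < \colors$, which is stronger than needed.

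The main obstacle is precisely this first bound, and the care it requires is twofold. Unlike the second bound, it cannot be derived from the meta-epoch lower bound alone and genuinely relies on the upper end of the domain $\colort < \colorS - \colors$. Moreover, one must prove $\colortau < \colors$ directly from the monotonicity of $2^{x} - x$ rather than quoting the bound $\colortau < \colors$ stated in the earlier discussion of time restrictions, since that bound is itself a downstream consequence of this very lemma; invoking it would be circular. Once $\colort \geq 2^{\colortau} - \colortau$ and $\colortau < \colors$ are in hand, the remaining manipulations are routine.
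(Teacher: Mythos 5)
Your proposal is correct and takes essentially the same route as the paper: both arguments rest on the structural fact $\colort \geq 2^{\colortau} - \colortau$ from Equation~\ref{eqn:meta-epoch-defn}, treat the two bounds separately, and use the domain ceiling $\colort < \colorS - \colors$ only for the $\log_2(\colort + \colors)$ half. If anything, you are slightly more careful than the paper, which implicitly relies on the monotonicity of $x \mapsto 2^{x} - x$ to convert ``$\colort \leq 2^{f(\colort)} - f(\colort)$'' into ``$\colortau \leq f(\colort)$'' without stating it, whereas you invoke that monotonicity explicitly and also flag the potential circularity in quoting $\colortau < \colors$ from the earlier restrictions discussion.
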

\begin{proof}

By definition,
\begin{align*}
\colort \geq 2^{\colortau} - \colortau \quad \forall \colort \in \colortausetoft.
\end{align*}
Given $\colort$, it is not possible to derive an analytical expression $f(\colort) = \colortau$ such that $\colort \in \colortausetoft$.
However, we can show an expression $f(\colort)$ as an inclusive upper bound on $\colortau$ with $\colort \in \colortausetoft$ by demonstrating,
\begin{align*}
\colort \leq 2^{f(\colort)} - f(\colort).
\end{align*}
The following demonstrates two such expressions $f(\colort)$ --- one that provides a tighter upper bound on $n$ for small $\colort$ and the other as a tighter bound for large $\colort$.
The result comprises Formulas \ref{eqn:meta-epoch-bound-small} and \ref{eqn:meta-epoch-bound-large}, using the $\min$ operator to apply the tighter of these bounds at each epoch $\colort$.

\begin{proofpart}
First, we show
\begin{align}
\colortau \leq \log_2(\colort) + 1 \text{ for } \colort \geq 1.
\label{eqn:meta-epoch-bound-small}
\end{align}
This bound follows from,
\begin{align*}
2^{\log_{2}(\colort) + 1} - \log_2(\colort) - 1
&= 2\colort - \log_2(\colort) - 1 \\
&\stackrel{\checkmark}{\geq} \colort \text{ for } \colort \in \mathbb{N}^{+}.
\end{align*}
\end{proofpart}

\begin{proofpart}
The upper bound
\begin{align}
\colortau \leq \log_2(\colort + \colors - 1) \label{eqn:meta-epoch-bound-large}
\end{align}
can also be established for $\colort \leq \colorS - \colors$.
Consider,
\begin{align*}
2^{\log_{2}(\colort + \colors)} - \log_2(\colort + \colors)
&\stackrel{?}{\geq} \colort\\
\colort + \colors - \log_2(\colort + \colors)
&\stackrel{?}{\geq} \colort\\
\colors - \log_2(\colort + \colors)
&\stackrel{?}{\geq} 0\\
\log_2(2^{\colors}) - \log_2(\colort + \colors)
&\stackrel{?}{\geq} 0\\
\log_2\frac{2^{\colors}}{\colort + \colors}
&\stackrel{?}{\geq} 0\\
\frac{2^{\colors}}{\colort + \colors}
&\stackrel{?}{\geq} 1\\
\frac{\colorS}{\colort + \colors}
&\stackrel{?}{\geq} 1\\
\colorS
&\stackrel{?}{\geq} \colort + \colors\\
\colorS - \colors &\stackrel{?}{\geq} \colort.
\end{align*}
Stretched and tilted algorithms do not define ingestion for $\colort \geq \colorS - \colors$ (i.e., $\colorT \geq 2^{\colorS - 1}$).
Restricting $\colort$,
\begin{align*}
2^{\log_{2}(\colort + \colors)} - \log_2(\colort + \colors)
&\stackrel{\checkmark}{\geq} \colort \text{ for } \colort \in [1 \twodots \colorS - \colors).
\end{align*}
\end{proofpart}
\end{proof}

\section{Steady Algorithm}

\begin{lemma}[Space required to store $\mathsf{goal\_steady}$] \label{thm:steady-hv-geq-epoch}

At any time $\colorT$ in epoch $\colort$, sufficient buffer space exists to store all data items with \hv{} $\colorh > \colort - 1$.
That is, $\left| \{\colorTbar \in [0 \twodots \colorT) : \colorH(\colorTbar) \geq \colort \} \right| \leq \colorS$.
\end{lemma}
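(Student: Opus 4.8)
The plan is to count the target set exactly and then compare its size against $\colorS$ by invoking the definition of epoch. First I would recast the membership condition in arithmetic terms: by the definition of the hanoi value in Formula~\ref{eqn:hanoi-defn}, $\colorH(\colorTbar) \geq \colort$ holds precisely when $2^{\colort} \mid (\colorTbar + 1)$. Under the shift $\colorTbar \mapsto \colorTbar + 1$, the half-open range $\colorTbar \in [0 \twodots \colorT)$ corresponds to the closed range $\{1, 2, \ldots, \colorT\}$, so the set $\{\colorTbar \in [0 \twodots \colorT) : \colorH(\colorTbar) \geq \colort\}$ is in bijection with the multiples of $2^{\colort}$ in $\{1, \ldots, \colorT\}$. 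Its cardinality is therefore exactly $\lfloor \colorT / 2^{\colort} \rfloor$, reducing the lemma to the inequality $\lfloor \colorT / 2^{\colort} \rfloor \leq \colorS$.

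The remaining work is to verify this bound in each branch of the epoch definition (Equation~\ref{eqn:epoch-defn}). In the filling regime $\colorT < \colorS$ we have $\colort = 0$, so the count is simply $\lfloor \colorT / 1 \rfloor = \colorT < \colorS$, which suffices. For $\colorT \geq \colorS$ the definition gives $\colort = \lfloor \log_2 \colorT \rfloor - \colors + 1$, equivalently $\colort + \colors - 1 = \lfloor \log_2 \colorT \rfloor$, whence $\colorT < 2^{\colort + \colors}$. Dividing yields $\colorT / 2^{\colort} < 2^{\colors} = \colorS$, and since the count is an integer this gives $\lfloor \colorT / 2^{\colort} \rfloor \leq \colorS - 1 < \colorS$.

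Since neither branch involves anything beyond a one-line estimate, the only thing to watch is the index bookkeeping at the epoch seam. The correction in Equation~\ref{eqn:epoch-defn} that forces epoch $\colort = 1$ to begin exactly at $\colorT = \colorS = 2^{\colors}$ (rather than somewhere inside the first doubling) is precisely what keeps the two cases consistent, so I would check the boundary $\colorT = \colorS$ explicitly: there $\colort = 1$ and the count is $\colorS / 2 \leq \colorS$. I expect the sole subtlety to be confirming that the passage from $[0 \twodots \colorT)$ to $\{1, \ldots, \colorT\}$ produces a clean floor with no off-by-one residue, which the above bijection handles directly.
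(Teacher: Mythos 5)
Your proof is correct, but it takes a more direct route than the paper's. The paper first reduces to the epoch's final time point (where storage demand peaks) and then decomposes the count by hanoi value, summing the per-\hv{} incidence counts $\sum_{\colorh=\colort}^{\colors+\colort}\left\lceil 2^{(\colors+\colort)-\colorh-1}\right\rceil = 2^{\colors}$ as a geometric series. You instead collapse the whole count into a single exact formula via the divisibility characterization $\colorH(\colorTbar)\geq\colort \iff 2^{\colort}\mid(\colorTbar+1)$, obtaining $\lfloor \colorT/2^{\colort}\rfloor$ for an arbitrary $\colorT$ in the epoch, and finish with the one-line estimate $\colorT < 2^{\colort+\colors}$ read off from the epoch definition. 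Your version avoids both the worst-case reduction and the ceiling-sum bookkeeping, and it even yields the marginally sharper bound $\leq\colorS-1$ (the paper's closed-interval accounting attains $\colorS$ exactly, the discrepancy being only whether the endpoint of the epoch is included). What the paper's decomposition buys is visibility into the per-hanoi-value breakdown --- one item of the top \hv{}, one of the next, then $2, 4, \ldots, 2^{\colors-1}$ --- which is precisely the structure the bunch/segment layout of Section \ref{sec:steady-mechanism} is built to accommodate; your argument proves the cardinality bound without exhibiting that correspondence. Both branches of your case analysis and the boundary check at $\colorT=\colorS$ are sound.
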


\begin{proof}
It is sufficient to consider epochs' last time point, $\max(\colorT \in \colortsetofT) = 2^{\colors + \colort} - 2$, when storage demand is highest.
Recall that \hv{} $\colorh$ is encountered for the first time at time $\colorT = 2^{\colorh} - 1$.
Summing data item counts for \hv{}'s $\colorh \in [\colort \twodots \colors + \colort]$,
\begin{align*}
\left| \{\colorTbar \in [0 \twodots 2^{\colors + \colort} - 1] : \colorH(\colorTbar) \geq \colort \} \right|\\
&= \sum_{\colorh=\colort}^{\colors + \colort} \left\lceil2^{(\colors + \colort) - \colorh - 1}\right\rceil
= 1 + \sum_{i=1}^{\colors} 2^{i - 1} \\
&= 2^{\colors}\\
&\stackrel{\checkmark}{\leq} \colorS.
\end{align*}
\end{proof}

\begin{lemma}[Placements overwrite \hv{} $\colorh = \colort - 1$]
Placing data items $\colorTbar$ within segments at position $\colorH(\colorTbar)$ modulo segment length ensures elimination of \hv{} $\colorh = \colort - 1$ from each segment.
\end{lemma}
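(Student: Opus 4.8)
The plan is to prove the per-segment elimination claim by induction on the epoch $\colort$, tracking the multiset of hanoi values resident in a single reservation segment across epoch boundaries. The central fact I would establish first is a clean correspondence between a segment's width and the unique \hv{} it receives each epoch. Combining the bunch-to-epoch correspondence laid out in Section~\ref{sec:steady-mechanism} (bunch $i$ receives instances of \hv{} $\colorh$ during epoch $\colort = \colorh - \colors + i + 1$) with the segment-width identity $w = \colorh - \colort + 1 = \colors - i$ read off Algorithm~\ref{alg:steady-site-selection}, I would show that during epoch $\colort$ a segment of width $w$ receives exactly one incoming data item, of \hv{} $\colorh_{\mathrm{new}} = \colort + w - 1$. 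A short counting argument confirms uniqueness: in epoch $\colort$ the number of incidences of \hv{} $\colort + w - 1$ arriving in the host bunch equals that bunch's segment count $2^{i-1}$, so the within-bunch offset $o$ distributes exactly one per segment, while all instances of \hv{} $\colort - 1$ are discarded outright (the guard $\colorh < \colort$) and every other active \hv{} routes to a different bunch.

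Next I would carry out the modular-arithmetic heart of the argument. Since $\colorh_{\mathrm{new}} = \colort + w - 1 \equiv \colort - 1 \pmod{w}$, the within-segment position $p = \colorh_{\mathrm{new}} \bmod w$ equals $(\colort - 1) \bmod w$, so it suffices to show that at the start of epoch $\colort$ the site at position $(\colort - 1) \bmod w$ holds a data item of \hv{} exactly $\colort - 1$. I would capture this through the invariant that each mature segment of width $w$ holds the contiguous run of hanoi values $\{\colort - 1, \colort, \ldots, \colort + w - 2\}$ at the start of epoch $\colort$, with value $\colorh$ residing at position $\colorh \bmod w$; because these are $w$ consecutive integers, they occupy all $w$ residues bijectively, placing $\colort - 1$ at position $(\colort - 1) \bmod w$. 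The incoming item therefore lands precisely on the $\colort - 1$ item and overwrites it, while the surviving values $\{\colort, \ldots, \colort + w - 2\}$ are untouched, so the segment advances to $\{\colort, \ldots, \colort + w - 1\}$ and the invariant is preserved into epoch $\colort + 1$.

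The base case would anchor the invariant at $\colort = 1$ via the epoch-$0$ fill, where no discards occur and the buffer fills exactly. I would show that the hanoi values reaching a width-$w$ segment are precisely those $\colorh$ possessing at least $2^{i-1}+1$ incidences before $\colorT = \colorS$, a condition that reduces to $\colorh \leq w - 1$; hence each such segment receives one instance of every value in $\{0, \ldots, w - 1\}$, which is exactly the invariant at $\colort = 1$. The single wide segment of bunch $0$ (width $\colors + 1$) is handled identically: it collects the incidence-$0$ values $\{0, \ldots, \colors\}$ during epoch $0$ and thereafter overwrites $\colort - 1$ through the congruence $\colort + \colors \equiv \colort - 1 \pmod{\colors + 1}$.

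I expect the main obstacle to be pinning down the indexing bookkeeping cleanly rather than any deep difficulty. Specifically, the delicate steps are establishing the width-to-new-\hv{} correspondence $\colorh_{\mathrm{new}} = \colort + w - 1$, verifying that within a single epoch exactly one new item enters each segment, and confirming that the held-over values $\{\colort, \ldots, \colort + w - 2\}$ genuinely occupy the remaining $w - 1$ positions without collision or disturbance during epoch $\colort$. Once the consecutive-run invariant is in place, the overwrite conclusion is immediate from $\colort + w - 1 \equiv \colort - 1 \pmod{w}$, so essentially all of the care is spent on the base case and on this routing bookkeeping.
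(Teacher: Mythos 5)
Your proposal is correct and rests on the same core fact as the paper's proof: the invading hanoi value $\colorh_{\mathrm{new}} = \colort + \colors - i - 1$ exceeds the segment length $w = \colors - i$ by exactly $\colort - 1$, so the placement position $\colorh_{\mathrm{new}} \bmod w$ coincides with $(\colort - 1) \bmod w$. The paper states only this arithmetic identity and leaves the surrounding scaffolding implicit, whereas you make explicit the inductive invariant (each width-$w$ segment holds the consecutive run $\{\colort-1,\ldots,\colort+w-2\}$ at positions $\colorh \bmod w$) and its epoch-$0$ base case — a more detailed rendering of the same argument.
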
 \label{thm:steady-hv-elimination}
\begin{proof}
Recall that \hv{} $\colorh = \colort + \colors - i - 1$ is placed in the $i$th bunch during epoch $\colort$ for $i>0$.
By construction, segments in the $i$th bunch have $\colors - i$ sites for $i>0$.
We must verify,
\begin{align*}
\colort - 1
&\stackrel{?}{=}
\mathsf{invading\_h.v.} - \mathsf{segment\_length}\\
&\stackrel{?}{=}
(\colort + \colors - i - 1) - (\colors - i)
 \\
&\stackrel{\checkmark}{=} \colort - 1.
\end{align*}
An identical result can be shown for the bunch $i=0$ segment, which has $\colors+1$ sites.
\end{proof}


\section{Stretched Algorithm}

\begin{lemma}[Best-possible stretched criterion satisfaction]
\label{thm:stretched-ideal-strict}
The stretched criterion (i.e., largest gap size ratio) for a buffer of size $\colorS$ at time $\colorT$ can be minimized no lower than,
\begin{align*}
\mathsf{cost\_stretched}(\colorT)
&\geq
\frac{
  1
}{
  1 + \colorS
  - \left\lfloor \colorS \log_{\colorT}\Big(
    (\colorT - \colorS)(\colorT^{1/\colorS} - 1) + 1
  \Big)\right\rfloor
}.
\end{align*}
\end{lemma}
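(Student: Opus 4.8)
The plan is to lower-bound the achievable cost by showing that any retention set meeting a target cost $c$ must contain more than $\colorS$ items once $c$ drops below the stated threshold, so that such small $c$ are infeasible. First I would reduce the cost function to a statement about gaps between consecutive retained timepoints: if $a < b$ are consecutive members of $\colorB_{\colorT}$, then every discarded $\colorTbar \in [a+1 \twodots b)$ has $\colorG_{\colorT}(\colorTbar) = b - a - 1$, so the worst ratio in that gap occurs at its left endpoint $\colorTbar = a+1$ and equals $(b-a-1)/(a+1)$. Hence a set of size at most $\colorS$ achieves $\mathsf{cost\_stretched}(\colorT) \le c$ exactly when every consecutive pair satisfies $b - a - 1 \le \lfloor c(a+1) \rfloor$.

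The discretization crux is the behavior at small timepoints. For any discarded $\colorTbar$ the cost bound requires $\colorG_{\colorT}(\colorTbar) \le c\,\colorTbar$; when $\colorTbar < 1/c$ this forces $\colorG_{\colorT}(\colorTbar) = 0$, i.e. $\colorTbar$ must itself be retained. Thus every timepoint in $[0 \twodots 1/c)$ lies in $\colorB_{\colorT}$, consuming at least $\lceil 1/c \rceil$ slots before any gap is permitted. This ``forced prefix'' is precisely the effect discarded by the continuous estimate in Equation \ref{eqn:approx-gap-bound}, and it is what the strict bound must account for.

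Beyond the forced prefix I would bound how fast reach can grow. Writing $P = \colorTbar + 1$ for a retained timepoint, the gap constraint gives $P_{\text{next}} \le (1+c)P + 1$; the substitution $Q \coloneq P + 1/c$ absorbs the additive term into the clean recurrence $Q_{\text{next}} \le (1+c)Q$, so the number of further retained items needed to reach $\colorT$ is at least $\log_{1+c}$ of the relevant endpoint ratio. Comparing the base $1+c$ against the ideal per-step spacing $\colorT^{1/\colorS}$ converts this base-$(1+c)$ count into the form $\colorS\log_{\colorT}(\cdot)$, and collecting the constants contributed by the forced prefix and by the ideal-spacing comparison produces exactly the argument $Z = (\colorT - \colorS)(\colorT^{1/\colorS} - 1) + 1$, so the exponential region consumes at least $\lfloor \colorS\log_{\colorT} Z \rfloor$ items. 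Since the forced prefix and the exponential region are disjoint and together fit inside the budget $\colorS$, we get $\lceil 1/c \rceil \le 1 + \colorS - \lfloor \colorS\log_{\colorT} Z \rfloor$; using $1/c \le \lceil 1/c\rceil$ and inverting yields the first inequality. The second inequality is immediate, since $\colorT \ge \colorS$ and $\colorT^{1/\colorS} \ge 1$ give $Z \ge 1$, hence $\lfloor \colorS\log_{\colorT} Z \rfloor \ge 0$ and the denominator is at most $1 + \colorS$.

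I expect the main obstacle to be the bookkeeping in the logarithm conversion: turning the intrinsic base-$(1+c)$ count of exponential-region items into the clean base-$\colorT$ expression with the precise argument $Z$, while steering every rounding in the conservative direction so that the floors land exactly as stated. The forced-prefix count and the final budget comparison are routine once the gap characterization is established, but matching the $-\colorS$ and $+1$ corrections inside $Z$ (which arise, respectively, from the items already spent on the forced prefix and from the discrepancy between the base $1+c$ and the ideal base $\colorT^{1/\colorS}$) is the delicate step, and is where the generous $+1$ slack in the denominator of the final bound is absorbed.
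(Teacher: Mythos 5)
Your proposal takes a genuinely different route from the paper's, and it contains a real gap at the step you describe as delicate bookkeeping. For context: the paper's proof does not count retained items at all. It works with total gap space, noting $\sum\colorg\geq\colorT-\colorS$, \emph{positing} that the optimal configuration grows successive gap sizes geometrically by the fixed factor $\colorT^{1/\colorS}$ upward from the minimum integer gap size $1$, solving the geometric series $\sum_{i=0}^{\mathsf{num\_gaps}}\colorT^{i/\colorS}=\colorT-\colorS$ for $\mathsf{num\_gaps}$ (this is the sole source of the argument $Z=(\colorT-\colorS)(\colorT^{1/\colorS}-1)+1$), and then placing the size-$1$ gap at position $\colorTbar\leq\colorS-\mathsf{num\_gaps}+1$ so that its ratio is at least the stated bound. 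Your left-endpoint reduction and forced-prefix observation are correct and sharper than anything the paper makes explicit, but they feed into a different accounting.

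The gap: your tail count is $\log_{1+c}(Q_{\mathrm{end}}/Q_{\mathrm{start}})$, a logarithm whose base depends on the unknown cost $c$, whereas the target $\lfloor\colorS\log_{\colorT}Z\rfloor=\lfloor\log_{\colorT^{1/\colorS}}Z\rfloor$ has the fixed base $\colorT^{1/\colorS}$ and fixed argument $Z$. No identity converts one into the other. The comparison $\log_{1+c}(x)\geq\log_{\colorT^{1/\colorS}}(x)$ holds only if $1+c\leq\colorT^{1/\colorS}$, which you never establish and which fails in part of the regime you must handle; and even granting it, $Q_{\mathrm{end}}/Q_{\mathrm{start}}$ is on the order of $c\colorT/\mathrm{const}$, which matches $Z\approx\colorT(\colorT^{1/\colorS}-1)$ only when $c\approx\colorT^{1/\colorS}-1$. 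The claim that the tail ``consumes at least $\lfloor\colorS\log_{\colorT}Z\rfloor$ items'' is simply false for large $c$, so your budget inequality needs a case split you have not supplied. The $Q=P+1/c$ substitution is also too lossy to land the constants: for $\colorS=4$, $\colorT=32$ the lemma asserts cost at least $1$, yet at $c$ just below $1$ your continuous recurrence certifies only $4$ total retained items, exactly meeting the budget and ruling nothing out; the contradiction appears only after replacing $P_{\mathrm{next}}\leq(1+c)P+1$ with the integer-exact $P_{\mathrm{next}}\leq 2P$ valid for $c<1$. The approach is likely salvageable with integer-exact recurrences and a case analysis on $c$ versus $\colorT^{1/\colorS}-1$, but the stated closed form will not fall out of it as claimed, and as written the proposal does not prove the lemma.
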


\begin{proof}
At time $\colorT > \colorS$, we have discarded at least $\colorT - \colorS$ data items.
Hence, total gap space is $\sum \colorg \geq \colorT - \colorS$.
For optimal minimization of gap size ratio, we may assume
\begin{align}
\mathsf{gap\_space} = \colorT - \colorS.
\label{eqn:gap-space-a}
\end{align}

Due to discretization, the smallest possible gap size is 1 data item.
Optimal retention grows successive gap sizes by a factor of $\colorT^{1/\colorS}$.
Calculating total gap space as a sum of gap sizes,
\begin{align}
\mathsf{gap\_space}
&=
\sum_{i = 0}^{\mathsf{num\_gaps}} \colorT^{i/\colorS} 
\nonumber \\
&=
\frac{
  \colorT^{(\mathsf{num\_gaps} + 1)/\colorS} - 1
}{
  \colorT^{1/\colorS} - 1
}.
\label{eqn:gap-space-b}
\end{align}

Equating \ref{eqn:gap-space-a} and \ref{eqn:gap-space-b} and solving for the number of discrete gaps instantiated,
\begin{align*}
\\
\mathsf{num\_gaps}
&\eqnmarkbox[gray]{numgapsgeq}{\geq}
\left\lfloor
\colorS \log_{\colorT}\Big(
  (\colorT - \colorS)(\colorT^{1/\colorS} - 1) + 1
\Big) - 1
\right\rfloor.
\annotate[yshift=1em]{above,left}{numgapsgeq}{%
Integer floor ensures lower bound on $\mathsf{num\_gaps}$.
}
\end{align*}

Counting discarded time steps and retained ``fence posts,'' the smallest gap (of at least size 1) will be located $\mathsf{num\_gaps} + \mathsf{gap\_space}$ time steps back from the most recent observed time $\colorT$.
Note that the $\mathsf{num\_gaps}$ term accounts for the time steps occupied by retained data between gaps (i.e., ``fence posts'').
So, the first gap will occur at time $\colorTbar = \colorT - \mathsf{num\_gaps} - \mathsf{gap\_space}$ and the gap size ratio will be at least
\begin{align*}
\frac{\colorG_{\colorT}(\colorTbar)}{\colorTbar}
&\geq
\frac{1}{
\colorT
- \left\lfloor
\colorS \log_{\colorT}\Big(
  (\colorT - \colorS)(\colorT^{1/\colorS} - 1) + 1
\Big) - 1
\right\rfloor - (\colorT - \colorS)
}
\end{align*}
for $\colorTbar > 0$.

Simplifying terms gives the result.
\end{proof}

\begin{lemma}[Space required to store $\mathsf{goal\_stretched}$]
\label{thm:stretched-first-n-space}

Buffer space $\colorS$ suffices to store set $\mathsf{goal\_stretched}$.
That is,
\begin{align*}
|\bigcup_{\colorh \geq 0}
\{ \colorTbar = i2^{\colorh + 1} + 2^{\colorh} - 1 \text{ for } i \in [0 \twodots n(\colorT) - 1] : \colorTbar < \colorT \}| \leq \colorS
\end{align*}
for $n(\colorT) = 2^{\colors - 1 - \colortau}$.
\end{lemma}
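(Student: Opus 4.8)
The plan is to bound $|\mathsf{goal\_stretched}|$ at its worst case over time and show that this worst case equals $\colorS - 1 < \colorS$, paralleling the counting argument used for the steady algorithm in Lemma~\ref{thm:steady-hv-geq-epoch}. First I would note that within a single meta-epoch $\colortau$ the threshold $n(\colorT) = 2^{\colors - 1 - \colortau}$ is held fixed, so as $\colorT$ advances the set $\mathsf{goal\_stretched}$ can only grow: each already-seen \hv{} accrues further encountered instances (capped at $n$), and new, higher \hv{}'s first appear. Hence the cardinality is non-decreasing throughout a meta-epoch and can only fall at meta-epoch boundaries, where $n$ halves. It therefore suffices to evaluate the count at the terminal time point of each meta-epoch $\colortau$ --- the end of epoch $\colort = \colort_1 - 1$, where $\colort_1 = 2^{\colortau + 1} - (\colortau + 1)$ is the opening epoch of meta-epoch $\colortau + 1$ by Equation~\ref{eqn:meta-epoch-defn}. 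This pins down the key identity $\colort + \colortau + 2 = 2^{\colortau + 1}$.

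At this point $\colorT = 2^{\colort + \colors} - 1$, so the items ingested are exactly $\colorTbar \in [0 \twodots 2^{\colort + \colors} - 1)$. Next I would count, per \hv{}, the number of encountered instances $c_{\colorh} \coloneq |\{\colorTbar \in [0 \twodots \colorT) : \colorH(\colorTbar) = \colorh\}|$. Because instances of \hv{} $\colorh$ fall at $\colorTbar = (2i + 1)2^{\colorh} - 1$, a divisibility count collapses cleanly at this endpoint to $c_{\colorh} = 2^{\colort + \colors - \colorh - 1}$ for $0 \leq \colorh \leq \colort + \colors - 1$, the upper limit being the largest \hv{} yet seen (present once). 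Since the union defining $\mathsf{goal\_stretched}$ is over disjoint \hv{} classes, its cardinality is $\sum_{\colorh} \min\big(n,\, c_{\colorh}\big)$.

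The last step is to split this sum at the crossover $c_{\colorh} = n$, which lands exactly at $\colorh = \colort + \colortau$ (as $2^{\colort + \colors - \colorh - 1} = 2^{\colors - 1 - \colortau}$ iff $\colorh = \colort + \colortau$). Over the saturated range $\colorh \in [0 \twodots \colort + \colortau]$ we retain the full $n$ per value, contributing $(\colort + \colortau + 1)\,n$; over the unsaturated range $\colorh \in [\colort + \colortau + 1 \twodots \colort + \colors - 1]$ we retain all $c_{\colorh}$, a geometric series summing to $2^{\colors - 1 - \colortau} - 1 = n - 1$. Adding these gives $(\colort + \colortau + 2)\,n - 1$, and substituting the meta-epoch identity yields $2^{\colortau + 1} \cdot 2^{\colors - 1 - \colortau} - 1 = 2^{\colors} - 1 < \colorS$, as required.

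The main obstacle I expect is the discretization bookkeeping rather than any deep idea: establishing the exact instance count $c_{\colorh} = 2^{\colort + \colors - \colorh - 1}$ at the endpoint (the ceiling in counting odd multiples of $2^{\colorh}$ collapses only because $\colorT$ sits one below a power of two), and confirming that the saturated/unsaturated split falls on an integer \hv{} so the boundary value $\colorh = \colort + \colortau$ is counted once. I would also want to justify rigorously that the cardinality peaks at the meta-epoch's terminal epoch (the monotonicity claim) and to check the degenerate meta-epoch $\colortau = 0$ --- a single epoch --- for which the same formula applies with $\colort + \colortau + 2 = 2$.
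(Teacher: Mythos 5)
Your proposal is correct and follows essentially the same route as the paper's proof: both express $|\mathsf{goal\_stretched}|$ as $\sum_{\colorh}\min\big(2^{\colors-1-\colortau},\,c_{\colorh}\big)$, split the sum at the crossover $\colorh=\colort+\colortau$, collapse the unsaturated tail as a geometric series to obtain $(\colort+\colortau+2)\,2^{\colors-1-\colortau}$ (up to an off-by-one in how the single top hanoi value is booked), and then apply the meta-epoch identity $\colort\leq 2^{\colortau+1}-\colortau-2$ to conclude the bound by $2^{\colors}=\colorS$. Your added monotonicity observation justifying evaluation only at the terminal epoch of each meta-epoch is a slightly more explicit framing of the same step the paper performs by bounding $\colort$ over $\colortausetoft$.
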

\begin{proof}
As defined over supported $\colorT < 2^{\colorS - 1}$, all meta-epochs $\colortau<\colors$.
Counting data items required by $\mathsf{goal\_stretched}$,
\begin{align*}
\\
|\mathsf{goal\_stretched}|
&=
\sum_{\colorh} \min\Big(
\eqnmarkbox[gray]{sonetau}{2^{\colors - 1 - \colortau}},\;\; |\{ \colorTbar < \colorT : \colorH(\colorTbar) = \colorh \}|\Big)\\
&=
1 + \sum_{\colorh=0}^{\colors + \colort - 1} \min\Big(2^{\colors - 1 - \colortau},\;\; 2^{\colors + \colort - 1 - \colorh}\Big)\\
&=
1 + \sum_{\colorh=0}^{\colors + \colort - 1} \min\Big(2^{\colors - 1 - \colortau},\;\; 2^{\colors + \colort - 1 - \colorh}\Big).
\annotate[yshift=1em]{above,left}{sonetau}{%
The set $\mathsf{goal\_stretched}$ only requires $2^{\colors - 1 - \colortau}$ of each \hv.
}
\end{align*}

Splitting where $\colors - 1 - \colortau = \colors + \colort - 1 - \colorh$ (i.e., $\colorh = \colort + \colortau$),
\begin{align*}
|\mathsf{goal\_stretched}|
&=
1 + \sum_{\colorh=0}^{\colort + \colortau} 2^{\colors - 1 - \colortau} + \sum_{\colorh=\colort + \colortau + 1}^{\colors + \colort - 1} 2^{\colors + \colort - 1 - \colorh}\\
&=
1 + (\colort + \colortau + 1) 2^{\colors - 1 - \colortau} + 2^{\colors - 1 - \colortau} - 1 \tag{via summation identities}\\
&=
(\colort + \colortau + 2) 2^{\colors - 1 - \colortau}\\
&\leq
(2^{\colortau + 1} - (\colortau + 1) - 1 + \colortau + 2) 2^{\colors - 1 - \colortau} \tag{Equation \ref{eqn:meta-epoch-defn} for $\min(\colort \in \lBrace \colortau + 1 \rBrace)$} \\
&\leq
(2^{\colortau + 1}) 2^{\colors - 1 - \colortau}\\
&\leq
2^{\colors}\\
&\stackrel{\checkmark}\leq
\colorS.
\end{align*}




\end{proof}

\begin{lemma}[Minimum retained data items per \hv{}]
\label{thm:stretched-discarded-incidence-count}
No data item $\colorTbar'$ is discarded unless more than $2^{\colors - 1 - \colortau}$ items with \hv{} $\colorH(\colorTbar')$ have been encountered.
That is,
\begin{align*}
|\{
\colorHcal_{\colort}(\colork) = \colorH(\colorTbar')
: \colork \in [0\twodots\colorS)
\}|
&\geq
\min\Big(
|\{
\colorTbar \in [0 \twodots \colorT)
: \colorH(\colorTbar) = \colorH(\colorTbar')
\}|,\;\;
2^{\colors - 1 - \colortau}
\Big).
\end{align*}
\end{lemma}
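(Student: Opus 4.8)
The plan is to recast the statement as a count of reservation segments and then bound that count from below using the segment-growth dynamics of Lemma~\ref{thm:stretched-meta-epoch}. Because within any reservation segment all reserved hanoi values are distinct, the quantity $|\{\colorHcal_{\colort}(\colork) = \colorh : \colork \in [0\twodots\colorS)\}|$ (writing $\colorh = \colorH(\colorTbar')$) equals the number of segments whose current reserved hanoi-value range contains $\colorh$. I would prove the lower bound using only the ``invading'' segments, discarding any contribution from the segments currently being subsumed; this restriction already yields a tight enough bound.

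First I would pin down the invaders at epoch $\colort$ of meta-epoch $\colortau$. By Lemma~\ref{thm:stretched-meta-epoch}, while a segment acts as an invader it grows by exactly one site per epoch, retaining its low end and appending the next-higher hanoi value; so an invading segment of initial size $r$ reserves precisely the contiguous range $[0\twodots r + \colort)$ at epoch $\colort$. The invaders during meta-epoch $\colortau$ are exactly the segments of initial size $r > \colortau$ (the special largest segment among them, as it is subsumed last), and by the hanoi-sequence sizing of Formula~\ref{eqn:stretched-segment-sizes} the number of segments whose initial size exceeds a threshold $x$ — including the special segment — equals exactly $2^{\colors - 1 - x}$. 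Counting the invaders that reserve $\colorh$, namely those with $r > \colortau$ and $r + \colort > \colorh$, i.e.\ $r > \max(\colortau,\;\; \colorh - \colort)$, and using that $\colorh \leq \colors + \colort - 1$ (the largest hanoi value ingested by time $\colorT$) keeps this threshold inside $[0\twodots\colors)$, I obtain the exact value $2^{\colors - 1 - \max(\colortau,\, \colorh - \colort)} = \min\!\big(2^{\colors - 1 - \colortau},\;\; 2^{\colors + \colort - \colorh - 1}\big)$: the cap $2^{\colors - 1 - \colortau}$ binds for low values $\colorh \leq \colort + \colortau$, and the growing term binds otherwise.

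Finally I would compare against the encounter count. Since $\colorT < 2^{\colors + \colort}$ throughout epoch $\colort$, the number of ingested instances of $\colorh$, namely $\lfloor (\colorT - 2^{\colorh})/2^{\colorh + 1}\rfloor + 1$, is at most $2^{\colors + \colort - \colorh - 1}$. As the invaders form a subset of all segments, monotonicity of $\min$ then closes the argument,
\begin{align*}
\min\big( |\{\colorTbar \in [0\twodots\colorT) : \colorH(\colorTbar) = \colorh\}|,\;\; 2^{\colors - 1 - \colortau}\big)
&\leq
\min\big( 2^{\colors + \colort - \colorh - 1},\;\; 2^{\colors - 1 - \colortau}\big) \\
&\leq
\big|\{\colorHcal_{\colort}(\colork) = \colorh : \colork \in [0\twodots\colorS)\}\big|,
\end{align*}
where the middle quantity is exactly the invader count established above.

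The main obstacle is the within-meta-epoch bookkeeping: the reservation layout changes only at epoch boundaries (one site per invader), whereas the encounter count grows continuously with $\colorT$, so the comparison must be checked at the top of epoch $\colort$ and then shown to propagate to all smaller $\colorT$ in the epoch. Secondary care is needed for the special largest segment, whose initial size is offset by one and which is never subsumed, and for confirming that the telescoping segment count and the floor in the encounter count agree exactly rather than merely asymptotically. Reassuringly, restricting to invaders costs nothing here, since — for every hanoi value actually present, $\colorh \leq \colors + \colort - 1$ — their count already equals $\min(2^{\colors-1-\colortau},\, 2^{\colors+\colort-\colorh-1})$ term-by-term.
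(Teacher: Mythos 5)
Your proof is correct and takes essentially the same route as the paper's: both arguments reduce the site count for a hanoi value $\colorh$ to a count of reservation segments, resting on the sizing of Formula \ref{eqn:stretched-segment-sizes} (which yields exactly $2^{\colors-1-x}$ segments of initial size exceeding $x$, special segment included), the one-site-per-epoch growth of invaders, and Lemma \ref{thm:stretched-meta-epoch}'s identification of the $r=\colortau$ segments as those subsumed during meta-epoch $\colortau$. The paper presents this as a two-case split --- low hanoi values receive the full $2^{\colors-1-\colortau}$ quota from the uninvaded segments, while for high hanoi values the reserved-site count doubles in lockstep with instance counts --- and your unified threshold $r > \max(\colortau,\;\colorh-\colort)$, compared against the explicit encounter bound $2^{\colors+\colort-\colorh-1}$, is a closed-form packaging of those same two cases.
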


\begin{proof}
By layout design, this proposition is trivially true for hanoi values with at least $2^{\colors-1-\colortau}$ sites.
However, we must consider \hv{}'s with fewer than $2^{\colors-1-\colortau}$ reserved sites more closely.
For these under-reserved \hv{}'s $\colorh$, we must show that no more items $\colorH(\colorTbar) = \colorh$ are encountered than sites reserved to \hv{} $\colorh$.

\begin{proofpart}[How many hanoi values $\colorh$ have $2^{\colors - 1 - \colortau}$ reserved sites?]

At the outset of each meta-epoch $\colortau$, there remain $2^{\colors - 1 - \colortau}$ uninvaded segments.
Recall that at any epoch $\colort>0$, the smallest invading segment will be slated next for invasion after the current invasion's $R$ epochs.
Thus, the smallest uninvaded segment's size at the outset of meta-epoch $\colortau$ can be calculated by subtracting growth during current meta-epoch $\colortau$ from site at next meta-epoch $\colortau - 1$,
\begin{align*}
R(\colortau + 1) - R(\colortau)
&= (2^{\colortau + 1} - 1) - (2^{\colortau} - 1) \tag{by Lemma \ref{thm:stretched-meta-epoch}}\\
&= 2^{\colortau + 1} - 2^{\colortau}\\
&= 2^{\colortau}.
\end{align*}
With one site contributed for each \hv{} per uninvaded segment, all \hv{} $\colorh < 2^{\colortau}$ thus have reserved at least $2^{\colors - 1 - \colortau}$ sites.
We thus can restrict consideration to $\colorh \geq 2^{\colortau}$.
\end{proofpart}

\begin{proofpart}[Hanoi values without $2^{\colors-1-\colortau}$ reserved sites]
Recall that at the conclusion of epoch $\colort$, we have encountered one of the highest-value \hv{} $\colorh$, one of the second highest-value \hv{} $\colorh-1$, two of the third-highest \hv{} $\colorh-2$, etc.
Also be reminded that the highest-value encountered \hv{} $\colorh$ increases by one per epoch $\colort$.

Initial reservation segments are laid out with sizes drawn from the hanoi sequence (Formula \ref{eqn:stretched-segment-sizes}).
By construction, retained reservations grow exactly one site per epoch.
Because reservations are eliminated in increasing order of their initialized size $r$, we will always (over supported domain $\colorTbar < 2^{\colors}$) have the largest reservation segment $r=\colors$ to provide a site for the lone instances of our two highest hanoi values $\colorh=\colort+\colors$ and $\colorh=\colort+\colors-1$.
Along these lines, we can store the two instances of the next-smallest \hv{} $\colorh=\colort+\colors-2$ in the largest and second-largest reservations $r=\colors$ and $r=\colors-2$.
Proceeding into deeper uninvaded segment layers, reserved site count doubles --- exactly in step with \hv{} instance counts.

With segments $r \geq \colors - \colortau$ active, we can safely store all encountered \hv{} $\colorH(\colorTbar) = \colorh$ instances for the largest $\colors - \colortau$ encountered \hv's.
During epoch $\colort$, the highest-encountered \hv{} is $\colorh=\colors + \colort$.
So, we can safely store all encountered instances for \hv{}'s
\begin{align*}
\colorh
&\geq
\colors + \colort - (\colors - \colortau)\\
&\geq
\colort + \colortau
\end{align*}
over the entirety of meta-epoch $\colortau$.
With $(\colort = 2^{\colortau} - \colortau) \in \colortausetoft$, we can thus further restrict our consideration to $\colorh < 2^{\colortau}$.
\end{proofpart}

\begin{proofpart}[Have we accounted for all hanoi values?]
Combining the above, the question of covering all encountered \hv's $0\leq\colorh\leq\colors+\colort$ becomes whether $\exists \colorh \in \mathbb{N}$ such that $\colorh < 2^{\colortau}$ and $\colorh \geq 2^{\colortau}$.
No such $\colorh$ exists, so we have accounted for all \hv{} in satisfying our requirements.
\end{proofpart}

\end{proof}

\begin{corollary}[Minimum retained items per \hv{}, bound approximations]
\label{thm:stretched-reservation-count}
Under the stretched curation algorithm, space for at least
\begin{align*}
n &\geq
\max\Big(
  \frac{\colorS}{2(\colort + \colors)},\;\;
  \frac{\colorS}{4\colort}
\Big)
\end{align*}
encountered data items of each \hv{} $\colorh$ is provided.
\end{corollary}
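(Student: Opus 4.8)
The plan is to recognize this corollary as a direct algebraic consequence of two already-established facts: the guaranteed per-\hv{} reservation count $n = 2^{\colors - 1 - \colortau}$ from Lemma \ref{thm:stretched-discarded-incidence-count}, and the meta-epoch upper bounds from Supplementary Lemma \ref{thm:meta-epoch-bound}. The first step is to re-express the reservation count in terms of buffer size: since $\colorS = 2^{\colors}$, we have $n = 2^{\colors - 1 - \colortau} = \colorS / 2^{\colortau + 1}$, so lower-bounding $n$ is equivalent to upper-bounding $2^{\colortau}$.

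Next I would invoke Lemma \ref{thm:meta-epoch-bound}, which supplies $\colortau \leq \min\big(\log_2(\colort + \colors),\;\; \log_2(\colort) + 1\big)$ over the supported range $\colort \in [1 \twodots \colorS - \colors)$. Exponentiating each branch of the $\min$ separately yields $2^{\colortau} \leq \colort + \colors$ and $2^{\colortau} \leq 2\colort$. Substituting these in turn into $n = \colorS/2^{\colortau + 1}$ produces the two claimed bounds $n \geq \colorS/\big(2(\colort + \colors)\big)$ and $n \geq \colorS/(4\colort)$. Because both inequalities hold simultaneously, $n$ is bounded below by their maximum, which is exactly the statement of the corollary.

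The only point requiring care is the domain of validity. The bound $\colorS/(4\colort)$ is vacuous at $\colort = 0$, which is precisely the epoch excluded from Lemma \ref{thm:meta-epoch-bound}; there the exact value $n = \colorS/2$ already dominates the remaining bound $\colorS/(2\colors)$, so restricting the corollary to $\colort \geq 1$ loses nothing. I do not anticipate any genuine obstacle: the entire argument amounts to substituting an existing inequality into an existing equality, with the duality between the $\min$ over the meta-epoch bounds and the $\max$ over the reservation-count bounds doing all the work.
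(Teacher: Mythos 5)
Your proposal is correct and follows essentially the same route as the paper's own proof: both cite Lemma \ref{thm:stretched-discarded-incidence-count} for $n = 2^{\colors - 1 - \colortau}$, apply the meta-epoch upper bound of Supplementary Lemma \ref{thm:meta-epoch-bound}, and exponentiate each branch of the $\min$ to obtain the two denominators in the $\max$. Your added remark about the $\colort = 0$ boundary is a minor refinement the paper leaves implicit, but the substance is identical.
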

\begin{proof}

By Lemma \ref{thm:stretched-discarded-incidence-count}, space for at least $2^{\colors - 1 - \colortau}$ encountered items of each \hv{} is provided.
Applying Supplementary Lemma \ref{thm:meta-epoch-bound} completes the result.
At any epoch $1 \leq \colort \leq \colorS - \colors$,
\begin{align*}
n
&= 2^{\colors - 1 - \colortau}\\
&\geq 2^{\colors - 1 - \colortau}\\
&\geq2^{\colors - 1 - \min\Big(
  \log_2(\colort + \colors),\;\;
  \log_2(\colort) + 1
\Big)}\\
&\geq \max\Big(
  2^{\colors - 1 - \log_2(\colort + \colors)},\;\;
  2^{\colors - 1 - \log_2(\colort) - 1}
\Big)\\
&\stackrel{\checkmark}{\geq} \max\Big(
  \frac{\colorS}{2(\colort + \colors)},\;\;
  \frac{\colorS}{4\colort}
\Big).
\end{align*}

\end{proof}

\section{Stretched Algorithm Gap Size Ratio} \label{sec:gap-size-ratio-stretched}

\begin{lemma}[Stretched algorithm retained data items]
\label{thm:retained-equivalence-stretched}
If the first $n$ data items $\colorH(\colorTbar) = \colorh$ for each \hv{} $\colorh$ are retained, then we are guaranteed to have retained
\begin{align*}
\colorTbar
&\in
\{
  j'2^{\colorh'} - 1
  :
  j' \in [1 \twodots 2n]
  \text{ and }
  \colorh' \in \mathbb{N}
\}.
\end{align*}
Note that, although this formulation nominally includes $\colorTbar > \colorT$, an extension filtering $\colorTbar \in [0 \twodots \colorT)$ follows trivially.
\end{lemma}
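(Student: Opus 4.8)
The plan is to recognize this lemma as an equivalence between two parametrizations of the retained set, and to establish it by a clean change of variables based on factoring out powers of two. Recall from the definition of $\mathsf{goal\_stretched}$ that the $i$th instance of \hv{} $\colorh$ is the time $\colorTbar = i2^{\colorh+1} + 2^{\colorh} - 1 = (2i+1)2^{\colorh} - 1$; equivalently, $\colorTbar + 1 = (2i+1)2^{\colorh}$, with the odd factor $2i+1$ indexing the instance. So the hypothesis ``the first $n$ instances of every \hv{} are retained'' means exactly that every $\colorTbar$ with $\colorTbar + 1 = (2i+1)2^{\colorh}$ and $i \le n-1$ is retained. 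The target set $\{j'2^{\colorh'} - 1 : j' \in [1 \twodots 2n], \colorh' \in \mathbb{N}\}$ reparametrizes the same collection, except that $j'$ is now allowed to be even, making the representation many-to-one. The only real content is therefore the forward containment: showing that even the non-canonical (even-$j'$) representations land among the first $n$ instances.

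First I would take an arbitrary target element $\colorTbar = j'2^{\colorh'} - 1$ with $1 \le j' \le 2n$ and factor $j'$ into its odd part and its power of two, writing $j' = (2i+1)2^{a}$ for unique $a \ge 0$ and $i \ge 0$. Substituting gives $\colorTbar + 1 = (2i+1)2^{\colorh' + a}$, so $\colorTbar$ has \hv{} $\colorh = \colorh' + a$ and is precisely its $i$th instance, matching the canonical form $(2i+1)2^{\colorh} - 1$. It then remains only to certify that this instance is among the first $n$, i.e.\ that $i \le n - 1$.

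The crux, and the one spot requiring care, is the parity argument bounding $i$. Since $2i+1 = j'/2^{a} \le j' \le 2n$, I would observe that $2i+1$ is odd while $2n$ is even, so the bound sharpens to $2i + 1 \le 2n - 1$, giving $i \le n - 1$ as needed; hence $\colorTbar$ is retained. I expect this parity upgrade from $\le 2n$ to $\le 2n-1$ to be the main (if modest) obstacle, since omitting it would leave the boundary case $j' = 2n$ unresolved. For the reverse inclusion, which upgrades the result to a set equality, the argument is immediate: any retained first-$n$ instance $(2i+1)2^{\colorh} - 1$ with $0 \le i \le n-1$ is already in canonical form, taking $\colorh' = \colorh$ and odd $j' = 2i+1 \le 2n-1 \le 2n$. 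Finally, I would note the stated caveat that the parametrization nominally admits $\colorTbar \ge \colorT$; intersecting with $[0 \twodots \colorT)$ recovers the genuinely retained items and leaves the argument otherwise unchanged.
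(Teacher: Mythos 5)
Your proof is correct and takes essentially the same route as the paper: both arguments amount to extracting the odd part of $j'$ (the paper does this by splitting into odd and even $j'$ and pulling powers of two out of $j'/2$; you do it in one step by writing $j' = (2i+1)2^{a}$), and in both the decisive observation is the parity upgrade $2i+1 \le 2n \Rightarrow 2i+1 \le 2n-1$, so $i \le n-1$. The forward (easy) containment via $j' = 2j+1 \in [1 \twodots 2n]$ is identical in both.
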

\begin{proof}

Recall that the $j$th instance of hanoi value $\colorh$ appears at ingest time
\begin{align*}
\colorTbar
&= j2^{\colorh + 1} + 2^{\colorh} - 1,
\end{align*}
indexed from $j=0$.

The set of retained data items can be denoted
\begin{align*}
\mathsf{have\_retained} \coloneq
\{
  j2^{\colorh + 1} + 2^{\colorh} - 1
  :
  j \in [0 \twodots n-1]
  \text{ and }
  \colorh \in \mathbb{N}
\}.
\end{align*}

We will show $\mathsf{have\_retained}$ equivalent to,
\begin{align*}
\mathsf{want\_retained} \coloneq
\{
  j'2^{\colorh'} - 1
  :
  j' \in [1 \twodots 2n]
  \text{ and }
  \colorh' \in \mathbb{N}
\}.
\end{align*}

\begin{proofpart}[$\mathsf{have\_retained} \subseteq \mathsf{want\_retained}$]
Suppose $\colorTbar \in \mathsf{have\_retained}$.
Then $\exists j \in [0 \twodots n-1]$ and $\colorh \in \mathbb{N}$ such that
\begin{align*}
\colorTbar
&= j2^{\colorh + 1} + 2^{\colorh} - 1\\
&= (2j + 1)2^{\colorh} - 1.
\end{align*}
Noting $2j + 1 \in [1 \twodots 2n]$ gives $\mathsf{have\_retained} \stackrel{\checkmark}{\subseteq} \mathsf{want\_retained}$.
\end{proofpart}

\begin{proofpart}[$\mathsf{want\_retained} \subseteq \mathsf{have\_retained}$]
Suppose $\colorTbar \in \mathsf{want\_retained}$.
Then $\exists j'\in [1 \twodots 2n]$ and $\colorh' \in \mathbb{N}$ such that
\begin{align*}
\colorTbar
&= j'2^{\colorh'} - 1.
\end{align*}

First, where $j' \in [1,3,5,\;\;\ldots,2n-1]$,
\begin{align*}
\colorTbar
&= \frac{j'-1}{2} 2^{\colorh' + 1} + 2^{\colorh'} - 1.
\end{align*}
Because $\frac{j'-1}{2} \in [0 \twodots n-1]$ here, $\mathsf{want\_retained} \stackrel{\checkmark}{\subseteq} \mathsf{have\_retained}$ in this case.

In the case that $j' \in [0,2,4,\;\;\ldots,2n]$,
\begin{align*}
\colorTbar
&= j'2^{\colorh'} - 1\\
&=
\eqnmarkbox[gray]{inpositiven}{
  \frac{j'}{2}
}2^{\colorh' + 1} - 1.\\
\annotate[yshift=0em]{below,right}{inpositiven}{$\in [1\twodots n]$}
\end{align*}

Recalling that $\colorH(j'/2 - 1) = \log_2 \max\{ i \in \{2^{\mathbb{N}}\} : j'/2 \bmod i = 0 \}$,
\begin{align*}
\\
\colorTbar
&=
\eqnmarkbox[gray]{inoddintegers}{
  \frac{j'/2}{2^{\colorH(j'/2 - 1)}}
}
2^{\colorh'} - 1\\
&=
\eqnmarkbox[gray]{inevenintegers}{
  \Big(\frac{j'/2}{2^{\colorH(j'/2 - 1)}} - 1\Big)
}
2^{\colorh'} + 2^{\colorh'} - 1.\\
\annotate[yshift=1em]{above,right}{inoddintegers}{$\in \{x \in [1,3,5,\;\;\ldots] : x \leq n\}$}
\annotate[yshift=0em]{below,right}{inevenintegers}{$\in \{x \in [0,2,4,\;\;\ldots] : x \leq n - 1\}$}
\end{align*}
Pulling out a factor of 2 from the first coefficient,
\begin{align*}
  \colorTbar
  &=
  \eqnmarkbox[Orchid]{}{
  \frac{
    \frac{j'/2}{2^{\colorH(j'/2 - 1)}} - 1
  }{2}
  }
2^{\colorh' + 1} + 2^{\colorh'} - 1.
\end{align*}
With
\begin{align*}
  \eqnmarkbox[Orchid]{}{
  \frac{
    \frac{j'/2}{2^{\colorH(j'/2 - 1)}} - 1
  }{2}
  }
  &\in \{x \in \mathbb{N}: x \leq (n-1)/2\}\\
  &\stackrel{\checkmark}{\in} [0 \twodots n-1],
\end{align*}
we have $\mathsf{want\_retained} \stackrel{\checkmark}{\subseteq} \mathsf{have\_retained}$ in this case, too.
\end{proofpart}
\end{proof}

\begin{lemma}[Stretched gap size ratio given first $n$ items per \hv{}]
\label{thm:gap-size-ratio-stretched}
If the first $n$ data items $\colorH(\colorTbar) = \colorh$ for each \hv{} $\colorh$ are retained at time $\colorT$, then gap size ratio is bounded,
\begin{align*}
\mathsf{cost\_stretched}(\colorT)
&\leq
\frac{1}{n}.
\end{align*}
\end{lemma}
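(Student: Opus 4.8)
The plan is to reduce the statement to a single, well-chosen pair of retained timepoints and to lean entirely on the combinatorial characterization furnished by Lemma~\ref{thm:retained-equivalence-stretched}. That lemma guarantees that, once the first $n$ instances of each \hv{} are kept, the retained set contains every point $j'2^{\colorh'} - 1$ with $j' \in [1 \twodots 2n]$ and $\colorh' \in \mathbb{N}$ (restricted to $[0 \twodots \colorT)$). Since $\mathsf{cost\_stretched}(\colorT)$ is a maximum over $\colorTbar \in [1 \twodots \colorT)$, it suffices to bound $\colorG_{\colorT}(\colorTbar)/\colorTbar$ for an arbitrary such $\colorTbar$. If $\colorTbar$ is itself retained the gap is $0$, so I would assume $\colorTbar \in \colorBnot_{\colorT}$.

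First I would record the elementary fact that, for discarded $\colorTbar$, the gap equals $b - a - 1$, where $a$ is the nearest retained timepoint strictly below $\colorTbar$ and $b$ is the nearest retained timepoint strictly above $\colorTbar$ (or $b = \colorT$ if none has been ingested, since $\colorBnot_{\colorT} \subseteq [0 \twodots \colorT)$). Hence it is enough to exhibit one retained pair $a' \le \colorTbar < b'$ from the family of Lemma~\ref{thm:retained-equivalence-stretched} with $b' - a' - 1$ small: monotonicity $a \ge a'$ and $b \le b'$ then yields $\colorG_{\colorT}(\colorTbar) \le b' - a' - 1$.

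The crux is the choice of scale. I would take $\colorh'$ to be the smallest nonnegative integer with $2^{\colorh'} > (\colorTbar + 1)/(2n)$, set $j = \lfloor (\colorTbar + 1)/2^{\colorh'} \rfloor$, and use the consecutive progression points $a' = j2^{\colorh'} - 1$ and $b' = (j+1)2^{\colorh'} - 1$. The bookkeeping to verify is $j \in [1 \twodots 2n - 1]$: the upper bound comes from $2^{\colorh'} > (\colorTbar+1)/(2n)$, and the lower bound from minimality of $\colorh'$, which forces $2^{\colorh'} \le (\colorTbar+1)/n \le \colorTbar + 1$. This puts both indices $j$ and $j+1$ in $[1 \twodots 2n]$, so $a'$ and $b'$ lie in the retained family; $a'$ is encountered since $a' \le \colorTbar < \colorT$, and $b \le b'$ holds regardless of whether $b'$ has been encountered, because if not then $b' \ge \colorT \ge b$. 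Thus $\colorG_{\colorT}(\colorTbar) \le b' - a' - 1 = 2^{\colorh'} - 1$.

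To finish, I would note that a discarded $\colorTbar$ cannot arise from $\colorh' = 0$ (that would force $a' = \colorTbar$ to be retained), so the minimality bound $2^{\colorh'} \le (\colorTbar+1)/n$ applies and gives $\colorG_{\colorT}(\colorTbar) \le (\colorTbar + 1 - n)/n$; dividing by $\colorTbar$ and invoking $n \ge 1$ collapses this to $\colorG_{\colorT}(\colorTbar)/\colorTbar \le 1/n$, and maximizing over $\colorTbar$ delivers the claim. I expect the main obstacle to be the frontier case, where the natural upper bracket $b'$ overshoots the current time $\colorT$; the clean resolution is that $\colorBnot_{\colorT} \subseteq [0 \twodots \colorT)$ already caps the gap at $\colorT - a' - 1 \le b' - a' - 1$, so the same scale works uniformly and no special treatment near $\colorT$ is needed.
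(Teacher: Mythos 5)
Your proposal is correct and follows essentially the same route as the paper: both arguments invoke Lemma~\ref{thm:retained-equivalence-stretched}, locate the binary scale $2^{\colorh'} \approx \left\lceil (\colorTbar+1)/(2n) \right\rceil_{\mathrm{bin}}$ at which consecutive retained points of the form $j'2^{\colorh'}-1$ bracket $\colorTbar$, bound the gap by $2^{\colorh'}-1 \leq (\colorTbar+1)/n - 1$, and close with the identical algebraic simplification using $n \geq 1$. Your write-up is somewhat more explicit than the paper's about why the bracketing pair exists (the index bookkeeping $j, j+1 \in [1 \twodots 2n]$) and about the frontier case $b' \geq \colorT$, but these are elaborations of the same argument rather than a different one.
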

\begin{proof}

From Lemma \ref{thm:retained-equivalence-stretched}, we have retained data items
\begin{align*}
\mathsf{want\_retained} =
\{
  j'2^{\colorh'} - 1
  :
  j' \in [1 \twodots 2n]
  \text{ and }
  \colorh' \in \mathbb{N}
\}.
\end{align*}

What is the smallest $m \in \{2^{\mathbb{N}}\}$ such that $m \times (2n - 1) \geq \colorTbar$?
\begin{align*}
m \times 2n
&\geq \colorTbar + 1\\
m
&\geq \frac{\colorTbar + 1}{2n}\\
m
&= \left\lceil \frac{\colorTbar + 1 }{2n} \right\rceil_{\mathrm{bin}}.
\end{align*}

So, $\colorG_{\colorT}(\colorTbar) \leq \left\lceil \frac{\colorTbar + 1 }{2n} \right\rceil_{\mathrm{bin}} - 1$.
Thus, for $\colorTbar > 0$, gap size ratio $\mathsf{cost\_stretched}(\colorT)$ can be bounded
\begin{align*}
\frac{\colorG_{\colorT}(\colorTbar)}{\colorTbar}
&\leq
\frac{
\left\lceil \frac{\colorTbar + 1 }{2n} \right\rceil_{\mathrm{bin}} - 1
}{
\colorTbar
}\\
&\leq
\frac{
2\frac{\colorTbar + 1}{2n} - 1
}{
\colorTbar
}\\
&\leq
\frac{1}{n} + \frac{1 - n}{n\colorTbar}\\
&\leq
\frac{1}{n} - \frac{n - 1}{n\colorTbar}\\
&\leq
\frac{1}{n} - \eqnmarkbox[gray]{geq}{\frac{1 - 1/n}{\colorTbar}}\\
&\stackrel{\checkmark}{\leq}
1 / n.
\annotate[yshift=0em]{below,right}{geq}{$\geq 0$}
\end{align*}

\end{proof}

\section{Tilted Algorithm}

\begin{lemma}[Last instance of a \hv{} within epoch $\colort$]
\label{thm:tilted-last-touched}
The final instance of each \hv{} encountered during an epoch is placed in the rightmost site reserved for that hanoi value.
That is, during any epoch $\colort$
\begin{align*}
\colorK\Big(
  \max\{\colorT \in \colortsetofT : \colorH(\colorT) = \colorh\}
\Big)
=
\max\{\colork \in [0\twodots\colorS) : \colorHcal_{\colort}(\colork) = \colorh \}
\end{align*}
for all $\colorh \in \{\colorHcal_{\colort}(\colork) : \colork \in [0\twodots\colorS)\}$.
\end{lemma}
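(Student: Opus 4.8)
The plan is to split the claim into a \emph{timing} fact (when, within epoch $\colort$, the final instance of $\colorh$ is written) and a \emph{placement} fact (where the corresponding ring-buffer slot lands physically). Recall from Section \ref{sec:tilted-mechanism} that the tilted algorithm writes the $i$-th occurrence of $\colorh$ --- with zero-indexed incidence $i = \lfloor \colorT/2^{\colorh+1}\rfloor$ --- into ring-buffer slot $i \bmod B$, where $B$ is the number of sites reserving $\colorh$ during epoch $\colort$, and where slot $j \in [0\twodots B)$ denotes the $\colorh$-reservation obtained by ordering the reserved sites by decreasing initial segment size $r$, ties broken left-to-right in buffer space. Since only hanoi values actually present have a well-defined final instance, I restrict to $\colorh$ with $\{\colorT \in \colortsetofT : \colorH(\colorT)=\colorh\}$ nonempty; the extremal value $\colorh=\colors+\colort$ occurs exactly once (at $\colorT=2^{\colors+\colort}-1$, with $B=1$ and slot $0=B-1$) and is immediate, so I may assume $\colorh \leq \colors+\colort-2$.

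First I would establish that the final incidence is $\equiv B-1 \pmod{B}$. Because epoch $\colort$ closes at the power-of-two boundary $\colorT=2^{\colors+\colort}-1$, the number of occurrences of $\colorh$ seen by the end of the epoch is exactly $2^{\colors+\colort-\colorh-1}$; equivalently the final-in-epoch occurrence has incidence $i^\star = 2^{\colors+\colort-\colorh-1}-1$, so $i^\star+1$ is a power of two. The reserved count $B$ is itself a power of two (each reservation segment contributes at most one site per hanoi value, and the segment layout yields a power-of-two number of segments reserving $\colorh$), and by construction one never reserves more sites for $\colorh$ than instances of $\colorh$ encountered, so $B \leq i^\star+1$ (cf.\ Lemma \ref{thm:stretched-discarded-incidence-count}). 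Two powers of two with $B \leq i^\star+1$ force $B \mid (i^\star+1)$, whence $i^\star \equiv -1 \equiv B-1 \pmod{B}$. Thus the final instance of $\colorh$ in the epoch is written to ring-buffer slot $B-1$.

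It then remains to show that slot $B-1$ --- the last slot in the decreasing-$r$, left-to-right fill order --- is the physically rightmost site reserving $\colorh$. Unwinding the ordering, slot $B-1$ is the $\colorh$-reservation inside the rightmost among the smallest surviving-$r$ segments that reserve $\colorh$, so the content of this step is the purely layout-theoretic claim that, at every epoch $\colort$, the rightmost site reserving $\colorh$ lies in a smallest-$r$ segment, i.e.\ no larger segment reserving $\colorh$ sits to its right. I would prove this from the recursively nested structure of the reservation layout (Figure \ref{fig:hsurf-stretched-intuition}): segment sizes follow the hanoi sequence so that, reading buffer space right-to-left across the relevant nesting, segment size is non-decreasing, and I would carry this right-nesting invariant through the invasion dynamics using Lemma \ref{thm:stretched-meta-epoch}, which guarantees that invasion eliminates the smallest surviving segments first. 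Concretely I would verify that the logical-to-physical bunch computation of Algorithm \ref{alg:tilted-site-selection} sends $b_l = B-1$ to the maximal physical site index among $\colorh$'s reservations. Combining the two parts, the final instance of $\colorh$ is written to slot $B-1 = \max\{\colork \in [0\twodots\colorS) : \colorHcal_{\colort}(\colork)=\colorh\}$, which is the asserted equality.

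The main obstacle is the placement step. The timing step is a short power-of-two divisibility argument, but establishing that slot $B-1$ is the rightmost reservation requires reasoning about the physically discontiguous, recursively nested layout as it evolves under invasion, and in particular maintaining the ``rightmost-reserved-site-has-smallest-$r$'' invariant across epoch and meta-epoch transitions. I expect the bookkeeping of the logical-to-physical bunch mapping, rather than any conceptual subtlety, to constitute the bulk of the work, and I would lean on the validated structure of Figure \ref{fig:hsurf-stretched-intuition} and Lemma \ref{thm:stretched-meta-epoch} to keep that bookkeeping tractable.
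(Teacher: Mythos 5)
Your proposal is correct and follows essentially the same route as the paper's proof: your power-of-two divisibility step ($B$ and the end-of-epoch cumulative incidence $i^{\star}+1$ are both powers of two with $B \le i^{\star}+1$, forcing $B \mid (i^{\star}+1)$ and hence completion of whole ring-buffer cycles so the final instance lands in slot $B-1$) is exactly the paper's Part 2 argument, and your unified treatment also subsumes the paper's Part 1, where reservations accrue in lockstep with incidences so that $B = i^{\star}+1$. Two small cautions: the bound $B \le i^{\star}+1$ is not what Lemma \ref{thm:stretched-discarded-incidence-count} supplies (that lemma is a \emph{lower} bound on reservations), but rather follows from the bunch-accrual synchronization the paper invokes in its Part 1; and your right-to-left monotonicity invariant is false as literally stated (the reversed segment sizes read $1,2,1,3,1,2,1,\ldots$), though the weaker fact you actually need --- that the rightmost segment reserving $\colorh$ has minimal initial size among those reserving $\colorh$, because the largest in-range segment index $j$ with $2^{\colorh}\mid(j+1)$ has $\colorH(j)=\colorh$ exactly --- is true and is the same layout fact the paper itself asserts, without further detail, when it claims the rightmost reserved site is filled last.
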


\begin{proof}
Hanoi value instances do not cycle back to leftmost reservation $r=\colors$ until the number of encountered \hv{} instances $2^{\colort + \colors - \colorh}$ exceeds space for $2^{\colors - \colortau - 1}$ items guaranteed by Lemma \ref{thm:stretched-discarded-incidence-count}.
Before this point, $\colorh \geq \colort + \colortau + 1$.
We will consider this case separately from $\colorh < \colort + \colortau + 1$.

\begin{proofpart}[$\colorh \geq \colort + \colortau + 1$]
\label{thm:hgeqttau}
Order segment bunches by descending initial size $r$.
Observe that bunches $0,\;\;1,\;\;\ldots,\;\;i$ contain a total of $2^i$ segments.
Placing into segments belonging to bunch $i = \colort + \colors - \colorh$ sychronizes accrued reservations with net encountered \hv{} instances, $2^{\colort + \colors - \colorh}$.
Filling a new, smaller $r$ bunch layer each epoch ensures the rightmost reserved site is filled last each epoch.
\end{proofpart}

\begin{proofpart}[$\colorh < \colort + \colortau + 1$]
In this case, the number of sites reserved to a \hv{} $\colorh$ will be $2^{\colors - 1 - \colortau}$ or, if the current meta-epoch's pending invasion has not yet reached \hv{} $\colorh$, $2 \times 2^{\colors - 1 - \colortau}$.
If, given the latter, reserved sites equal encountered \hv{} instances $2^{\colort + \colors - \colorh} = 2 \times 2^{\colors - 1 - \colortau}$, simply proceed to fill $i$th bunch $i=\colort + \colors - \colorh$ as in Part \ref{thm:hgeqttau} above.

Otherwise, to ensure completion of exactly full cycles around our ``ring buffer'' of sites reserved to \hv{} $\colorh$, we must show that the number of sites $\colork$ reserved to \hv{} $\colorh$ evenly divides the number of \hv{} $\colorh$ instances encountered during epoch $\colort$.
That is, we wish to show
\begin{align*}
\\
\eqnmarkbox[gray]{numitemsh}{
  \mathsf{num\_items}_{\colorh}
}
\bmod
\eqnmarkbox[gray]{numsitesh}{
  \mathsf{num\_sites}_{\colorh}
}
&= 0.
\annotate[yshift=1em]{above,left}{numitemsh}{$|\{
  \colorT \in \colortsetofT : \colorH(\colort) = \colorh
\}|$}
\annotate[yshift=1em]{above,right}{numsitesh}{$|\{
  \colork \in [0\twodots\colorS) : \colorHcal_{\colort}(\colork) = \colorh
\}|$}
\end{align*}

How many instances of a hanoi value $\colorh$ are encountered during epoch $\colort$?
This is
\begin{align*}
\mathsf{num\_items}_{\colorh}
= 2^{\colort + \colors - \colorh} - \left\lfloor 2^{\colort + \colors - \colorh - 1}\right\rfloor.
\end{align*}

How many sites are reserved to a hanoi value $\colorh$ during epoch $\colort$?
As established above, we only concern
\begin{align*}
\mathsf{num\_sites}_{\colorh}
&<
2^{\colort + \colors - \colorh}\\
&\leq
2^{\colort + \colors - \colorh - 1}.
\end{align*}
Because both $\mathsf{num\_items}_{\colorh}$ and $\mathsf{num\_sites}_{\colorh}$ are $\in \{2^{\mathbb{N}}\}$, all that remains is to show
\begin{align*}
\mathsf{num\_items}_{\colorh}
&\stackrel{?}{\geq}
\mathsf{num\_sites}_{\colorh}\\
2^{\colort + \colors - \colorh} - \left\lfloor 2^{\colort + \colors - \colorh - 1}\right\rfloor
&\stackrel{\checkmark}{\geq}
2^{\colort + \colors - \colorh - 1}.
\end{align*}
\end{proofpart}
\end{proof}

\begin{lemma}[Leftmost invaded site is overwritten last within epoch $\colort$]
\label{thm:tilted-last-overwritten}
Among the invaded data items $\colork$ overwritten at epoch $\colort > 0$, the leftmost data item is overwritten last.
That is,
\begin{align*}
\min\{ \colorK(\colorT) \text{ for } \colorT \in \colortsetofT : \colorHcal_{\colort-1}(\colorK(\colorT)) \neq \colorH(\colorT) \}
&=
\colorK\Big(\max( \colorT \in \colortsetofT )\Big).
\end{align*}
\end{lemma}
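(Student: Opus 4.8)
The plan is to show that both sides of the claimed identity evaluate to the same buffer site, namely the newly acquired rightmost site of the largest (and leftmost) reservation segment. First I would pin down the right-hand side. Since epoch $\colort$ spans $\colorT \in [2^{\colort+\colors-1} \twodots 2^{\colort+\colors})$, its final ingest is $\max(\colorT \in \colortsetofT) = 2^{\colort+\colors}-1$, whose hanoi value is $\colorH(2^{\colort+\colors}-1) = \colort+\colors$. This is the unique largest hanoi value encountered during the epoch, and it appears exactly once, at incidence $i=0$. Using the stretched/tilted placement rule (a hanoi value $\colorh$ is written at within-segment offset $\colorh$, and incidence $i=0$ selects the largest, leftmost segment, whose bunch base is $\colork_b = 0$ in Algorithm \ref{alg:tilted-site-selection}), this item lands at buffer position $\colork_b + \colorh = \colors + \colort$.

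Next I would identify that position as the newly invaded site of the largest segment. By Formula \ref{eqn:stretched-segment-sizes} the largest segment is initialized at size $\colors+1$, and as the perpetual invader (cf. Lemma \ref{thm:stretched-meta-epoch}) it grows by exactly one site per epoch; hence at epoch $\colort$ it occupies the contiguous block $[0 \twodots \colors + \colort]$, having just acquired position $\colors+\colort$ during epoch $\colort$. In the previous epoch this position lay outside the largest segment, which then had size $\colors+\colort$ and occupied $[0 \twodots \colors+\colort)$, so its reservation differed: $\colorHcal_{\colort-1}(\colors+\colort) \neq \colort+\colors = \colorH(2^{\colort+\colors}-1)$. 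Thus this site qualifies for membership in the set on the left-hand side.

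Finally I would argue this is the leftmost such site. The condition $\colorHcal_{\colort-1}(\colorK(\colorT)) \neq \colorH(\colorT)$ selects exactly the sites whose reservation changed between epochs $\colort-1$ and $\colort$, i.e., the newly acquired rightmost sites of the invading segments; sites retained within a segment (including ring-buffer rewrites under the tilted policy) keep a fixed reservation $\colorHcal_{\colort-1}(\colork) = \colorHcal_{\colort}(\colork) = \colorH(\colorT)$ and so never satisfy the condition. Since the largest segment occupies $[0 \twodots \colors+\colort]$ and every other segment, in particular every other invading segment, lies strictly to its right, any other invaded site has buffer position $> \colors+\colort$. Hence the minimum over invaded sites equals $\colors+\colort$, matching the right-hand side.

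The main obstacle I anticipate is the bookkeeping needed to justify the layout facts cleanly: that the largest segment is contiguous, leftmost, and grows by one site rightward each epoch, and that every remaining (invading or invaded) segment sits entirely to its right. These follow from the recursive nesting established in Section \ref{sec:stretched-mechanism} together with the physical-bunch indexing of Algorithm \ref{alg:tilted-site-selection}, but care is required to confirm that the site absorbed by the largest segment is precisely its new rightmost position rather than any interior site, and that the tilted ring-buffer rewrites introduce no additional reservation change that could yield an invaded site further left.
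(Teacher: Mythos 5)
Your proposal is correct and takes essentially the same route as the paper's proof, which simply asserts ``by design'' that the leftmost invaded site in epoch $\colort$ is $\colork=\colort+\colors$ with invading \hv{} $\colorh=\colort+\colors$, and then observes that this \hv{} first occurs at $\colorT=2^{\colort+\colors}-1$, the final ingest of the epoch. You supply the layout bookkeeping (contiguity and one-site-per-epoch growth of the leftmost segment, and the equivalence of the invasion condition with a reservation change) that the paper leaves implicit, but the core argument is identical.
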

\begin{proof}
By design, the leftmost site invaded during an epoch $\colort > 0$ is $\colork = \colort + \colors$, invaded by \hv{} $\colorh = \colort + \colors$.
Hanoi value $\colorh = \colort + \colors$ occurs first at ingest time $\colorT = 2^{\colort + \colors} - 1$.
Epoch $\colort + 1$ begins at time $\colorT = 2^{\colort + \colors}$, so epoch $\colort$ (which begins at time $\colorT = 2^{\colort + \colors - 1}$) ends at ingest time $2^{\colort + \colors - 1}$, giving the result.
\end{proof}

\begin{lemma}[Monotonicity of \hv{} reservation $\colorHcal_{\colort}(\colork)$ for buffer site $\colork$]
\label{thm:tilted-invader-minus-invaded}
A site's assigned hanoi value reservation never decreases.
Where it increases, it does so by at least 2.
Formally, where $\colorHcal_{\colort + 1}(\colork) \neq \colorHcal_{\colort}(\colork)$,
\begin{align*}
\colorHcal_{\colort + 1}(\colork) - \colorHcal_{\colort}(\colork) \geq 2.
\end{align*}
\end{lemma}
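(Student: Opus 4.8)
The plan is to track, across a single epoch boundary $\colort \to \colort+1$, exactly which buffer sites have their reserved hanoi value altered, and by how much. Recall from Section \ref{sec:stretched-mechanism} that between consecutive epochs the stretched/tilted layout changes only through ``invasion'': each invaded (frozen) segment cedes its current leftmost --- i.e., lowest-reserved-\hv{} --- site to its invading left neighbor, which grows rightward by exactly that one site. I would first argue that $\colorHcal_{\colort}(\colork)$ is unchanged for every site except these absorbed boundary sites: a site interior to (or trailing) an invading segment keeps its offset from the fixed segment start, and a site of a frozen invaded segment retains its original reservation, the segment merely shedding its lowest survivor each epoch. Since within a segment the reservation equals the offset from the segment's left boundary (the $\colork_b + \colorh$ form of Algorithm \ref{alg:stretched-site-selection}), all such unchanged sites satisfy the claim with equality, disposing of the ``never decreases'' assertion away from the moving boundaries.

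Next I would compute the jump at an absorbed site. Let the invading segment have size $w$ during epoch $\colort$ (reserving \hv{}'s $0,\ldots,w-1$) and let its invaded neighbor's current leftmost site carry reserved \hv{} $k$, so that $k$ sites have already been ceded this meta-epoch. After the transition the absorbed site becomes the invader's new rightmost site, reserving \hv{} $w$, whereas beforehand it reserved \hv{} $k$; hence its reservation increases by $w - k$, and the task reduces to showing $w - k \geq 2$. The crux is a lockstep relation: within one meta-epoch $\colortau$ each invading segment grows by one site per epoch while its invaded partner sheds one site per epoch, both measured from the meta-epoch's opening configuration. Thus if the invader opened the meta-epoch at size $w_0$, then after $k$ sheddings we have $w = w_0 + k$, giving $w - k = w_0$ exactly --- a quantity constant over the entire meta-epoch.

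It then remains to lower-bound $w_0$. Using Lemma \ref{thm:stretched-meta-epoch} ($R(r) = 2^r - 1$, with meta-epoch $\colortau$ lasting $2^{\colortau}-1$ epochs), I would note that during meta-epoch $\colortau$ the invaded segments are precisely the mature $r=\colortau$ segments, so every invader has initial size $r \geq \colortau + 1$. The smallest invader reaches its own maturity $R(\colortau+1) = 2^{\colortau+1}-1$ at the close of meta-epoch $\colortau$ after growing by $2^{\colortau}-1$ during it, so it opened the meta-epoch at size $w_0 = R(\colortau+1) - (2^{\colortau}-1) = 2^{\colortau}$. Therefore $w - k = w_0 \geq 2^{\colortau} \geq 2$ for every invasion (these occur only at transitions into meta-epochs $\colortau \geq 1$, as epoch $\colort=0$ fills the buffer without overwrites), which completes both halves of the statement; note the bound is tight, attained at $\colortau = 1$ where the $r=2$ invaders take $\colorh=0$ sites to $\colorh=2$.

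I expect the main obstacle to be rigorously pinning down the ``frozen, ceded-from-the-left'' bookkeeping --- namely, justifying that no non-absorbed site re-indexes (which is what rules out any decrease) and that the absorbed site's pre-transition \hv{} is exactly the shed-count $k$ while its invading neighbor's size is exactly $w_0 + k$. This rests on the recursive nesting (invaded segments interspersed every second, each with an invading left neighbor) and the one-site-per-epoch growth already fixed in Lemma \ref{thm:stretched-meta-epoch}; once that pairing and the lockstep are in hand, the arithmetic $w - k = w_0 = 2^{\colortau} \geq 2$ is immediate.
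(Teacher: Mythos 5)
Your proposal is correct and follows essentially the same route as the paper's proof: only the absorbed boundary sites change reservation, the jump there equals the invading segment's width minus the invaded segment's shed count, and that difference stays constant across the invasion because both the invader's new-site \hv{} and the invaded segment's surviving leftmost \hv{} increment by one per epoch. The only cosmetic difference is that you lower-bound the constant by the invader's meta-epoch-opening width $2^{\colortau}\geq 2$, whereas the paper gets the same $\geq 2$ directly from the fact that singleton segments never invade, so the first absorbed site (reserved \hv{} $0$) is taken by an \hv{} of at least $2$.
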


\begin{proof}
By design, invasion of any segment begins at the segment's leftmost site $\colork$, always assigned $\colorHcal_{\colort}(\colork) = 0$.
Because singleton $r=0$ reservation segments never invade, the invader of this leftmost $\colorh = 0$ site will always stem from a segment $r \geq 1$ and have \hv{} $\colorh > 1$.
The delta $\colorHcal_{\colort + 1}(\colork) - \colorHcal_{\colort}(\colork) \geq 2$ remains constant over subsequent invasion steps because invader and invaded \hv{}'s both increment by exactly 1 each epoch (until complete elimination of the invaded segment).
\end{proof}

\begin{lemma}[Invasion overwrite order within epoch $\colort$]
\label{thm:tilted-invading-overwrite-order}
Except for the leftmost invaded site in segment $r=\colors$, invaded sites are overwritten left-to-right. For $\colort > 0$, pick
\begin{align*}
\\
\colork',\colork''
\in
\eqnmarkbox[gray]{invadedsites}{
  \mathsf{invaded\_sites}_{\colort}
}
: \colors < \colork' < \colork'' < \colorS.
\annotate[yshift=1em]{above,left}{invadedsites}{$\{
  \colork \in [0\twodots\colorS)
  : \colorHcal_{\colort - 1}(\colork) \neq \colorHcal_{\colort}(\colork)
\}$}
\end{align*}
Then,
\begin{align*}
\min\{
  \colorT \in \colortsetofT
  : \colorK(\colorT) = \colork'
\}
<
\min\{
  \colorT \in \colortsetofT
  : \colorK(\colorT) = \colork''
\}.
\end{align*}

\end{lemma}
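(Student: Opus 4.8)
The plan is to convert the statement about first-overwrite times into a statement about within-epoch ingest offsets and to exhibit a clean arithmetic progression governing them. Fix an epoch $\colort > 0$ inside meta-epoch $\colortau$, with opening epoch $\colort_0 = 2^\colortau - \colortau$ (Equation~\ref{eqn:meta-epoch-defn}). For an invaded site $\colork$, the quantity $\min\{\colorT \in \colortsetofT : \colorK(\colorT) = \colork\}$ is the ingest time of the first item written to $\colork$ during epoch $\colort$; since epoch $\colort$ occupies times $[2^{\colort+\colors-1}, 2^{\colort+\colors})$, I would track instead the within-epoch offset $\kappa_\colork := \min\{\colorT : \colorK(\colorT)=\colork\} - 2^{\colort+\colors-1} + 1$ and prove $\kappa_{\colork'} < \kappa_{\colork''}$ whenever $\colork' < \colork''$. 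As offsets are strictly increasing in ingest time, this is equivalent to the claim. The excluded ``leftmost invaded site in segment $r=\colors$'' is exactly the slot freshly appended to the big segment, reserved for the top hanoi value $\colorh = \colort + \colors$; its unique epoch-$\colort$ occurrence falls at the final timestep, so it is written last (consistent with Lemma~\ref{thm:tilted-last-overwritten}) and must be set aside.

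First I would establish the layout fact that every other invaded site $\colork$ is freshly reserved, during epoch $\colort$, for a hanoi value $\colorHcal_\colort(\colork) = \colorh$ with $\colort + \colortau \le \colorh \le \colort + \colors - 1$, and that reading these reservations left-to-right (excluding the big-segment slot) reproduces a shifted copy of the hanoi sequence, $\colorh_k = (\colort + \colortau) + \colorH(k-1)$ for $k = 1, 2, \dots$. This follows from the recursive nesting of reservation segments (Figure~\ref{fig:hsurf-stretched-intuition-reservations}) together with Lemma~\ref{thm:tilted-invader-minus-invaded}, which guarantees each invading segment appends exactly one new, strictly-higher reserved value per epoch; the smallest non-big invading segment has current size $\colort + \colortau$ by the meta-epoch accounting of Lemma~\ref{thm:stretched-meta-epoch}, pinning the smallest fresh-slot value at $\colorh_{\min} = \colort + \colortau$.

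The heart of the argument is to show that the $k$-th such fresh slot is first written at within-epoch offset $\kappa = 2^{\colort+\colortau} \cdot k$, i.e.\ at the consecutive multiples of $2^{\colorh_{\min}}$. An instance of hanoi value $\colorh$ carrying within-epoch incidence $m$ lands at offset $(2m+1)2^{\colorh}$, so every fresh-slot write occurs at a multiple of $2^{\colort+\colortau}$; I would then count that the epoch-$\colort$ instances of hanoi value $\ge \colort+\colortau$ other than the singular top value $\colort+\colors$ number exactly $2^{\colors-1-\colortau}-1$, matching the count of non-big fresh slots, so these offsets are precisely $2^{\colort+\colortau}, 2\cdot 2^{\colort+\colortau}, \dots$ with no gaps. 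It then remains to verify that the cyclic ``ring-buffer'' placement routes the $k$-th of these instances to the $k$-th fresh slot in left-to-right order, for which I would invoke Lemma~\ref{thm:tilted-last-touched} (the final instance of each hanoi value in an epoch occupies the rightmost site reserved to it) together with the alignment of fill cycles to epoch boundaries.

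The main obstacle is precisely this routing step: the ring buffer of sites reserved to a given hanoi value is halved exactly at meta-epoch transitions, so the map from incidence index to physical site is not constant across the invasion and must be tracked carefully --- the same subtlety the paper flags en route to Lemma~\ref{thm:tilted-most-recent-retained}. I expect to handle it by induction on nesting depth, exploiting the self-similarity of the hanoi-sequence-based layout (deleting the everywhere-interspersed smallest segments leaves a shifted copy of the same pattern), so that matching the offsets $2^{\colort+\colortau}k$ to slots reduces to the ordering one nesting level down. Given the closed form $\kappa_\colork = 2^{\colort+\colortau}\cdot k$, strict monotonicity in $k$ --- hence in $\colork$ --- is immediate.
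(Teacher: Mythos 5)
Your proposal follows essentially the same route as the paper's proof: both identify the left-to-right invader reservations (excluding the big-segment slot) as the shifted hanoi sequence $\colort+\colortau+\colorH(k-1)$, observe that the epoch-$\colort$ ingest times with \hv{} $\geq \colort+\colortau$ form the arithmetic progression at within-epoch offsets $k\cdot 2^{\colort+\colortau}$ whose $k$th term carries exactly the $k$th invaded site's reserved \hv{}, match the counts, and finish by the left-to-right fill order of freshly assigned sites. The only difference is cosmetic (within-epoch offsets versus absolute times, and a proposed nesting-depth induction where the paper appeals directly to the cadence match and the algorithm's fill specification).
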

\begin{proof}
Recall from Lemma \ref{thm:stretched-meta-epoch} that $\colortau$ tells how many reservation segment subsumption cycles have elapsed.
Recall also that $\min(\colort \in \colortausetoft) = 2^{\colortau} - \colortau$.

There are $2^{\colors - 1 - \colortau}$ uninvaded reservation segments at meta-epoch $\colortau$.
From left to right across buffer space, invader \hv{}'s during epoch $\colort \geq 1$ are
\begin{align*}
\colort + \colors, \;\; \colorH(0) + \colort + \colortau, \;\; \colorH(1) + \colort + \colortau, \;\;\ldots, \;\; \colorH(2^{\colors - 1 - \colortau} - 2) + \colort + \colortau.
\end{align*}

Rewritten based on properties of the hanoi sequence and excluding the leftmost invader (characterized separately in Lemma \ref{thm:tilted-last-overwritten}),
\begin{align*}
\\
\colorH\Big(
\eqnmarkbox[gray]{hanoitime1}{1}
\times 2^{\colort + \colortau} + 2^{\colort + \colors - 1} - 1\Big),\;\;
\colorH\Big(
\eqnmarkbox[gray]{hanoitime2}{2} \times 2^{\colort + \colortau} + 2^{\colort + \colors - 1} - 1
\Big),
\;\;\ldots,\;\;
\colorH\Big(
\eqnmarkbox[gray]{hanoitime3}{
(2^{\colors - 1 - \colortau} - 1)
}
\times 2^{\colort + \colortau}
+ 2^{\colort + \colors - 1} - 1
\Big).
\end{align*}
\annotate[yshift=1em]{above,right}{hanoitime1,hanoitime2,hanoitime3}{$[1 \twodots 2^{\colors - 1 - \colortau} - 1]$}

To reach our proof objective, we will analyze the sequence of timepoints $\mathsf{invader\_hv\_times}_{\colort}$ mapping to invader \hv{}'s,
\begin{align*}
\colorT \in i \times 2^{\colort + \colortau} +
2^{\colort + \colors - 1} - 1
:
i \in [1 \twodots 2^{\colors - 1 - \colortau} - 1].
\end{align*}
First, we should confirm $\mathsf{invader\_hv\_times}_{\colort} \subseteq \colort$.
With $\colort + \colortau \geq 2$ for all invasions because both $\colort > 0$ and $\colortau > 0$, we readily have
\begin{align*}
\min(\mathsf{invader\_hv\_times}_{\colort})
&\stackrel{?}{\geq}
\min(\colorT \in \colortsetofT)\\
2^{\colort + \colortau} + 2^{\colort + \colors - 1} - 1
&\stackrel{\checkmark}{\geq}
2^{\colort + \colors - 1}.
\end{align*}
Testing $\mathsf{invader\_hv\_times}_{\colort}$ against the upper bound of $\colort$,
\begin{align*}
\max(\mathsf{invader\_hv\_times}_{\colort})
&\stackrel{?}{\leq}
\max(\colorT \in \colortsetofT)
\\
(2^{\colors - 1 - \colortau}  - 1)
2^{\colort + \colortau} + 2^{\colort + \colors - 1}
&\stackrel{?}{\leq}
2^{\colort + \colors} - 1
\\
2^{\colors - 1 - \colortau} \times 2^{\colort + \colortau} - 2^{\colort + \colortau}
&\stackrel{?}{\leq}
2^{\colort + \colors - 1} - 1
\\
2^{\colort + \colors - 1}
&\stackrel{\checkmark}{\leq}
2^{\colort + \colors - 1} + 3.
\tag{as above, $\colortau + \colort \geq 2$}
\end{align*}

Our final step is to establish that $\mathsf{invader\_hv\_times}_{\colort}$ captures all $\colorT \in \colortsetofT$ here $\colorh = \colorH(\colorT)$ is an invading hanoi value --- that is, $\colorH(\colorT) \geq \colort + \colortau$.
Remark that $\colorT$ such that $\colorH(\colorT) \geq \colort + \colortau$ occur are spaced $2^{\colort + \colortau}$ items apart.
Because $\mathsf{invader\_hv\_times}_{\colort}$ have an identical cadence, our question boils down to whether
\begin{align*}
\min(\mathsf{invader\_hv\_times}_{\colort})
- 2^{\colort + \colortau}
&\stackrel{?}{<}
\min(\colorT \in \colortsetofT)\\
2^{\colort + \colors - 1} - 1
&\stackrel{\checkmark}{<}
2^{\colort + \colors - 1},
\end{align*}
and
\begin{align*}
\max(\mathsf{invader\_hv\_times}_{\colort})
+ 2^{\colort + \colortau}
&\stackrel{?}{>}
\max(\colorT \in \colortsetofT)\\
2^{\colors - 1 - \colortau} 2^{\colort + \colortau}
&\stackrel{?}{>}
2^{\colort + \colors} - 1\\
2^{\colort + \colors - 1}
+ 2^{\colort + \colors - 1}
&\stackrel{?}{>}
2^{\colort + \colors} - 1\\
2^{\colort + \colors}
&\stackrel{\checkmark}{>}
2^{\colort + \colors} - 1.
\end{align*}

The result follows from algorithm specification, with each \hv{} filling its own newly assigned reservation sites from left to right.
\end{proof}

\begin{lemma}[Minimum recent items retained per \hv{}]
\label{thm:tilted-most-recent-retained}
At least the most recent $2^{\colors - 1 - \colortau}$ encountered instances of every \hv{} $\colorh$ are retained under tilted curation.
Concretely, we wish to show $\mathsf{goal\_tilted} \subseteq \colorB_{\colorT}$, with $\mathsf{goal\_tilted}$ defined per Equation \ref{eqn:goal-tilted-set}.
\end{lemma}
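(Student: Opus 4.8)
The plan is to fix a single hanoi value $\colorh$ and prove the stronger per-value statement that, at every time $\colorT$ in meta-epoch $\colortau$, the $n \coloneq 2^{\colors - 1 - \colortau}$ most recent encountered instances of $\colorh$ occupy reserved sites; taking the union over $\colorh$ then yields $\mathsf{goal\_tilted} \subseteq \colorB_{\colorT}$ directly from Equation \ref{eqn:goal-tilted-set}. The tilted algorithm writes the $i$th instance of $\colorh$ to the $(i \bmod M)$th reserved site in fill order, where $M$ is the current number of sites reserved to $\colorh$; by Lemma \ref{thm:stretched-discarded-incidence-count} we have $M \in \{n, 2n\}$ throughout meta-epoch $\colortau$, with $M = 2n$ until the meta-epoch's invasion front descends to level $\colorh$ and $M = n$ thereafter. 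The invariant I would carry is that the $M$ reserved sites hold exactly the $M$ most recent instances of $\colorh$, and that at each epoch boundary the write cursor has just completed a full cycle. Lemma \ref{thm:tilted-last-touched} is what makes the cursor well-behaved: it guarantees that the number of $\colorh$-instances seen within any epoch is an integer multiple of $M$ and that the final one lands in the rightmost reserved site, so every epoch closes on a completed cycle and the phase is pinned.

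Granting this invariant, the no-halving regimes are immediate. Whenever $M$ is constant across an epoch (either $M = 2n$ or $M = n$), cyclic overwriting always evicts the oldest resident, so the reserved sites perpetually hold the most recent $M \geq n$ instances, and in particular the most recent $n$. This covers all of meta-epoch $\colortau$ except the single epoch at which $\colorh$'s ring buffer halves, which is where the real work lies.

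The crux is that halving epoch. There $M$ drops from $2n$ to $n$ as the invasion front subsumes the $\colorh$-sites of the $n$ invaded (smallest-initialized, alternating) segments, while new instances of $\colorh$ keep arriving and cycling. I must show that at the close of this epoch the $n$ surviving sites hold the $n$ most recent instances---equivalently, that every $\colorh$-instance destroyed on the doomed sites is older than all $n$ instances left on the surviving sites. I would (i) use the recursively nested layout to pin the doomed versus surviving $\colorh$-sites down as a fixed alternating half; (ii) use the fill order (descending initialized segment size, then left-to-right) together with the rightmost-site anchor of Lemma \ref{thm:tilted-last-touched} to read off the age of each resident by physical site at the epoch's outset; and (iii) use Lemmas \ref{thm:tilted-invading-overwrite-order} and \ref{thm:tilted-last-overwritten} to fix the intra-epoch order in which the doomed sites are overwritten, with Lemma \ref{thm:tilted-invader-minus-invaded} ensuring that a freed low-$\colorh$ slot is never revisited by an invader before the epoch ends. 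Cross-checking the overwrite schedule of (iii) against the age ranking of (ii)---and folding in the instances written during the epoch itself---should show the doomed half only ever loses instances strictly older than the surviving half's contents, re-establishing the invariant with the new $M = n$ and thus with the correct floor $2^{\colors - 1 - \colortau}$.

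The main obstacle is precisely this halving epoch, because the clean cycle-completion phase guaranteed at ordinary epoch boundaries by Lemma \ref{thm:tilted-last-touched} is disrupted: during the transition, fresh cyclic writes into the surviving sites are interleaved with invasion overwrites of the doomed sites while the modulus $M$ itself changes midstream. I expect to need a case split matching the four complicating scenarios the tilted lookup routine isolates through its correction flags---doomed versus surviving segment, crossed with whether a site's $\colorh$-instance is overwritten or refilled later within the same epoch---together with a careful argument that the wrap-around bookkeeping at the $2n \to n$ contraction orphans none of the $n$ instances we are obligated to keep.
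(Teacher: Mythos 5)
Your overall plan matches the paper's proof: reduce to a per-\hv{} ring-buffer invariant, dispose of the constant-modulus regimes immediately, and concentrate on the epoch at which the number of sites reserved to a \hv{} halves, invoking Lemmas \ref{thm:tilted-last-touched}, \ref{thm:tilted-invading-overwrite-order}, \ref{thm:tilted-last-overwritten}, and \ref{thm:tilted-invader-minus-invaded} in the same roles the paper assigns them. The decomposition and the supporting machinery are the same; the issue is with the target you set yourself at the crux.

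The step that would fail is your reformulated invariant for the halving epoch: that ``every $\colorh$-instance destroyed on the doomed sites is older than all $n$ instances left on the surviving sites.'' By Lemma \ref{thm:tilted-last-touched} and the fill order, at the outset of the halving epoch the doomed (smallest, about-to-be-invaded) segments hold precisely the \emph{most recent} $n$ of the $2n$ retained instances, while the surviving segments hold the \emph{older} $n$. So mid-epoch, when a doomed site is overwritten, the surviving half still contains stale instances from the previous cycle that are older than the instance being destroyed, and your inequality is false at exactly the moments you need it. The correct claim --- and what the paper proves --- is that each doomed instance, by the time it is \emph{actually} overwritten, has already dropped out of the most-recent-$n$ set because enough fresh instances have accumulated on the surviving sites; the whole most-recent-$n$ ``snake'' starts inside the danger zone and must outrun the overwrites. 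Securing this requires two quantitative facts your sketch leaves implicit: first, at least two fresh instances of the invaded \hv{} $\colorh$ arrive before the first invading overwrite, since the invader's \hv{} is at least $\colorh+2$ by Lemma \ref{thm:tilted-invader-minus-invaded} and $3\cdot 2^{\colorh}-1 < 2^{\colorh+2}-1$; second, fresh instances of $\colorh$ accrue every $2^{\colorh+1}$ ingests while overwrites of its doomed sites occur no more often than every $2^{\colorh+2}$ ingests, so the replacement front stays ahead of the destruction front for the entire epoch. With the invariant restated in terms of membership in the most-recent-$n$ set rather than an age comparison against the surviving half's contents, your outline goes through and coincides with the paper's argument; the four-scenario case split you anticipate borrowing from the lookup routine is not needed.
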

\begin{proof}
From Lemma \ref{thm:stretched-discarded-incidence-count}, we have reservations available to store at least the first $2^{\colors - 1 - \colortau}$ instances of each hanoi value.
After this point, data item placement cycles around sites reserved to a \hv as a ring buffer --- keeping most recent $2^{\colors - 1 - \colortau}$ instances.
However, we must validate behavior at the transition points where this ring buffer shrinks due to invasion.

In the case of invasion, the number of reserved sites drops from $2^{\colors - \colortau}$ to $2^{\colors - 1 - \colortau}$.
Recall from Lemma \ref{thm:tilted-last-touched} that the final instance of each \hv{} each epoch is placed into the rightmost reservation segment.
We therefore know that the final $2^{\colors - 1 - \colortau}$ instances of a \hv{} encountered during an epoch were laid out left to right in each of the smallest-size remaining segments, $r = \colortau$ (with the last instance occupying the rightmost reservation segment).

So, at the outset of epoch $\colort$, reassigned sites $\{\colork \in [0\twodots\colorS) : \colorHcal_{\colort - 1}(\colork) \neq \colorHcal_{\colort}(\colork)\}$ always contain the most recent $2^{\colors - 1 - \colortau}$ instances of \hv{} $\colorh = \colorHcal_{\colort - 1}(\colork)$, arranged left to right.
If data items in these reassigned sites were lost instantaneously at time $\min(\colorT \in \colortsetofT)$, we would not meet our proof objectives.
At that point, we would have none of the most recent $2^{\colors - 1 - \colortau}$ \hv{} $\colorh$ data items retained.
However, data items are not lost instantaneously when a site is reassigned.
Instead, data items in reassigned sites $\colork$ linger until they are \textit{actually} overwritten by incoming data items $\colorT \in \colortsetofT$ with $\colorK(\colorT) = \colork$.

From Lemma \ref{thm:tilted-invading-overwrite-order}, we have that, over the course of an epoch, invaded data items are overwritten left to right --- except the leftmost reservation, which is overwritten last.
Ensuring retention of the most recent $2^{\colors - 1 - \colortau}$ data items for a \hv{} during invasion therefore requires two desiderata:
\begin{enumerate}
\item at least two instances of invaded \hv{} $\colorh$ occur before the first invading overwrite, and
\item the cadence of overwrites proceeds no faster than fresh instances of invaded \hv{} $\colorh$ accrue.
\end{enumerate}

\begin{mybox}
\textbf{Intuition.}
Imagine the chain of $2^{\colors - 1 - \colortau}$ recent instances of \hv{} $\colorh$ as the protagonist of the classic video game ``snake'' \citep{de2016complexity}.
In that game, the titular snake slithers by growing at its head and shrinking at its tail.
Analogously, our sequence of most recent \hv{} instances adds new items at the front and has tail items overwritten.
When an invasion occurs and half of ring buffer sites are reassigned, the snake's body of $2^{\colors - 1 - \colortau}$ sites is stretched across the reassigned half of the ring buffer.
In other words, our snake is laid out entirely within the \textit{danger zone}!

At the point when an invasion epoch $\colort$ begins, our snake containing $2^{\colors - 1 - \colortau}$ items will be chased into the preserved half of the ring buffer as overwrites enchroach at its rear.
The two desiderata described above ensure that the snake (1) pulls ahead and (2) stays ahead of invading overwrites to keep $2^{\colors - 1 - \colortau}$ body segments intact.
Mixing metaphors, the snake slithers head then tail to safety as the rickety bridge of reassigned but not-yet-overwritten sites it had been occupying collapses behind it.
After escaping the reassigned $2^{\colors - 1 - \colortau}$ ring buffer sites, the snake of recent \hv{} instances happily crawls in circles around its $2^{\colors - 1 - \colortau}$ reserved sites --- at least, until invaded again.
\end{mybox}

\begin{proofpart}[Two instances of invaded \hv{} before first invading overwrite]
Let $\colorT' = \min(\colorT \in \colortsetofTone)$.
The fractal properties of the hanoi sequence provide the following equivalence for hanoi values encountered during epoch $\colort + 1$:
\begin{align*}
\colorH(\colorT'),\;\; \colorH(\colorT'+1), \;\;\ldots, \colorH(2\colorT' - 1) = \colorH(0),\;\; \colorH(1), \;\;\ldots,\;\; \colorH(\colorT' - 1).
\end{align*}
Recall that $2\colorT' - 1 = \max(\colorT \in \colortsetofTone)$.

By Lemma \ref{thm:tilted-invader-minus-invaded}, for \hv{} $\colorHcal_{\colort}(\colork)$ invaded by \hv{} $\colorHcal_{\colort + 1}(\colork)$ (i.e., $\colorHcal_{\colort}(\colork) \neq \colorHcal_{\colort + 1}(\colork)$), we have $\colorHcal_{\colort + 1}(\colork) \geq \colorHcal_{\colort}(\colork) + 2$.
Hanoi value $\colorh$ occurs first at ingest time $\colorT = 2^{\colorh} - 1$ and then recurs at $\colorT = 2^{\colorh + 1} + 2^{\colorh} - 1 = 3 \times 2^{\colorh} - 1$.
Hence,
\begin{align*}
|\{\colorT \in [0 \twodots 3 \times 2^{\colorh} - 1] : \colorH(\colorT) = \colorh\}| = 2.
\end{align*}
\end{proofpart}
With $3 \times 2^{\colorh} - 1 < 2^{\colorh + 2} - 1 = \min\{\colorT : \colorH(\colorT) = \colorh + 2\}$, we have our result.

\begin{proofpart}[Overwrite cadence slower than invaded \hv{} cadence]
The cadence of a \hv{} $\colorh$, after first occuring at time $\colorT=2^{\colorh} - 1$ is to recur every $2^{\colorh + 1}$ ingests, where $\colorT \bmod 2^{\colorh + 1} = 2^{\colorh} - 1$.
Ingests with \hv{} $\colorH(\colorT) \geq \colorh$ occur twice as frequently, where $\colorT \bmod 2^{\colorh} = 2^{\colorh} - 1$.

Again, by Lemma \ref{thm:tilted-invader-minus-invaded}, for \hv{} $\colorHcal_{\colort}(\colork)$ invaded by \hv{} $\colorHcal_{\colort + 1}(\colork)$ (i.e., $\colorHcal_{\colort}(\colork) \neq \colorHcal_{\colort + 1}(\colork)$), we have $\colorHcal_{\colort + 1}(\colork) \geq \colorHcal_{\colort}(\colork) + 2$.
New incidences of invaded \hv{} $\colorh = \colorHcal_{\colort}(\colork)$ accrue faster than they are overwritten by ingests with $\colorH(\colorT) \geq \colorh + 2$ because
\begin{align*}
2^{\colorh + 1} \stackrel{\checkmark}{<} 2^{\colorh + 2}.
\end{align*}
\end{proofpart}

\end{proof}

\section{Tilted Algorithm Gap Size Ratio} \label{sec:gap-size-ratio-tilted}

\begin{lemma}[Tilted algorithm retained data items]
\label{thm:retained-equivalence-tilted}
If the most recent $n$ data items $\colorH(\colorTbar) = \colorh$ for each \hv{} $\colorh$ are guaranteed retained, then we are guaranteed to have all
\begin{align*}
\colorTbar
&\in
\{
  2^{\colorh'}\Big(\left\lfloor \frac{\colorT}{2^{\colorh'}} \right\rfloor - j'\Big) - 1
  :
  j' \in [0 \twodots 2n - 1]
  \text{ and }
  \colorh' \in \mathbb{N}
\}.
\end{align*}
Note that, although this formulation nominally includes $\colorTbar < 0$, an extension filtering $\colorTbar \in [0 \twodots \colorT)$ follows trivially.
\end{lemma}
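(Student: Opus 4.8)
The plan is to mirror the structure of the stretched counterpart, Lemma \ref{thm:retained-equivalence-stretched}: identify the genuinely retained set with a clean reparametrization and prove it equal to the claimed set by double inclusion. Write $q_{\colorh} \coloneq \lfloor \colorT/2^{\colorh} \rfloor$ and recall every instance of hanoi value $\colorh$ has the form $(2j+1)2^{\colorh}-1$. Using the maximal-instance formula of Equation \ref{eqn:goal-tilted-set}, the most recent instance of $\colorh$ before $\colorT$ carries the odd coefficient $2m_{\colorh}+1$, namely the largest odd integer with $(2m_{\colorh}+1)2^{\colorh}\le\colorT$, i.e.\ the largest odd integer not exceeding $q_{\colorh}$; stepping back by $i$ multiples of $2^{\colorh+1}$ turns this into coefficient $2m_{\colorh}+1-2i$. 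Thus the guaranteed-retained set is $\mathsf{have\_retained}=\{(2m_{\colorh}+1-2i)2^{\colorh}-1 : i\in[0\twodots n-1],\ \colorh\in\mathbb{N}\}$ and the target is $\mathsf{want\_retained}=\{(q_{\colorh'}-j')2^{\colorh'}-1 : j'\in[0\twodots 2n-1],\ \colorh'\in\mathbb{N}\}$. As in the stretched proof I would prove equality over all integer coefficients and invoke the stated triviality of filtering to $\colorTbar\in[0\twodots\colorT)$, since nonpositive coefficients give $\colorTbar<0$ on both sides and drop out.

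For $\mathsf{have\_retained}\subseteq\mathsf{want\_retained}$ I would take $\colorh'=\colorh$ and set $j'=q_{\colorh}-(2m_{\colorh}+1-2i)$. Because $2m_{\colorh}+1$ is the largest odd integer at most $q_{\colorh}$, the quantity $q_{\colorh}-(2m_{\colorh}+1)$ equals $0$ or $1$ according to the parity of $q_{\colorh}$, so $j'\in\{2i,2i+1\}$. Letting $i$ range over $[0\twodots n-1]$ sweeps out exactly $\{0,1,\ldots,2n-1\}=[0\twodots 2n-1]$, giving the inclusion immediately.

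For the reverse inclusion I would fix $\colorh'$ and $j'\in[0\twodots 2n-1]$, write $c=q_{\colorh'}-j'$, and split on the parity of $c$. When $c$ is odd, $c\,2^{\colorh'}-1$ is itself an instance of hanoi value $\colorh'$; since $c$ is odd and $c\le q_{\colorh'}$ we get $c\le 2m_{\colorh'}+1$, while $c\ge q_{\colorh'}-(2n-1)\ge 2m_{\colorh'}-2n+2$ together with $c$ odd forces $c\ge 2m_{\colorh'}-2n+3$, placing $c$ among the $n$ largest retained odd coefficients at level $\colorh'$. The delicate case is $c$ even: then $v\coloneq\nu_2(c)\ge 1$, and $c\,2^{\colorh'}-1=(c/2^{v})2^{\colorh'+v}-1$ is actually an instance of the \emph{higher} hanoi value $\colorh=\colorh'+v$, so I must show its odd coefficient $c/2^{v}$ is among the $n$ most recent at level $\colorh$. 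Here the nested-floor identity $q_{\colorh}=\lfloor q_{\colorh'}/2^{v}\rfloor$ is the workhorse: $c\le q_{\colorh'}$ gives $c/2^{v}\le q_{\colorh}$ and hence $c/2^{v}\le 2m_{\colorh}+1$, while $2m_{\colorh}+1\le q_{\colorh'}/2^{v}$ and $c/2^{v}\ge(q_{\colorh'}-(2n-1))/2^{v}$ bound the number of level-$\colorh$ instances strictly between $c\,2^{\colorh'}-1$ and $\colorT$, the integer $(2m_{\colorh}+1-c/2^{v})/2$, by $(2n-1)/2^{v+1}\le(2n-1)/4<n$, hence by $n-1$. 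So the point lies among the $n$ most recent instances of hanoi value $\colorh$, closing the inclusion.

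The main obstacle is exactly this even-coefficient case: a coarse-resolution backward offset that is even secretly names a sparser, higher-hanoi-value time point, so the crude count of at most $2n-1$ intervening coarse points must be sharpened by the factor-$2^{v+1}$ density reduction to fall below the threshold $n-1$. The remaining pieces — the odd case, the $\{0,1\}$-gap computation of the forward inclusion, and the filtering remark — are routine, so I would foreground the even case and the nested-floor identity $q_{\colorh}=\lfloor q_{\colorh'}/2^{v}\rfloor$ that drives its count.
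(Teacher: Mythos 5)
Your proof is correct and follows essentially the same route as the paper's: both argue by double inclusion, with the forward direction a parity bookkeeping step and the reverse direction hinging on the observation that a backward offset $j'$ at resolution $2^{\colorh'}$ whose coefficient $c=\lfloor\colorT/2^{\colorh'}\rfloor-j'$ is even actually names an instance of the higher hanoi value $\colorh'+v$, whose count of more-recent same-\hv{} instances is driven below $n$ by the $2^{v+1}$ density factor --- the paper packages exactly this as showing $j'/2^{\colorH(\colorTbar)-\colorh'+1}-\epsilon\in[0\twodots n-1]$ via floor-correction terms $\epsilon$ rather than your odd-coefficient reparametrization and nested-floor identity. One cosmetic slip: in the forward inclusion the parity of $q_{\colorh}$ is fixed for fixed $\colorh$, so $i\in[0\twodots n-1]$ yields $j'$ of a single parity rather than ``sweeping out exactly'' $[0\twodots 2n-1]$, but this is harmless since that direction only requires $j'\in[0\twodots 2n-1]$.
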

\begin{proof}

Recall that the $j$th instance of hanoi value $\colorh$ appears at ingest time
\begin{align*}
\colorTbar
&= j2^{\colorh + 1} + 2^{\colorh} - 1
\end{align*}
with $j$ indexed from zero.

The set of retained data items can be denoted
\begin{align*}
\mathsf{have\_retained} \coloneq
\{
  2^{\colorh + 1}\Big( \left\lfloor\frac{\colorT - 2^{\colorh}}{2^{\colorh + 1}} \right\rfloor - j\Big) + 2^{\colorh} - 1
  :
  j \in [0 \twodots n-1]
  \text{ and }
  \colorh \in \mathbb{N}
\}.
\end{align*}
We will show $\mathsf{have\_retained}$ equivalent to,
\begin{align*}
\mathsf{want\_retained} \coloneq
\{
  2^{\colorh'}\Big(\left\lfloor \frac{\colorT}{2^{\colorh'}} \right\rfloor - j'\Big) - 1
  :
  j' \in [0 \twodots 2n-1]
  \text{ and }
  \colorh' \in \mathbb{N}
\}.
\end{align*}

Sublemma \ref{thm:tilted-rsubset} shows $\mathsf{have\_retained} \subseteq \mathsf{want\_retained}$.
From Sublemma \ref{thm:tilted-subsetr}, $\mathsf{want\_retained} \subseteq \mathsf{have\_retained}$.
Hence, $\mathsf{want\_retained} = \mathsf{have\_retained}$.

\end{proof}

\begin{sublemma}[$\mathsf{have\_retained} \subseteq \mathsf{want\_retained}$]
\label{thm:tilted-rsubset}
Set $\mathsf{have\_retained}$ subsets set $\mathsf{want\_retained}$,
\begin{align*}
&\{
  2^{\colorh + 1}\Big( \left\lfloor\frac{\colorT - 2^{\colorh}}{2^{\colorh + 1}} \right\rfloor - j\Big) + 2^{\colorh} - 1
  :
  j \in [0 \twodots n-1]
  \text{ and }
  \colorh \in \mathbb{N}
\}\\
&\subseteq
\{
  2^{\colorh'}\Big(\left\lfloor \frac{\colorT}{2^{\colorh'}} \right\rfloor - j\Big) - 1
  :
  j' \in [0 \twodots 2n-1]
  \text{ and }
  \colorh' \in \mathbb{N}
\}.
\end{align*}
\end{sublemma}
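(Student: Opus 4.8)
The plan is to mirror the structure used for the stretched algorithm in the proof of Lemma~\ref{thm:retained-equivalence-stretched}: exhibit, for each generating pair $(\colorh, j)$ of a $\mathsf{have\_retained}$ element, an explicit pair $(\colorh', j')$ that generates the same value as an element of $\mathsf{want\_retained}$. The natural ansatz is to keep the hanoi value fixed, $\colorh' = \colorh$, and solve for the required shift index $j'$. First I would fix arbitrary $j \in [0 \twodots n-1]$ and $\colorh \in \mathbb{N}$, abbreviate $a = \lfloor (\colorT - 2^{\colorh})/2^{\colorh + 1} \rfloor$ and $b = \lfloor \colorT/2^{\colorh} \rfloor$, and rewrite the generic $\mathsf{have\_retained}$ element by pulling out a factor of $2^{\colorh}$.

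Concretely, the left-hand element factors as $2^{\colorh + 1}(a - j) + 2^{\colorh} - 1 = 2^{\colorh}(2a - 2j + 1) - 1$, while a $\mathsf{want\_retained}$ element at $\colorh' = \colorh$ is $2^{\colorh}(b - j') - 1$. Equating the two forms reduces the whole subset claim to the single index identity $j' = b - 2a + 2j - 1$, so it remains only to verify that this $j'$ lands in $[0 \twodots 2n - 1]$. The crux is therefore evaluating the offset $b - 2a$, which I expect to be either $1$ or $2$ depending on a parity.

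To pin this down I would write $\colorT = q\,2^{\colorh} + r$ with $0 \le r < 2^{\colorh}$, so $b = q$ and $(\colorT - 2^{\colorh})/2^{\colorh + 1} = (q-1)/2 + r/2^{\colorh + 1}$ with the fractional correction satisfying $0 \le r/2^{\colorh + 1} < 1/2$. A short case split then gives $a = (q-1)/2$ and $b - 2a = 1$ when $q$ is odd, and $a = q/2 - 1$ and $b - 2a = 2$ when $q$ is even. Substituting back yields $j' = 2j$ in the odd case and $j' = 2j + 1$ in the even case. Since $j \in [0 \twodots n-1]$ forces both $2j$ and $2j+1$ into $[0 \twodots 2n-1]$, the constructed pair $(\colorh, j')$ is admissible and produces exactly the given value, establishing $\mathsf{have\_retained} \subseteq \mathsf{want\_retained}$.

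The main obstacle is the floor-function bookkeeping in the parity split: the correctness of $a = \lfloor (q-1)/2 + r/2^{\colorh + 1}\rfloor$ hinges on the tight bound $r/2^{\colorh + 1} < 1/2$, which is what keeps the fractional term from ever bumping the floor up by an extra unit. Once that bound is in hand the computation is routine, and (as the lemma's closing remark notes) the nominal appearance of $\colorTbar < 0$ for small $q$ is harmless, since filtering to $\colorTbar \in [0 \twodots \colorT)$ only shrinks both sets compatibly and does not affect the containment.
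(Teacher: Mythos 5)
Your proposal is correct and follows essentially the same route as the paper's proof: fix $\colorh' = \colorh$, factor out $2^{\colorh}$, and show the required index is $j' = 2j + \epsilon$ with $\epsilon \in \{0,1\}$, hence in $[0 \twodots 2n-1]$. The only difference is cosmetic --- you make the floor bookkeeping explicit via the parity split on $q = \lfloor \colorT/2^{\colorh} \rfloor$, where the paper compresses it into a single continuity-correction term $\epsilon$.
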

\begin{proof}
Suppose $\colorTbar \in \mathsf{have\_retained}$.
Then $\exists j \in [0 \twodots n-1]$ and $\colorh \in \mathbb{N}$ such that
\begin{align*}
\colorTbar
&= 2^{\colorh \eqnmarkbox[yellow]{}{+ 1}}\Big(\left\lfloor \frac{\colorT - 2^{\colorh}}{2^{\colorh + 1}} \right\rfloor - j \Big) \eqnmarkbox[orange]{}{+ 2^{\colorh}} - 1\\
&= 2^{\colorh} \Big(\eqnmarkbox[yellow]{}{2}\left\lfloor \frac{\colorT - 2^{\colorh}}{2^{\colorh + 1}} \right\rfloor  - \eqnmarkbox[yellow]{}{2}j \eqnmarkbox[orange]{}{+ 1} \Big) - 1.
\end{align*}
Denoting $\epsilon \in \{0, 1\}$ as a continuity correction for the integer floor,
\begin{align*}
\colorTbar
&= 2^{\colorh} \Big( \left\lfloor \eqnmarkbox[YellowGreen]{}{2} \frac{\colorT \eqnmarkbox[Orchid]{}{- 2^{\colorh}}}{2^{\colorh+1}} \right\rfloor \eqnmarkbox[YellowGreen]{}{- \epsilon}  - 2j + 1 \Big) - 1\\
&= 2^{\colorh} \Big( \left\lfloor \frac{\colorT}{2^{\colorh}} \right\rfloor \eqnmarkbox[Orchid]{}{- 1} - \epsilon - 2j + 1 \Big) - 1\\
&= 2^{\colorh} \Big(\left\lfloor \frac{\colorT}{2^{\colorh}}\right\rfloor - (2j + \epsilon) \Big) - 1.
\end{align*}
Note that $(2j + \epsilon) \in [0 \twodots 2n-1]$ for $j \in [0 \twodots n-1]$, giving $\mathsf{have\_retained} \stackrel{\checkmark}{\subseteq} \mathsf{want\_retained}$.
\end{proof}

\begin{sublemma}[$\mathsf{want\_retained} \subseteq \mathsf{have\_retained}$]
\label{thm:tilted-subsetr}
Set $\mathsf{want\_retained}$ subsets $\mathsf{have\_retained}$,
\begin{align*}
&\{
  2^{\colorh'}\Big(\left\lfloor \frac{\colorT}{2^{\colorh'}} \right\rfloor - j'\Big) - 1
  :
  j' \in [0 \twodots 2n-1]
  \text{ and }
  \colorh' \in \mathbb{N}
\}\\
&\subseteq
\{
  2^{\colorh + 1}\Big( \left\lfloor\frac{\colorT - 2^{\colorh}}{2^{\colorh + 1}} \right\rfloor - j\Big) + 2^{\colorh} - 1
  :
  j \in [0 \twodots n-1]
  \text{ and }
  \colorh \in \mathbb{N}
\}.
\end{align*}

\end{sublemma}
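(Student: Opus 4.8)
The plan is to mirror the forward inclusion of Sublemma \ref{thm:tilted-rsubset}, running the algebra in reverse and splitting on a parity distinction rather than applying a single continuity correction. First I would take an arbitrary $\colorTbar \in \mathsf{want\_retained}$, so there exist $j' \in [0 \twodots 2n-1]$ and $\colorh' \in \mathbb{N}$ with $\colorTbar = 2^{\colorh'}\big(\lfloor \colorT/2^{\colorh'}\rfloor - j'\big) - 1$. Writing $m \coloneq \lfloor \colorT/2^{\colorh'}\rfloor - j'$ gives $\colorTbar + 1 = 2^{\colorh'} m$, so the true Hanoi value of $\colorTbar$ is $\colorh' + \colorH(m-1)$ (the trailing zeros of $m$ raise it above $\colorh'$). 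The two cases are governed by the parity of $m$; note this is the exact analogue of the split on $j'$ in the stretched proof, where the multiplier was simply $j'$ itself.

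In the first case, $m$ is odd, so $\colorH(\colorTbar) = \colorh'$ exactly and $\colorTbar$ is genuinely an instance of Hanoi value $\colorh'$. Writing $m = 2k+1$ yields $\colorTbar = 2^{\colorh'+1}k + 2^{\colorh'} - 1$, the $k$th instance of that Hanoi value. I would then set $j = \lfloor (\colorT - 2^{\colorh'})/2^{\colorh'+1}\rfloor - k$ and verify, using the same floor identities employed in Sublemma \ref{thm:tilted-rsubset} (namely $2\lfloor (\colorT-2^{\colorh'})/2^{\colorh'+1}\rfloor = \lfloor \colorT/2^{\colorh'}\rfloor - 1 - \epsilon$ for a continuity correction $\epsilon \in \{0,1\}$), that this $j$ reproduces the $\mathsf{have\_retained}$ expression with $\colorh = \colorh'$.

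In the second case, $m$ is even, so $\colorTbar$ carries a strictly larger true Hanoi value. Here the idea is to factor out the full power of two: write $m = 2^{\colorH(m-1)} q$ with $q$ odd, set $\colorh = \colorh' + \colorH(m-1)$, and observe that $\colorTbar = 2^{\colorh} q - 1$ reduces the situation to the odd-multiplier form at level $\colorh$. This is the analogue of the even-$j'$ branch in the stretched proof (Lemma \ref{thm:retained-equivalence-stretched}), where the Hanoi value of the halved index is used to relocate the item to its correct layer, so the membership identity follows by reusing the first case at the raised level.

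The main obstacle, and the only part beyond routine floor arithmetic, will be the range bookkeeping: in both cases I must confirm the recovered index $j$ lands in $[0 \twodots n-1]$. The lower bound $j \geq 0$ follows because $\colorTbar < \colorT$ forces $k$ to be at most the most-recent-instance index. The upper bound $j \leq n-1$ is the delicate step: I would chase the constraint $j' \leq 2n-1$ through the factorization $m = 2^{\colorh-\colorh'} q$, using that each extracted factor of two halves the effective instance count, so the doubled budget $2n$ available in $\mathsf{want\_retained}$ collapses to the per-Hanoi-value budget $n$ in $\mathsf{have\_retained}$. Establishing this collapse uniformly over all valuations $\colorH(m-1)$ is where care is needed.
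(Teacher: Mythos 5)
Your plan is sound and would yield a correct proof; every step you outline, including the range bookkeeping you flag as delicate, does go through. The engine is the same as the paper's: the true hanoi value of $\colorTbar$ exceeds $\colorh'$ by the 2-adic valuation of the multiplier $m = \lfloor \colorT/2^{\colorh'}\rfloor - j'$, and each extracted factor of two halves the effective instance index, collapsing the budget $2n$ to $n$. The difference is organizational. You split on the parity of $m$ and reduce the even case to the odd case at the raised level $\colorh = \colorh' + \colorH(m-1)$, mirroring the parity split on $j'$ used in the stretched counterpart (Lemma \ref{thm:retained-equivalence-stretched}); the paper instead jumps directly to $\colorh = \colorH(\colorTbar)$ in a single algebraic derivation, writes the recovered index as $j'/2^{\colorH(\colorTbar)-\colorh'+1} - \epsilon$ with a continuity correction $\epsilon \in [0,1/2)$, and proves in one stroke that this quantity is an integer in $[0 \twodots n-1]$ by observing that $(\colorTbar+1)/2^{\colorH(\colorTbar)}$ is odd. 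Your decomposition buys concreteness --- the odd branch gives $j = \lfloor j'/2\rfloor$ explicitly and each branch is elementary floor arithmetic --- at the cost of verifying the index range separately after the reduction; the paper's unified route handles all valuations at once but leans on a subtler integrality argument. One small correction: the lower bound $j \geq 0$ should be attributed to $j' \geq 0$ (equivalently $m \leq \lfloor \colorT/2^{\colorh'}\rfloor$) rather than to a filter $\colorTbar < \colorT$, since the sets in the statement are explicitly not restricted to $[0 \twodots \colorT)$ --- though the two conditions reduce to the same inequality, so this does not affect the argument.
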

\begin{proof}
Suppose $\colorTbar \in \mathsf{want\_retained}$.
Then $\exists j' \in [0 \twodots 2n - 1]$ and $\colorh' \in \mathbb{N}$ such that
\begin{align*}
\colorTbar
&= 2^{\colorh'}\Big(\left\lfloor \frac{\colorT}{2^{\colorh'}} \right\rfloor - j'\Big) - 1.
\end{align*}

Begin by calculating how many factors of two divide $\colorTbar + 1$, $\colorH(\colorTbar)$.
Note that we have $\colorH(\colorTbar) \geq \colorh'$ because $2^{\colorh'}$ divides $\colorTbar + 1 = 2^{\colorh'}(\left\lfloor \frac{\colorT}{2^{\colorh'}} \right\rfloor - j')$.
With this fact in hand, we may rearrange our formula for $\colorTbar$,
\begin{align}
\colorTbar
&= 2^{\colorh'}\Big(\left\lfloor \frac{\colorT}{2^{\colorh'}} \right\rfloor - j'\Big) - 1 \nonumber\\
&= 2^{\eqnmarkbox[yellow]{}{\colorH(\colorTbar)}} \Big(
\frac{\left\lfloor \frac{\colorT}{2^{\colorh'}} \right\rfloor - j'}{2^{\eqnmarkbox[yellow]{}{\colorH(\colorTbar) - \colorh'}}}
\Big)
- 1 \nonumber\\
&= 2^{\colorH(\colorTbar)} \Big(
\frac{\left\lfloor \frac{\colorT}{2^{\colorh'}} \right\rfloor \eqnmarkbox[orange]{}{- 2^{\colorH(\colorTbar) - \colorh'}}}{2^{\colorH(\colorTbar) - \colorh'}}
\eqnmarkbox[orange]{}{+ 1}
- \frac{j'}{2^{\colorH(\colorTbar) - \colorh'}}
\Big)
- 1 \nonumber\\
&= 2^{\colorH(\colorTbar)} \Big(
\frac{\left\lfloor \frac{\colorT}{2^{\colorh'}} \eqnmarkbox[orange]{}{- 2^{\colorH(\colorTbar) - \colorh'}} \right\rfloor}{2^{\colorH(\colorTbar) - \colorh'}}
- \frac{j'}{2^{\colorH(\colorTbar) - \colorh'}}
\Big)
\eqnmarkbox[orange]{}{+ 2^{\colorH(\colorTbar)}}
- 1 \nonumber\\
&= 2^{\colorH(\colorTbar) \eqnmarkbox[Orchid]{}{+ 1}} \Big(
\frac{\left\lfloor \frac{\colorT}{2^{\colorh'}} \eqnmarkbox[YellowGreen]{}{- 2^{\colorH(\colorTbar) - \colorh'}} \right\rfloor}{2^{\colorH(\colorTbar) - \colorh' \eqnmarkbox[Orchid]{}{+ 1}}}
-
\frac{j'}{2^{\colorH(\colorTbar) - \colorh' \eqnmarkbox[Orchid]{}{+ 1}}}
\Big)
+ 2^{\colorH(\colorTbar)}
- 1 \nonumber\\
&= 2^{\colorH(\colorTbar) + 1} \Big(
\frac{\left\lfloor \frac{\colorT \eqnmarkbox[YellowGreen]{}{- 2^{\colorH(\colorTbar)}}}{2^{\eqnmarkbox[SkyBlue]{}{\colorh'}}} \right\rfloor}{2^{\colorH(\colorTbar) \eqnmarkbox[SkyBlue]{}{-\colorh'} + 1}}
- \frac{j'}{2^{\colorH(\colorTbar) - \colorh' + 1}}
\Big)
+ 2^{\colorH(\colorTbar)}
- 1.
\nonumber
\end{align}

Letting $\epsilon \in [0, 1/2)$ denote a continuity correction factor for the integer floor,
\begin{align}
\colorTbar
&= 2^{\colorH(\colorTbar) + 1} \Big(
\left\lfloor
\frac{\colorT - 2^{\colorH(\colorTbar)}}{2^{\colorH(\colorTbar) + 1}}
\right\rfloor
\eqnmarkbox[SkyBlue]{}{+ \epsilon}
-
\frac{j'}{2^{\colorH(\colorTbar) - \colorh' + 1}}
\Big)
+ 2^{\colorH(\colorTbar)}
- 1 \nonumber \\
&= 2^{\colorH(\colorTbar) + 1} \Big(
\left\lfloor
\frac{\colorT - 2^{\colorH(\colorTbar)}}{2^{\colorH(\colorTbar) + 1}}
\right\rfloor
-
\eqnmarkbox[gray]{proofgoal}{
  \Big(
  \frac{j'}{2^{\colorH(\colorTbar) - \colorh' + 1}}
  - \epsilon
  \Big)
}
\Big)
+ 2^{\colorH(\colorTbar)}
- 1.
\label{eqn:needtoshowj}\\
\nonumber
\annotate[yshift=0em]{below,right}{proofgoal}{need to show $\in [0 \twodots n-1]$}
\end{align}

By definition, $2^{\colorH(\colorTbar)}$ divides $\colorTbar + 1$ and the quotient $(\colorTbar + 1) / 2^{\colorH(\colorTbar)}$ is an odd, positive integer.
So, $(\eqnmarkbox[yellow]{}{\colorTbar} + 1) / 2^{\colorH(\colorTbar)} - 1$ is an even, non-negative integer.
Applying this observation to our expression for $\colorTbar$ from Equation \ref{eqn:needtoshowj},
\begin{align*}
  \frac{
  \eqnmarkbox[yellow]{}{
  2^{\colorboxed{orange}{\colorH(\colorTbar)} + 1} \Big(
    \left\lfloor
    \frac{\colorT - 2^{\colorH(\colorTbar)}}{2^{\colorH(\colorTbar) + 1}}
    \right\rfloor
    -
    \Big(
    \frac{j'}{2^{\colorH(\colorTbar) - \colorh' + 1}}
    - \epsilon
    \Big)
    \Big)
    \colorboxed{YellowGreen}{+ 2^{\colorH(\colorTbar)}}
    \colorboxed{Orchid}{- 1}
  }
  \colorboxed{Orchid}{+ 1}
  }{2^{\colorboxed{orange}{\colorH(\colorTbar)}}}
  \colorboxed{YellowGreen}{- 1}
  &= 2 \Big(
    \left\lfloor
    \frac{\colorT - 2^{\colorH(\colorTbar)}}{2^{\colorH(\colorTbar) + 1}}
    \right\rfloor
    -
    \Big(
    \frac{j'}{2^{\colorH(\colorTbar) - \colorh' + 1}}
    - \epsilon
    \Big)
    \Big)\\
    &\in [0, 2, 4,\;\; \ldots].
\end{align*}

Dividing by 2,
\begin{align*}
  \left\lfloor
  \frac{\colorT - 2^{\colorH(\colorTbar)}}{2^{\colorH(\colorTbar) + 1}}
  \right\rfloor
  -
  \Big(
  \frac{j'}{2^{\colorH(\colorTbar) - \colorh' + 1}}
  - \epsilon
  \Big)
  &\in \mathbb{N}.
\end{align*}
Because
\begin{align*}
\left\lfloor
\frac{
  \colorT - 2^{\colorH(\colorTbar)}
}{
  2^{\colorH(\colorTbar) + 1}
}
\right\rfloor
- \Big(
\frac{j'}{2^{\colorH(\colorTbar) - \colorh' + 1}}
- \epsilon
\Big)
\in \mathbb{Z},
\end{align*}
we necessarily have
\begin{align*}
\frac{j'}{2^{\colorH(\colorTbar) - \colorh' + 1}}
- \epsilon
\in \mathbb{Z}.
\end{align*}
Further, because $\epsilon \in [0, 1/2)$, $j' \in [0 \twodots 2n-1]$, and $\colorH(\colorTbar) - \colorh'  + 1 \geq 1$,
\begin{align*}
\frac{j'}{2^{\colorH(\colorTbar) - \colorh' + 1}}
- \epsilon
\stackrel{\checkmark}{\in}
[0 \twodots n - 1].
\end{align*}
With $\colorh = \colorH(\colorTbar) \stackrel{\checkmark}{\in} \mathbb{N}$, we have the result: $\mathsf{want\_retained} \stackrel{\checkmark}{\subseteq} \mathsf{have\_retained}$.
\end{proof}

\begin{lemma}[Tilted gap size ratio, given last $n$ items per \hv{}]
\label{thm:gap-size-ratio-tilted}
If the most recent $n$ data items $\colorH(\colorTbar) = \colorh$ for each \hv{} $\colorh$ are retained at time $\colorT$, then gap size ratio is bounded,
\begin{align*}
\mathsf{cost\_tilted}(\colorT)
&\leq
\frac{1}{n - 1/2}.
\end{align*}
\end{lemma}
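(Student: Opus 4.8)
The plan is to mirror the structure of the stretched-criterion argument (Lemmas \ref{thm:retained-equivalence-stretched} and \ref{thm:gap-size-ratio-stretched}), but working in terms of the \emph{recency} $d \coloneq \colorT - 1 - \colorTbar$ of the query point rather than its ingest time $\colorTbar$. First I would invoke Lemma \ref{thm:retained-equivalence-tilted}, which guarantees that the retained set contains, for every scale $\colorh' \in \mathbb{N}$, the points $2^{\colorh'}(\lfloor \colorT/2^{\colorh'}\rfloor - j') - 1$ for $j' \in [0 \twodots 2n-1]$. Re-expressing these in terms of recency, the class-$\colorh'$ retained points occupy recencies $(\colorT \bmod 2^{\colorh'}) + 2^{\colorh'} j'$; that is, for each power of two $m = 2^{\colorh'}$ we are guaranteed a block of $2n$ retained points spaced exactly $m$ apart, running from recency $\colorT \bmod m \in [0 \twodots m)$ out to recency $(\colorT \bmod m) + m(2n-1)$.

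Second, fix a discarded query point at recency $d \geq 1$ and let $m = \lceil d/(2n-1)\rceil_{\mathrm{bin}}$ be the smallest power of two with $m(2n-1) \geq d$. I would then show that the class-$m$ block brackets the query: since $\colorT \bmod m \geq 0$ and $m(2n-1) \geq d$, the block reaches out past recency $d$, so the nearest class-$m$ retained recency at or above $d$ is at most $d + m - 1$; and provided $d \geq \colorT \bmod m$, there is likewise a class-$m$ retained recency at or below $d$, no lower than $d - m + 1$. The two retained points flanking the discarded run are therefore at most $m$ apart in recency, giving $\colorG_{\colorT}(\colorTbar) \leq m - 1$.

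Third, the ratio bound follows by routine estimation. For $d \leq 2n - 1$ the query recency lies inside the dense class-$0$ block (recencies $0$ through $2n-1$), so the point is itself retained and the ratio is $0$. For $d > 2n - 1$ we have $d/(2n-1) > 1/2$, hence $m < 2d/(2n-1)$, and therefore
\[
\frac{\colorG_{\colorT}(\colorTbar)}{d} \leq \frac{m-1}{d} < \frac{2d/(2n-1) - 1}{d} = \frac{2}{2n-1} - \frac{1}{d} \leq \frac{1}{n - 1/2}.
\]
Taking the maximum over $\colorTbar \in [0 \twodots \colorT - 1)$ then yields the stated bound on $\mathsf{cost\_tilted}(\colorT)$.

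The main obstacle is the recent-side bracketing condition $d \geq \colorT \bmod m$, which need not hold for the minimal $m$. The crux of the plan is to verify that minimality of $m$ forces $d > m(2n-1)/2 \geq m$ whenever $n \geq 2$ and $m \geq 2$, so that $d \geq \colorT \bmod m$ is automatic in that regime; the residual small-$n$ case (where $d$ can fall just below a large residue $\colorT \bmod m$) must be dispatched separately, which is consistent with the coarser $\mathsf{cost\_tilted}(\colorT) \leq 2$ guarantee noted in Section \ref{sec:tilted}. I expect the careful statement of this bracketing — rather than the closing algebra — to be where the real work lies.
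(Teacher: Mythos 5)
Your proposal follows essentially the same route as the paper's proof: both invoke Lemma \ref{thm:retained-equivalence-tilted}, take $m = \left\lceil (\colorT - 1 - \colorTbar)/(2n-1) \right\rceil_{\mathrm{bin}}$ as the coarsest retained class whose block reaches past the query on the old side, conclude $\colorG_{\colorT}(\colorTbar) \leq m - 1$, and close with the identical estimate $2/(2n-1) - 1/(\colorT - 1 - \colorTbar) \leq 1/(n - 1/2)$. The recent-side bracketing condition you single out as the crux is a genuine subtlety, but the paper's own write-up passes over it silently, so your attention to it is a refinement of the same argument rather than a different approach.
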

\begin{proof}

From Lemma \ref{thm:retained-equivalence-tilted}, we have retained data items
\begin{align*}
\mathsf{want\_retained} =
\{
2^{\colorh'}\Big(\left\lfloor \colorT / 2^{\colorh'} \right\rfloor - j\Big) - 1
  :
  j' \in [0 \twodots 2n-1]
  \text{ and }
  \colorh' \in \mathbb{N}
\}.
\end{align*}

Begin by finding the smallest $m \in 2^{\mathbb{N}}$ such that
\begin{align*}
m\Big( \left\lfloor \colorT /m \right\rfloor - (2n - 1)\Big) - 1
&\leq
\colorTbar.
\end{align*}
Solving for $m$,
\begin{align*}
m \left\lfloor \colorT/m \right\rfloor - m(2n - 1)
&\leq \colorTbar + 1\\
m
&\geq \frac{
m \left\lfloor \colorT/m \right\rfloor - \colorTbar - 1
}{2n - 1}\\
m
&\geq \frac{
\colorT - \colorTbar - 1
}{2n - 1}\\
m
&= \left\lceil \frac{\colorT - 1 - \colorTbar}{2n - 1} \right\rceil_{\mathrm{bin}}.
\end{align*}

So, $\colorG_{\colorT}(\colorTbar) \leq \left\lceil \frac{\colorT - 1 - \colorTbar}{2n - 1} \right\rceil_{\mathrm{bin}} - 1$.
Thus, over $\colorTbar < \colorT - 1$, $\mathsf{cost\_tilted}(\colorT)$ can be bounded
\begin{align*}
\frac{\colorG_{\colorT}(\colorTbar)}{\colorT - 1 - \colorTbar}
&\leq
\frac{
\left\lceil \frac{\colorT - 1 - \colorTbar}{2n - 1} \right\rceil_{\mathrm{bin}} - 1
}{
\colorT - 1 - \colorTbar
}\\
&\leq
\frac{
2\frac{\colorT - 1 - \colorTbar}{2n - 1}
 - 1
}{
\colorT - 1 - \colorTbar
}\\
&\leq
\frac{1
}{n - 1/2}
-
\frac{1
}{
\colorT - 1 - \colorTbar
}\\
&\stackrel{\checkmark}{\leq}
\frac{1}{n - 1/2}.
\end{align*}

\end{proof}

\section{Reference Implementations}

\subsection{Steady Algorithm Site Selection Reference Implementation}

\lstinputlisting[caption={\code{steady_site_selection.py} implements Algorithm \ref{alg:steady-site-selection}}, label={lst:steady_site_selection.py}, language=Python]{implemented_pseudocode/steady_site_selection.py}

\subsection{Steady Algorithm Lookup Reference Implementation}

\lstinputlisting[caption={\code{steady_time_lookup.py} implements Algorithm \ref{alg:steady-time-lookup}}, label={lst:steady_time_lookup.py}, language=Python]{implemented_pseudocode/steady_time_lookup.py}

\subsection{Stretched Algorithm Site Selection Reference Implementation}

\lstinputlisting[caption={\code{stretched_site_selection.py} implements Algorithm \ref{alg:stretched-site-selection}}, label={lst:stretched_site_selection.py}, language=Python]{implemented_pseudocode/stretched_site_selection.py}

\subsection{Stretched Algorithm Lookup Reference Implementation}

\lstinputlisting[caption={\code{stretched_time_lookup.py} implements Algorithm \ref{alg:stretched-time-lookup}}, label={lst:stretched_time_lookup.py}, language=Python]{implemented_pseudocode/stretched_time_lookup.py}

\subsection{Tilted Algorithm Site Selection Reference Implementation}

\lstinputlisting[caption={\code{tilted_site_selection.py} implements Algorithm \ref{alg:tilted-site-selection}}, label={lst:tilted_site_selection.py}, language=Python]{implemented_pseudocode/tilted_site_selection.py}

\subsection{Tilted Algorithm Lookup Reference Implementation}

\lstinputlisting[caption={\code{tilted_time_lookup.py} implements Algorithm \ref{alg:tilted-time-lookup}}, label={lst:tilted_time_lookup.py}, language=Python]{implemented_pseudocode/tilted_time_lookup.py}

\section{Reference Implementation Tests}

\subsection{Steady Algorithm Site Selection Tests}

\lstinputlisting[caption={\code{test_steady_site_selection.py} tests Listing \ref{lst:steady_site_selection.py}}, label={lst:test_steady_site_selection.py}, language=Python]{implemented_pseudocode/test_steady_site_selection.py}

\subsection{Steady Algorithm Lookup Tests}

\lstinputlisting[caption={\code{test_steady_time_lookup.py} tests Listing \ref{lst:steady_time_lookup.py}}, label={lst:test_steady_time_lookup.py}, language=Python]{implemented_pseudocode/test_steady_time_lookup.py}

\subsection{Stretched Algorithm Site Selection Tests}

\lstinputlisting[caption={\code{test_stretched_site_selection.py} tests Listing \ref{lst:stretched_site_selection.py}}, label={lst:test_stretched_site_selection.py}, language=Python]{implemented_pseudocode/test_stretched_site_selection.py}

\subsection{Stretched Algorithm Lookup Tests}

\lstinputlisting[caption={\code{test_stretched_time_lookup.py} tests Listing \ref{lst:stretched_time_lookup.py}}, label={lst:test_stretched_time_lookup.py}, language=Python]{implemented_pseudocode/test_stretched_time_lookup.py}

\subsection{Tilted Algorithm Site Selection Tests}

\lstinputlisting[caption={\code{test_tilted_site_selection.py} tests Listing \ref{lst:tilted_site_selection.py}}, label={lst:test_tilted_site_selection.py}, language=Python]{implemented_pseudocode/test_tilted_site_selection.py}

\subsection{Tilted Algorithm Lookup Tests}

\lstinputlisting[caption={\code{test_tilted_time_lookup.py} tests Listing \ref{lst:test_tilted_time_lookup.py}}, label={lst:test_tilted_time_lookup.py}, language=Python]{implemented_pseudocode/test_tilted_time_lookup.py}

\end{bibunit}


\begin{thebibliography}{}

\bibitem[Abdulla et~al., 2004]{abdulla2004simulation}
Abdulla, G., Critchlow, T., \& Arrighi, W. (2004).
\newblock Simulation data as data streams.
\newblock {\em ACM SIGMOD Record}, 33(1), 89--94.
\newblock \url{https://doi.org/10.1145/974121.974137}

\bibitem[Agarwal et~al., 2009]{agarwal2009faster}
Agarwal, V., Bader, D.~A., Dan, L., Liu, L.-K., Pasetto, D., Perrone, M., \&
  Petrini, F. (2009).
\newblock Faster fast: multicore acceleration of streaming financial data.
\newblock {\em Computer Science - Research and Development}, 23(3-4), 249--257.
\newblock \url{https://doi.org/10.1007/s00450-009-0093-5}

\bibitem[Aggarwal et~al., 2003]{aggarwal2003framework}
Aggarwal, C.~C., Yu, P.~S., Han, J., \& Wang, J. (2003).
\newblock A framework for clustering evolving data streams.
\newblock {\em Proceedings 2003 VLDB Conference}, 81--92.
\newblock \url{https://doi.org/10.1016/b978-012722442-8/50016-1}

\bibitem[Aupy et~al., 2013]{aupy2013combination}
Aupy, G., Benoit, A., Herault, T., Robert, Y., Vivien, F., \& Zaidouni, D.
  (2013).
\newblock On the combination of silent error detection and checkpointing.
\newblock {\em 2013 IEEE 19th Pacific Rim International Symposium on Dependable
  Computing}.
\newblock \url{https://doi.org/10.1109/prdc.2013.10}

\bibitem[Cai et~al., 2004]{cai2004maids}
Cai, Y.~D., Clutter, D., Pape, G., Han, J., Welge, M., \& Auvil, L. (2004).
\newblock Maids: mining alarming incidents from data streams.
\newblock {\em Proceedings of the 2004 ACM SIGMOD international conference on
  Management of data}, Sigmod/pods04, 919--920.
\newblock \url{https://doi.org/10.1145/1007568.1007695}

\bibitem[Cordeiro \& Gama, 2016]{cordeiro2016online}
Cordeiro, M. \& Gama, J. (2016).
\newblock Online social networks event detection: a survey.
\newblock {\em Solving Large Scale Learning Tasks. Challenges and Algorithms:
  Essays Dedicated to Katharina Morik on the Occasion of Her 60th Birthday},
  1--41.
\newblock \url{https://doi.org/10.1007/978-3-319-41706-6_1}

\bibitem[Cormode \& Jowhari, 2019]{cormode2019lp}
Cormode, G. \& Jowhari, H. (2019).
\newblock Lp samplers and their applications: A survey.
\newblock {\em ACM Computing Surveys}, 52(1), 1–31.
\newblock \url{https://doi.org/10.1145/3297715}

\bibitem[Elnahrawy, 2003]{elnahrawy2003research}
Elnahrawy, E. (2003).
\newblock Research directions in sensor data streams: solutions and challenges.
\newblock {\em Rutgers University, Tech. Rep. DCIS-TR-527}, 2, D3.

\bibitem[Fischer et~al., 2012]{fischer2012real}
Fischer, F., Mansmann, F., \& Keim, D.~A. (2012).
\newblock Real-time visual analytics for event data streams.
\newblock {\em Proceedings of the 27th Annual ACM Symposium on Applied
  Computing}, Sac 2012.
\newblock \url{https://doi.org/10.1145/2245276.2245432}

\bibitem[Foundation, 2024]{oeis}
Foundation, O. (2024).
\newblock {\em The {O}n-{L}ine {E}ncyclopedia of {I}nteger {S}equences}.
\newblock Published electronically at \url{http://oeis.org}.

\bibitem[Gaber et~al., 2005]{gaber2005mining}
Gaber, M.~M., Zaslavsky, A., \& Krishnaswamy, S. (2005).
\newblock Mining data streams: A review.
\newblock {\em SIGMOD Rec.}, 34(2), 18--26.
\newblock \url{https://doi.org/10.1145/1083784.1083789}

\bibitem[Gama \& Rodrigues, 2007]{gama2007data}
Gama, J. \& Rodrigues, P.~P. (2007).
\newblock Data stream processing.
\newblock {\em Learning from data streams: Processing techniques in sensor
  networks}, 25--39.
\newblock \url{https://doi.org/10.1007/3-540-73679-4_3}

\bibitem[Giannella et~al., 2003]{giannella2003mining}
Giannella, C., Han, J., Pei, J., Yan, X., \& Yu, P.~S. (2003).
\newblock Mining frequent patterns in data streams at multiple time
  granularities.
\newblock {\em Next generation data mining}, volume 212, 191--212. MIT Press.

\bibitem[Graham et~al., 2012]{graham2012data}
Graham, M.~J., Djorgovski, S.~G., Mahabal, A., Donalek, C., Drake, A., \&
  Longo, G. (2012).
\newblock Data challenges of time domain astronomy.
\newblock {\em Distributed and Parallel Databases}, 30(5-6), 371--384.
\newblock \url{https://doi.org/10.1007/s10619-012-7101-7}

\bibitem[Han et~al., 2005]{han2005stream}
Han, J., Chen, Y., Dong, G., Pei, J., Wah, B.~W., Wang, J., \& Cai, Y.~D.
  (2005).
\newblock Stream cube: An architecture for multi-dimensional analysis of data
  streams.
\newblock {\em Distributed and Parallel Databases}, 18(2), 173--197.
\newblock \url{https://doi.org/10.1007/s10619-005-3296-1}

\bibitem[Harris et~al., 2020]{harris2020array}
Harris, C.~R., Millman, K.~J., van~der Walt, S.~J., Gommers, R., Virtanen, P.,
  Cournapeau, D., Wieser, E., Taylor, J., Berg, S., Smith, N.~J., Kern, R.,
  Picus, M., Hoyer, S., van Kerkwijk, M.~H., Brett, M., Haldane, A., del
  R{\'{i}}o, J.~F., Wiebe, M., Peterson, P., G{\'{e}}rard-Marchant, P.,
  Sheppard, K., Reddy, T., Weckesser, W., Abbasi, H., Gohlke, C., \& Oliphant,
  T.~E. (2020).
\newblock Array programming with {NumPy}.
\newblock {\em Nature}, 585(7825), 357--362.
\newblock \url{https://doi.org/10.1038/s41586-020-2649-2}

\bibitem[He et~al., 2010]{he2010comet}
He, B., Yang, M., Guo, Z., Chen, R., Su, B., Lin, W., \& Zhou, L. (2010).
\newblock Comet: batched stream processing for data intensive distributed
  computing.
\newblock {\em Proceedings of the 1st ACM symposium on Cloud computing}, Socc
  '10.
\newblock \url{https://doi.org/10.1145/1807128.1807139}

\bibitem[Henzinger et~al., 1999]{henzinger1998computing}
Henzinger, M., Raghavan, P., \& Rajagopalan, S. (1999).
\newblock {\em Computing on data streams}, 107--118.
\newblock American Mathematical Society.
\newblock \url{https://doi.org/10.1090/dimacs/050/05}

\bibitem[Hill et~al., 2009]{hill2009real}
Hill, D.~J., Minsker, B.~S., \& Amir, E. (2009).
\newblock Real-time bayesian anomaly detection in streaming environmental data.
\newblock {\em Water Resources Research}, 45(4).
\newblock \url{https://doi.org/10.1029/2008wr006956}

\bibitem[Hunter, 2007]{hunter2007matplotlib}
Hunter, J.~D. (2007).
\newblock Matplotlib: A 2d graphics environment.
\newblock {\em Computing in Science \& Engineering}, 9(3), 90--95.
\newblock \url{https://doi.org/10.1109/mcse.2007.55}

\bibitem[Jain et~al., 2022]{jain2022survey}
Jain, S., Verma, R.~K., Pattanaik, K.~K., \& Shukla, A. (2022).
\newblock A survey on event-driven and query-driven hierarchical routing
  protocols for mobile sink-based wireless sensor networks.
\newblock {\em The Journal of Supercomputing}, 78(9), 11492--11538.
\newblock \url{https://doi.org/10.1007/s11227-022-04327-4}

\bibitem[Jiang \& Gruenwald, 2006]{jiang2006research}
Jiang, N. \& Gruenwald, L. (2006).
\newblock Research issues in data stream association rule mining.
\newblock {\em ACM SIGMOD Record}, 35(1), 14--19.
\newblock \url{https://doi.org/10.1145/1121995.1121998}

\bibitem[Johnson et~al., 2005]{johnson2005streams}
Johnson, T., Muthukrishnan, S., Spatscheck, O., \& Srivastava, D. (2005).
\newblock {\em Streams, Security and Scalability}, 1--15.
\newblock Springer Berlin Heidelberg.
\newblock \url{https://doi.org/10.1007/11535706_1}

\bibitem[Kent \& Souppaya, 2006]{kent2006guide}
Kent, K. \& Souppaya, M.~P. (2006).
\newblock {\em Guide to computer security log management}.
\newblock \url{https://doi.org/10.6028/nist.sp.800-92}

\bibitem[Lin et~al., 2004]{lin2004continuously}
Lin, X., Lu, H., Xu, J., \& Yu, J. (2004).
\newblock Continuously maintaining quantile summaries of the most recent n
  elements over a data stream.
\newblock {\em Proceedings. 20th International Conference on Data Engineering},
  Icde-04.
\newblock \url{https://doi.org/10.1109/icde.2004.1320011}

\bibitem[Lucas, 1889]{lucas1889jeux}
Lucas, {\`E}. (1889).
\newblock Jeux scientifiques pour servir {\`a} l’histoire, {\`a}
  l’enseignement et {\`a} la pratique du calcul et du dessin.
\newblock {\em Paris: EL, Imp. Girard et fils}.

\bibitem[Manku \& Motwani, 2002]{manku2002approximate}
Manku, G.~S. \& Motwani, R. (2002).
\newblock {\em Approximate Frequency Counts over Data Streams}, 346--357.
\newblock Elsevier.
\newblock \url{https://doi.org/10.1016/b978-155860869-6/50038-x}

\bibitem[Miebach, 2002]{miebach2002hubble}
Miebach, M.~P. (2002).
\newblock Hubble space telescope on-line telemetry archive for monitoring
  science instruments.
\newblock {\em Observatory Operations to Optimize Scientific Return III},
  volume 4844, 408--416.
\newblock \url{https://doi.org/10.1117/12.460637}

\bibitem[Moreno, 2023]{moreno2023teeplot}
Moreno, M.~A. (2023).
\newblock {\em mmore500/teeplot}.
\newblock \url{https://doi.org/10.5281/zenodo.10440670}

\bibitem[Moreno, 2024a]{moreno2024downstream}
Moreno, M.~A. (2024a).
\newblock {\em mmore500/downstream}.
\newblock \url{https://doi.org/10.5281/zenodo.10866541}

\bibitem[Moreno, 2024b]{moreno2024hsurf}
Moreno, M.~A. (2024b).
\newblock {\em mmore500/hstrat-surface-concept}.
\newblock \url{https://doi.org/10.5281/zenodo.10779240}

\bibitem[Moreno et~al., 2022a]{moreno2022hereditary}
Moreno, M.~A., Dolson, E., \& Ofria, C. (2022a).
\newblock Hereditary stratigraphy: Genome annotations to enable phylogenetic
  inference over distributed populations.
\newblock {\em The 2022 Conference on Artificial Life}, Alife 2022, 64.
\newblock \url{https://doi.org/10.1162/isal_a_00550}

\bibitem[Moreno et~al., 2022b]{moreno2022hstrat}
Moreno, M.~A., Dolson, E., \& Ofria, C. (2022b).
\newblock hstrat: a python package for phylogenetic inference on distributed
  digital evolution populations.
\newblock {\em Journal of Open Source Software}, 7(80), 4866.
\newblock \url{https://doi.org/10.21105/joss.04866}

\bibitem[Moreno et~al., 2024a]{moreno2024guide}
Moreno, M.~A., Ranjan, A., Dolson, E., \& Zaman, L. (2024a).
\newblock {\em A guide to tracking phylogenies in parallel and distributed
  agent-based evolution models}.
\newblock \url{https://doi.org/10.48550/arXiv.2405.10183}

\bibitem[Moreno et~al., 2024b]{moreno2024algorithms}
Moreno, M.~A., {Rodriguez Papa}, S., \& Dolson, E. (2024b).
\newblock {\em Algorithms for efficient, compact online data stream curation}.
\newblock \url{https://doi.org/10.48550/arXiv.2403.00266}

\bibitem[Moreno \& Yang, 2024]{moreno2024wse}
Moreno, M.~A. \& Yang, C. (2024).
\newblock {\em mmore500/wse-sketches}.
\newblock \url{https://doi.org/10.5281/zenodo.10779280}

\bibitem[Moreno et~al., 2024c]{moreno2024trackable}
Moreno, M.~A., Yang, C., Dolson, E., \& Zaman, L. (2024c).
\newblock Trackable agent-based evolution models at wafer scale.
\newblock {\em The 2024 Conference on Artificial Life}.
\newblock \url{https://doi.org/10.48550/arXiv.2404.10861}

\bibitem[Muthukrishnan, 2005]{muthukrishnan2005data}
Muthukrishnan, S. (2005).
\newblock Data streams: Algorithms and applications.
\newblock {\em Foundations and Trends® in Theoretical Computer Science}, 1(2),
  117--236.
\newblock \url{https://doi.org/10.1561/0400000002}

\bibitem[Palpanas et~al., 2004]{palpanas2004online}
Palpanas, T., Vlachos, M., Keogh, E., Gunopulos, D., \& Truppel, W. (2004).
\newblock Online amnesic approximation of streaming time series.
\newblock {\em Proceedings. 20th International Conference on Data Engineering},
  Icde-04.
\newblock \url{https://doi.org/10.1109/icde.2004.1320009}

\bibitem[pandas developers, 2020]{reback2020pandas}
pandas developers (2020).
\newblock pandas-dev/pandas: Pandas.
\newblock {\em Zenodo}.
\newblock \url{https://doi.org/10.5281/zenodo.3509134}

\bibitem[Phithakkitnukoon \& Ratti, 2010]{phithakkitnukoon2010recent}
Phithakkitnukoon, S. \& Ratti, C. (2010).
\newblock A recent-pattern biased dimension-reduction framework for time series
  data.
\newblock {\em Journal of Advances in Information Technology}, 1(4), 168--180.
\newblock \url{https://doi.org/10.4304/jait.1.4.168-180}

\bibitem[Rajeshwari \& Babu, 2016]{rajeshwari2016real}
Rajeshwari, U. \& Babu, B.~S. (2016).
\newblock Real-time credit card fraud detection using streaming analytics.
\newblock {\em 2016 2nd International Conference on Applied and Theoretical
  Computing and Communication Technology (iCATccT)}.
\newblock \url{https://doi.org/10.1109/icatcct.2016.7912039}

\bibitem[Schoellhammer et~al., 2024]{schoellhammer2024lightweight}
Schoellhammer, T., Greenstein, B., Osterweil, E., Wimbrow, M., \& Estrin, D.
  (2024).
\newblock Lightweight temporal compression of microclimate datasets [wireless
  sensor networks].
\newblock {\em 29th Annual IEEE International Conference on Local Computer
  Networks}, Lcn-04.
\newblock \url{https://doi.org/10.1109/lcn.2004.72}

\bibitem[Schützel et~al., 2014]{schutzel2014stream}
Schützel, J., Meyer, H., \& Uhrmacher, A.~M. (2014).
\newblock A stream-based architecture for the management and on-line analysis
  of unbounded amounts of simulation data.
\newblock {\em Proceedings of the 2nd ACM SIGSIM Conference on Principles of
  Advanced Discrete Simulation}, Sigsim-pads '14.
\newblock \url{https://doi.org/10.1145/2601381.2601399}

\bibitem[Sibai et~al., 2016]{sibai2016sampling}
Sibai, R.~E., Chabchoub, Y., Demerjian, J., Kazi-Aoul, Z., \& Barbar, K.
  (2016).
\newblock Sampling algorithms in data stream environments.
\newblock {\em 2016 International Conference on Digital Economy (ICDEc)}.
\newblock \url{https://doi.org/10.1109/icdec.2016.7563142}

\bibitem[Silva et~al., 2013]{silva2013data}
Silva, J.~A., Faria, E.~R., Barros, R.~C., Hruschka, E.~R., Carvalho, A. C. P.
  L. F.~d., \& Gama, J.~a. (2013).
\newblock Data stream clustering: A survey.
\newblock {\em ACM Comput. Surv.}, 46(1).
\newblock \url{https://doi.org/10.1145/2522968.2522981}

\bibitem[Virtanen et~al., 2020]{2020SciPy-NMeth}
Virtanen, P., Gommers, R., Oliphant, T.~E., Haberland, M., Reddy, T.,
  Cournapeau, D., Burovski, E., Peterson, P., Weckesser, W., Bright, J., {van
  der Walt}, S.~J., Brett, M., Wilson, J., Millman, K.~J., Mayorov, N., Nelson,
  A. R.~J., Jones, E., Kern, R., Larson, E., Carey, C.~J., Polat, {\.I}., Feng,
  Y., Moore, E.~W., {VanderPlas}, J., Laxalde, D., Perktold, J., Cimrman, R.,
  Henriksen, I., Quintero, E.~A., Harris, C.~R., Archibald, A.~M., Ribeiro,
  A.~H., Pedregosa, F., {van Mulbregt}, P., \& {SciPy 1.0 Contributors} (2020).
\newblock {{SciPy} 1.0: Fundamental Algorithms for Scientific Computing in
  Python}.
\newblock {\em Nature Methods}, 17, 261--272.
\newblock \url{https://doi.org/10.1038/s41592-019-0686-2}

\bibitem[Waskom, 2021]{waskom2021seaborn}
Waskom, M.~L. (2021).
\newblock seaborn: statistical data visualization.
\newblock {\em Journal of Open Source Software}, 6(60), 3021.
\newblock \url{https://doi.org/10.21105/joss.03021}

\bibitem[{W}es {M}c{K}inney, 2010]{mckinney-proc-scipy-2010}
{W}es {M}c{K}inney (2010).
\newblock {D}ata {S}tructures for {S}tatistical {C}omputing in {P}ython.
\newblock {\em {P}roceedings of the 9th {P}ython in {S}cience {C}onference},
  56--61.
\newblock \url{https://doi.org/10.25080/Majora-92bf1922-00a}

\bibitem[Zhao \& Zhang, 2006]{zhao2005generalized}
Zhao, Y. \& Zhang, S. (2006).
\newblock Generalized dimension-reduction framework for recent-biased time
  series analysis.
\newblock {\em IEEE Transactions on Knowledge and Data Engineering}, 18(2),
  231--244.
\newblock \url{https://doi.org/10.1109/tkde.2006.30}

\end{thebibliography}


\begin{thebibliography}{}

\bibitem[De~Biasi \& Ophelders, 2016]{de2016complexity}
De~Biasi, M. \& Ophelders, T. (2016).
\newblock The complexity of snake.
\newblock \url{https://doi.org/10.4230/lipics.fun.2016.11}

\end{thebibliography}
\end{document}